\def\Q{\hbox{\rm Q}}
\newcommand{\E}{\mathds{E}}
\newcommand{\si}{\sigma}
\renewcommand{\(}{\left(}
\renewcommand{\)}{\right)}
\renewcommand{\proof}{{\noindent \it Proof. }}
\newtheorem{theo}{Theorem}[section]
\newtheorem{pr}{Proposition}[section]
\newtheorem{lem}{Lemma}[section]
\newtheorem{co}{Corollary}[section]
\newtheorem{re}{Remark}[section]
\newtheorem{defi}{Definition}[section]
\newcommand{\be}{\begin{eqnarray}}
\newcommand{\ee}{\end{eqnarray}}
\newcommand{\by}{\begin{eqnarray*}}
\newcommand{\ey}{\end{eqnarray*}}
\newcommand{\bt}{\begin{theo}}
\newcommand{\et}{\end{theo}}
\newcommand{\bl}{\begin{lem}}
\newcommand{\el}{\end{lem}}
\newcommand{\bc}{\begin{co}}
\newcommand{\ec}{\end{co}}
\newcommand{\eex}{\end{exa}\vspace{-3mm}}
\newcommand{\br}{\begin{re}}
\newcommand{\er}{\end{re}\vspace{-3mm}}
\renewcommand{\geq}{\geqslant}
\renewcommand{\leq}{\leqslant}
\renewcommand{\ge}{\geqslant}
\renewcommand{\le}{\leqslant}
\begin{document}

\onehalfspacing
\countdef\parno=11
\parno=1
\def\exhibit#1{{\begin{flushleft}\large \textsc{  Exhibit}  \the\parno: \newline \rm\bf\small  #1  \end{flushleft}}\global\advance\parno by 1
\nobreak}

\title{Prices and Asymptotics  \\
for Discrete Variance Swaps}
\author{\textsc{Carole Bernard} \footnote{ C.  Bernard is with the department of Statistics and
Actuarial Science at the University of Waterloo,  Email: \texttt{c3bernar@uwaterloo.ca}.  C. Bernard acknowledges support from the Natural Sciences and Engineering Research Council of Canada. }\  and  \textsc{Zhenyu Cui} \footnote{ Corresponding author. Zhenyu
Cui is a Ph.D. candidate in Statistics at the
University of Waterloo,   Email: \texttt{cuizhyu@gmail.com}.  Z. Cui acknowledges support from the Bank of Montreal Capital Markets Advanced Research Scholarship.} \footnote{Both authors thank Jinyuan Zhang for her help as research assistant, as well as seminar participants Christa Cuchiero,  Olympia Hadjiliadis, Antoine Jacquier, Adam Kolkiewicz, Roger Lee, Don McLeish,  Johannes Ruf, and David Saunders for helpful suggestions. We are particularly grateful to an anonymous referee for his/her  constructive and helpful comments on an earlier draft of this paper.}}
\date{Draft: \today }
\maketitle
\begin{abstract}
We study the fair strike of a discrete variance swap for a general time-homogeneous stochastic volatility model. In the special cases of Heston, Hull-White and Sch\"obel-Zhu stochastic volatility  models we give simple explicit expressions (improving Broadie and Jain  \citeyear{BJ08} in the case of the Heston model). We give conditions on parameters under which the fair strike of a discrete variance swap is higher or lower than that of the continuous variance swap. The interest rate and the correlation between the underlying price and its volatility are key elements in this analysis. We derive asymptotics for the discrete variance swaps and  compare our results with those  of Broadie and Jain \citeyear{BJ08}, Jarrow et al. \citeyear{J12} and Keller-Ressel and Griessler \citeyear{KG12}.
\end{abstract}
\textbf{Key-words:} Discrete Variance swap, Heston model, Hull-White model, Sch\"{o}bel-Zhu model.
\newpage

\begin{center}
{\bf \Large Prices and Asymptotics  \\\vspace{3mm}
for Discrete Variance Swaps}
\end{center}

\vspace{1.5cm}

\section{Introduction}

 A variance swap is a  derivative contract that pays at a fixed maturity $T$ the difference between a given level (fixed leg) and a   realized level of variance over the swap's life (floating leg). Nowadays, variance swaps on stock indices are broadly used and  highly liquid. Less standardized variance swaps could be linked to other types of underlying stocks such as currencies or commodities. They can be useful for hedging volatility risk exposure or for taking positions on future realized volatility. For example, Carr and Lee \citeyear{CL07} price options on realized variance and realized volatility by using  variance swaps as pricing and hedging instruments. See  Carr and Lee \citeyear{CL} for a history of volatility derivatives. As noted by Jarrow et al. \citeyear{J12}, most academic studies\footnote{See, for example, Howison, Rafailidis and Rasmussen \citeyear{HRR04}, Windcliff, Forsyth and Vetzal \citeyear{WFV06}, Benth, Groth and Kufakunesu \citeyear{BGK07} and Broadie and Jain \citeyear{BJ08b}.} focus on  continuously sampled variance and volatility swaps. However, existing volatility derivatives tend to be based on the realized variance computed from the discretely sampled log stock price and continuously sampled derivatives prices may only be  used as approximations.  As pointed out in Sepp \citeyear{S}, some care is needed to replace the  discrete realized variance by the continuous quadratic variation.  By standard probability arguments,  the discretely sampled realized variance  converges to the quadratic variation of the log stock price in probability. However, this does not guarantee that it converges in expectation.  Jarrow et al. \citeyear{J12}  provide sufficient conditions such that the convergence in expectation  happens when the stock is modeled by a general semi-martingale, and concrete examples where this convergence fails.

 In this paper we study discretely sampled variance swaps in a general time-homogeneous model for  stochastic volatility. For discretely sampled variance swaps, it is difficult to use the elegant and model-free approach of Dupire \citeyear{D93} and Neuberger \citeyear{N94}, who independently proved that the fair strike for a continuously sampled variance swap on any underlying price process with continuous path is simply two units of the forward price of the log contract. Building on these results, Carr and Madan \citeyear{CM} published an explicit expression to obtain  this forward price  from option prices (by synthesizing a forward contract with vanilla options). The Dupire-Neuberger theory was recently extended by Carr, Lee and Wu \citeyear{CLW11} to the case when the underlying stock price is driven by a time-changed L\'evy process (thus allowing jumps in the path of the underlying stock price). In this paper, we adopt a parametric approach that allows us to derive explicit closed-form expressions and asymptotic behaviors with respect to key parameters such as the maturity of the contract, the risk-free rate, the sampling frequency, the volatility of the variance process, or the correlation between the underlying stock and its volatility.  This is in line with the work of Broadie and Jain \citeyear{BJ08} in which the Heston model and the Merton jump diffusion model are considered. See also Itkin and Carr \citeyear{IC} who study discretely sampled variance swaps in the 3/2 stochastic volatility model.

 Our main contributions are as follows. We give an expression of  the fair strike of the discretely sampled variance swap and derive its sensitivity to the interest rate in a general time-homogeneous stochastic volatility model. In the case of the (correlated) Heston \citeyear{H} model, the (correlated) Hull-White \citeyear{HW} model, and the (correlated) Sch\"{o}bel-Zhu \citeyear{SZ99} model, we obtain simple explicit closed-form formulas for the respective fair strikes of continuously and discretely sampled variance swaps. In the Heston model, our formula simplifies the  results of Broadie and Jain \citeyear{BJ08} and is easy to analyze. Consequently, we are able to give asymptotic behaviors with respect to key parameters of the model and to the sampling frequency. 
 In particular, we provide explicit conditions under which the fair strike of  the discretely sampled variance swap is less valuable than that of the continuously sampled variance swap for high sampling frequencies, although the contrary is commonly  observed in the literature (see B\"{u}hler \citeyear{B} for example). Thus the ``convex-order conjecture" formulated by Keller-Ressel and Griessler \citeyear{KG12} may not hold for stochastic volatility models with correlation.  We discuss practical implications and illustrate the risk to underestimate or overestimate prices of  discretely sampled variance swaps when using a model for the corresponding continuously sampled ones with numerical examples. 

 The paper is organized as follows. Section \ref{S1} deals with the general time-homogeneous stochastic volatility model. Sections \ref{S2}, \ref{S3} and \ref{S3bis} provide formulas for the fair strike of a discrete variance swap in the Heston, the Hull-White and the Sch\"{o}bel-Zhu models. Section \ref{S5} contains  asymptotics for the Heston, the Hull-White and the Sch\"{o}bel-Zhu models and discusses the ``convex-order conjecture". A numerical analysis is given in Section \ref{S4}.

\section{Pricing Discrete Variance Swaps in a Time-homogeneous Stochastic Volatility Model\label{S1}}
In this section, we consider the problem of pricing a discrete variance swap under the following general time-homogeneous stochastic volatility model $(M)$, where the stock price and its volatility can possibly be correlated. We assume a constant risk-free rate $r\ge 0$, and that under a risk-neutral probability measure $\Q$
\begin{equation}(M)\quad
\left\{\begin{array}{rcl}
\frac{dS_t}{S_t} &=& r dt +m(V_t) dW_t^{(1)}\\
dV_t& = &\mu (V_t) dt +\sigma(V_t)dW_t^{(2)}
\end{array}\right.\label{eq1}\end{equation}
where $\E[dW_t^{(1)}dW_t^{(2)}]=\rho dt$, with $W^{(1)}$,  $W^{(2)}$ standard correlated Brownian motions. The state space of the stochastic variance process $V$ is $J=(l,r), -\infty\leq l<r\leq \infty$. Assume that  $\mu, \sigma: J\rightarrow \mathbb{R}$ are Borel functions satisfying the following Engelbert-Schmidt conditions,
$\forall x\in J, \sigma(x)\neq 0,$  $\frac{1}{\sigma^2(x)}, \frac{\mu(x)}{\sigma^2(x)}, \frac{m^2(x)}{\sigma^2(x)} \in L_{loc}^1 (J)$.
 Here $L_{loc}^1 (J)$ denotes the class of locally integrable functions, i.e. the functions $J\rightarrow \mathbb{R}$ that are integrable on compact subsets of $J$. Under the above conditions, the SDE \eqref{eq1} for $V$ has a unique in law weak solution that possibly exits its state space $J$ (see Theorem $5.5.15$, page 341, Karatzas and Shreve \citeyear{KS91}). Assume that $ \frac{m(x)}{\sigma(x)}$ is differentiable at all $x\in J$.

In particular, this general model includes the Heston, the Hull-White, the Sch\"{o}bel-Zhu, the $3/2$  and the Stein-Stein models as special cases.  In what follows, we study  discretely and continuously sampled variance swaps with maturity $T$. In a variance swap, one counterparty agrees to pay at a fixed maturity $T$ a notional amount times the difference between a fixed level and a realized level of variance over the swap's  life. If it is continuously sampled, the realized variance corresponds to the quadratic variation of the underlying log price. When it is discretely sampled, it is the sum of the squared increments of the log price. Define their respective fair  strikes  as follows.

\begin{defi}
The fair strike of the discrete variance swap associated with the partition $0=t_0 <t_1 <... < t_n =T$ of the time interval $[0,T]$ is defined as
\begin{align}
K^M_d(n) &:=
\frac{1}{T}\sum_{i=0}^{n-1} \E\left[\(\ln \frac{S_{t_{i+1}}}{S_{t_i}}\)^2   \right],\label{KD}
\end{align}
where the underlying stock price $S$ follows the time-homogeneous stochastic volatility model \eqref{eq1} and where the exponent $M$ refers to the model $(M)$.
\end{defi}
\begin{defi} The fair strike of the continuous variance swap  is defined as
\begin{align}
K^M_c &:=\frac{1}{T}\E\left[\int_0^T m^2(V_s) ds\right],\label{KC}
\end{align}
where $S$ follows the time-homogeneous stochastic volatility model \eqref{eq1}.
\end{defi}
 In popular stochastic volatility models, $m(v)=\sqrt{v}$, so that $K^M_c=\frac{1}{T}\E\left[\int_0^T V_s ds\right]$. The derivation of the fair strike of a discrete variance swap in the time-homogeneous stochastic volatility model \eqref{eq1} is based on the following proposition.
\begin{pr}\label{p1}
Under the dynamics \eqref{eq1} for the stochastic volatility model $(M)$, define
\be \label{fh}f(v)=\int_{0}^v \frac{m(z)}{\sigma(z)}dz\quad\hbox{and}\quad
h(v)=\mu(v)f^{\prime}(v)+\frac{1}{2}\sigma^2 (v)f^{\prime\prime}(v).\notag\ee

For all $t\leq s\leq t+\Delta$ and $t\leq u \leq t+\Delta$, assume that\footnote{These conditions ensure that we can apply Fubini's theorem to exchange the order of integration. They are easily verified in specific examples.}
\begin{eqnarray}
\E\left[ \left|h(V_s) h(V_u)\right|\right]<\infty,\quad \E\left[ \left|h(V_s) m^2(V_u)\right|\right]<\infty,\notag\\
\E\left[\left|(f(V_{t+\Delta})-f(V_t))(2\rho h(V_s)+m^2(V_s))\right|\right]<\infty,\label{tech}
\end{eqnarray}

\newpage
Define for $t\leq s\leq t+\Delta,\quad  t\leq u \leq t+\Delta$,

\begin{tabular}{ll}
$m_1(s):=\E\left[ m^2(V_s)\right]$, &\quad
$m_2(s,u):=\E\left[m^2(V_s)m^2(V_u)\right]$,\\
$m_3(s,u):=\E\left[ h(V_s) h(V_u)\right]$,&\quad
$m_4(s,u):=\E\left[ h(V_s) m^2(V_u)\right]$, \\
\multicolumn{2}{l}{$m_5(t,s):=\E\left[(f(V_{t+\Delta})-f(V_t))(2\rho h(V_s)+m^2(V_s))\right], then$}
\end{tabular}
\vspace{3mm}
\begin{align}
&\E\left[\(\ln \frac{S_{t+\Delta}}{S_t}\)^2\right]=r^2 \Delta^2
+(1-\rho^2 -r\Delta) \int_t^{t+\Delta} m_1(s)ds- \rho \int_t^{t+\Delta} m_5(t,s) ds\notag\\
&\quad \quad +\frac{1}{4}  \int_t^{t+\Delta}\int_t^{t+\Delta} m_2(s,u)dsdu+ \rho^2 \E\left[\(f(V_{t+\Delta})-f(V_t)\)^2\right]\notag\\
&\quad \quad+ \rho^2   \int_t^{t+\Delta}\int_t^{t+\Delta} m_3(s,u)dsdu
 +\rho \int_t^{t+\Delta}\int_t^{t+\Delta} m_4(s,u)dsdu.\label{g4}
\end{align}

\end{pr}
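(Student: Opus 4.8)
The plan is to reduce the whole computation to It\^o's formula combined with the substitution furnished by the auxiliary function $f$. First I would apply It\^o's formula to $\ln S_t$, which gives
\begin{equation*}
\ln\frac{S_{t+\Delta}}{S_t}=r\Delta-\frac12\int_t^{t+\Delta}m^2(V_s)\,ds+\int_t^{t+\Delta}m(V_s)\,dW_s^{(1)}.
\end{equation*}
The obstacle is that this stochastic integral is driven by $W^{(1)}$ whereas $V$ is driven by the correlated motion $W^{(2)}$. To disentangle the two I would write $W^{(1)}=\rho W^{(2)}+\sqrt{1-\rho^2}\,W^{(3)}$ with $W^{(3)}$ a Brownian motion independent of $W^{(2)}$, so that $\int m(V_s)\,dW_s^{(1)}$ splits into a $W^{(2)}$-part and an orthogonal $W^{(3)}$-part.

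The key step is to rewrite the $W^{(2)}$-part. Applying It\^o's formula to $f(V_t)$ and using $f'=m/\sigma$ together with the definition of $h$ yields $df(V_t)=h(V_t)\,dt+m(V_t)\,dW_t^{(2)}$, hence
\begin{equation*}
\int_t^{t+\Delta}m(V_s)\,dW_s^{(2)}=f(V_{t+\Delta})-f(V_t)-\int_t^{t+\Delta}h(V_s)\,ds.
\end{equation*}
Writing $A=f(V_{t+\Delta})-f(V_t)$, $B=\int_t^{t+\Delta}h(V_s)\,ds$, $C=\int_t^{t+\Delta}m(V_s)\,dW_s^{(3)}$ and $D=\int_t^{t+\Delta}m^2(V_s)\,ds$, the log-return becomes
\begin{equation*}
\ln\frac{S_{t+\Delta}}{S_t}=\Bigl(r\Delta-\tfrac12 D\Bigr)+\rho(A-B)+\sqrt{1-\rho^2}\,C,
\end{equation*}
where $A$, $B$, $D$ are measurable with respect to the filtration generated by $W^{(2)}$ while $C$ is the only term carrying $W^{(3)}$.

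Next I would square this identity and take expectations. Conditioning on the $W^{(2)}$-filtration, $C$ has conditional mean zero and conditional variance $D$, so every cross term containing a single factor $C$ vanishes and $\E[C^2]=\E[D]$. Moreover $A-B=\int_t^{t+\Delta}m(V_s)\,dW_s^{(2)}$ is a martingale increment, whence $\E[A-B]=0$, so the cross terms $2\rho r\Delta(\E[A]-\E[B])$ vanish. Expanding $\bigl(r\Delta-\tfrac12 D+\rho A-\rho B\bigr)^2$ and collecting, the surviving expectations are $\E[D]$, $\E[D^2]$, $\E[A^2]$, $\E[AB]$, $\E[AD]$, $\E[B^2]$ and $\E[BD]$. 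I would then invoke Fubini's theorem to interchange expectation and the time integrals, which turns $\E[D]$ into $\int m_1$, $\E[D^2]$ into $\iint m_2$, $\E[B^2]$ into $\iint m_3$, $\E[BD]$ into $\iint m_4$, and the combination $-\rho\bigl(2\rho\,\E[AB]+\E[AD]\bigr)$ into $-\rho\int m_5$, while $\rho^2\E[A^2]$ is left untouched. Merging the two $\E[D]$ contributions into the coefficient $(1-\rho^2-r\Delta)$ then reproduces \eqref{g4} exactly.

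The conceptual heart of the argument is the $f$-substitution: it is what converts the awkward correlated stochastic integral into the explicit increment $f(V_{t+\Delta})-f(V_t)$ plus a drift, after which everything is bookkeeping. I expect the only genuinely delicate point to be the justification of Fubini's theorem for the double integrals defining $m_2,\dots,m_5$, and this is precisely what the integrability hypotheses \eqref{tech} are there to guarantee.
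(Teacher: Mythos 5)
Your proposal is correct and follows essentially the same route as the paper: the paper's proof also rests on the representation $\int_t^{t+\Delta} m(V_s)\,dW_s^{(2)}=f(V_{t+\Delta})-f(V_t)-\int_t^{t+\Delta}h(V_s)\,ds$ (stated there as Lemma 3.1 of Bernard and Cui), followed by squaring, using $\E\bigl[f(V_{t+\Delta})-f(V_t)-\int_t^{t+\Delta}h(V_s)\,ds\bigr]=0$ to kill the $2r\rho\Delta$ cross terms, and Fubini under the hypotheses \eqref{tech}. Your bookkeeping of the surviving terms reproduces \eqref{g4} exactly.
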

\proof See Appendix \ref{pp1}.\hfill$\Box$

Proposition \ref{p1} gives the key equation in the analysis of  discrete variance swaps. Observe\footnote{Thanks to the anonymous referee for pointing out this general expression.} that the final expression \eqref{g4} only depends on covariances of  functionals of $V_t$. Thus we can derive closed-form formulas for the fair strike of discrete variance swaps in those stochastic volatility models in which the terms $m_i$ from Proposition \ref{p1} can be computed in closed-form.
In the rest of the paper, we provide three examples  to apply this formula.

     From now on, for simplicity, we consider the equi-distant sampling scheme in \eqref{KD}. Under this scheme, $t_{i}=iT/n$ and $\Delta= t_{i+1}-t_i =T/n$, for $i=0,1,...,n$.

\begin{re}\label{INTR}
From \eqref{g4} it is clear that the fair strike of a discrete variance swap only depends on the risk-free rate $r$ up to the second order, as there is no higher order terms of $r$.  Interestingly, the second order coefficient of this expansion is model-independent whereas the first order coefficient is directly related to the strike of the corresponding continuously-sampled variance swap. Assume a constant sampling period $\frac{T}{n}$, the fair strike of the discrete variance swap can be expressed as
\begin{align}
K_d^M(n)&= b^M (n) -\frac{T}{n} K_c^M r +\frac{T}{n}r^2,\label{formr}
\end{align}
where $b^M (n)$ does not depend on $r$. Its sensitivity\footnote{The impact of stochastic interest rates on variance swaps is studied by H\"orfelt and Torn\'e \citeyear{HO}. Long-dated variance swaps will usually be sensitive to the interest rate volatility.}  to the risk-free rate $r$ is equal to
\begin{align}
\frac{d K_d^M (n)}{dr}=\frac{T}{n}(2r-K_c^M).\label{rhoS}
\end{align}
so that the minimum of $K_d^M$ as a function of $r$ is attained when the risk-free rate takes the value $r^*$ given by
\be r^*=\frac{K_c^M}{2}.\notag\ee

\end{re}

 The next proposition deals with the special case when the risk-free rate $r$ and the correlation coefficient $\rho$ are both equal to 0.
\begin{pr}(Fair strike when $r=0\%$ and $\rho=0$)\label{p0}

In the special case when the constant risk-free rate is 0, and the underlying stock price is not correlated to its volatility, we observe that
$$K^M_d(n)\ge K^M_c.$$
\end{pr}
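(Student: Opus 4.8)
The plan is to specialize the general one-step decomposition \eqref{g4} to the case $r=0$ and $\rho=0$, apply it on each subinterval of the equidistant partition, sum up, and recognize that the leading term is exactly $K^M_c$ while the remaining term is manifestly nonnegative. First I would set $r=0$ and $\rho=0$ in \eqref{g4}, applied to a generic interval $[t,t+\Delta]$. Every term carrying a factor of $r$ or $\rho$ then vanishes: the terms $r^2\Delta^2$ and $-r\Delta\int_t^{t+\Delta}m_1(s)\,ds$ drop because $r=0$, and the four $\rho$-dependent terms ($-\rho\int m_5$, $\rho^2\E[(f(V_{t+\Delta})-f(V_t))^2]$, $\rho^2\int\int m_3$, and $\rho\int\int m_4$) all drop because $\rho=0$. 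The coefficient $(1-\rho^2-r\Delta)$ collapses to $1$, leaving
\begin{align}
\E\left[\(\ln \frac{S_{t+\Delta}}{S_t}\)^2\right]=\int_t^{t+\Delta}m_1(s)\,ds+\frac{1}{4}\int_t^{t+\Delta}\int_t^{t+\Delta}m_2(s,u)\,ds\,du.\notag
\end{align}

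Next I would apply this identity to each interval $[t_i,t_{i+1}]$ of the equidistant partition, with $t=t_i$ and $\Delta=T/n$, then sum over $i$ and divide by $T$ as in \eqref{KD}. The single-integral contributions telescope across the partition into a single integral over $[0,T]$, namely $\sum_{i=0}^{n-1}\int_{t_i}^{t_{i+1}}m_1(s)\,ds=\int_0^T \E[m^2(V_s)]\,ds=T\,K^M_c$ by the definition \eqref{KC}. Hence this piece contributes exactly $K^M_c$, and we obtain
\begin{align}
K^M_d(n)=K^M_c+\frac{1}{4T}\sum_{i=0}^{n-1}\int_{t_i}^{t_{i+1}}\int_{t_i}^{t_{i+1}}m_2(s,u)\,ds\,du.\notag
\end{align}

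The crux is then to show that the residual double-integral term is nonnegative, and this is immediate rather than delicate: by definition $m_2(s,u)=\E[m^2(V_s)\,m^2(V_u)]$, and since $m^2(V_s)\,m^2(V_u)$ is a product of two nonnegative random variables it is nonnegative pathwise, so its expectation satisfies $m_2(s,u)\ge 0$ for all $s,u$. Consequently each double integral is nonnegative, the full sum is nonnegative, and we conclude $K^M_d(n)\ge K^M_c$. I do not expect any genuine obstacle here; the only point requiring a word is that the decomposition \eqref{g4} is legitimately available, i.e.\ that the integrability conditions \eqref{tech} hold. With $\rho=0$ only $m_1$ and $m_2$ actually enter the surviving expression, and the sign argument is robust: even knowing merely that $K^M_c<\infty$, the positivity of the integrand of $m_2$ forces the correction term to be nonnegative, so the inequality holds regardless of whether that term is finite.
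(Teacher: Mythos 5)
Your argument is correct and is essentially identical to the paper's own proof: both specialize the decomposition \eqref{g4} to $r=0$, $\rho=0$, identify the surviving single-integral term with $T\,K^M_c$ after summing over the partition, and observe that the remaining quarter of the second moment of $\int_{t_i}^{t_{i+1}}m^2(V_s)\,ds$ is nonnegative. Your write-up merely makes the nonnegativity of the residual term slightly more explicit than the paper does.
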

\proof Using Proposition \ref{p1} when  $r=0\%$ and $\rho=0$, we obtain
\begin{align}
\E\left[\(\ln \frac{S_{t+\Delta}}{S_t}\)^2\right] &=
\frac{1}{4}\E\left[  \(\int_t^{t+\Delta}m^2(V_s)ds\)^2\right]+ \int_t^{t+\Delta} \E\left[ m^2(V_s) \right]ds.\notag
\end{align}
We then add up the expectations of the squares of the log increments (as in \eqref{KD}) and find that the fair strike of the discrete variance swap is always larger than the fair strike of the continuous variance swap given in \eqref{KC}.
\hfill$\Box$

Proposition \ref{p0} has already appeared in the literature in specific models.  See for example Corollary $6.2$ of  Carr, Lee and Wu \citeyear{CLW11}, where this result is proved in the more general setting of time-changed L\'evy processes with independent time changes. However, we will see in the remainder of this paper that Proposition \ref{p0} may not hold under more general assumptions, namely when the dynamic of the stock price is correlated to the one of the volatility.

%

\section{Fair Strike of the Discrete Variance Swap in the Heston model\label{S2}}
Assume that we work under the Heston stochastic volatility model with the following dynamics
\begin{equation}(H) \quad
\left\{\begin{array}{rl}
\frac{dS_t}{S_t} &=r  dt + \sqrt{V_t} dW^{(1)}_t,\\
dV_t &=\kappa(\theta-V_t)dt+\gamma \sqrt{V_t} dW_t^{(2)}
\end{array}\right.\label{eqv}
\end{equation}
where $\E\left[dW^{(1)}_t dW^{(2)}_t\right]=\rho dt$. It is a special case of the general model \eqref{eq1}, where we choose
\begin{align}
m(x)&=\sqrt{x},\ \mu(x)=\kappa(\theta-x),\ \sigma(x)=\gamma \sqrt{x}.\label{inter}
\end{align}
Using \eqref{eqgsv} in Lemma \ref{l1} in the Appendix with $f(v)=\frac{v}{\gamma}$ and $h(v)=\frac{\kappa}{\gamma}(\theta-v)$, the stock price is
\begin{align}
S_t &=S_0 e^{rt-\frac{1}{2}\xi_t +(V_t -V_0-\kappa\theta t+\kappa \xi_t)\frac{\rho}{\gamma} +\sqrt{1-\rho^2} \int_{0}^{t}\sqrt{V_s} dW^{(3)}_s}\label{eqs}
\end{align}
where $\xi_t=\int_0^tV_sds$ and $W^{(3)}_t$ is such that $dW^{(1)}_t=\rho dW^{(2)}_t+\sqrt{1-\rho^2}dW^{(3)}_t$.

Using Proposition \ref{p1} for the time-homogeneous stochastic volatility model, we then derive a closed-form expression for the fair strike of a discrete variance swap and compare it with the fair strike of a continuous variance swap.
\begin{pr}(Fair Strikes in the Heston Model) \label{p2}

In the Heston stochastic volatility model \eqref{eqv},
the fair strike \eqref{KD} of the discrete variance swap is
\begin{eqnarray}
K^H_d(n)=\frac{1}{8{n}{\kappa}^{3} T }
\left\{
n\left(\gamma^2 \left(\theta-2V_0 \right)+2\kappa \left(V_0 -\theta\right)^2\right) \left(e^{-2\kappa T}-1\right) \frac{1-e^{\frac{\kappa T}{n}} }{1+e^{\frac{\kappa T}{n}}}
\right.\notag\\
\left.
+2\kappa T   \left( \kappa^2 T\left(\theta-2r\right)^2 +n\theta \left(4\kappa^2 -4\rho\kappa \gamma +\gamma^2\right)\right)
\right. \label{discrete}\\
\left.
+4 \left(V_0 -\theta \right) \left(n\left(2\kappa^2 +\gamma^2 -2\rho\kappa\gamma\right)+\kappa^2 T\left(\theta-2r\right)  \right) \left(1-e^{-\kappa T}\right)
\right.\notag\\
\left.
-2n^2\theta \gamma\left(\gamma-4\rho\kappa \right) \left(1-e^{-\frac{\kappa T}{n}}\right)
+4\left(V_0 -\theta\right)\kappa T \gamma\left(\gamma-2\rho \kappa \right) \frac{1-e^{-\kappa T}}{1-e^{\frac{\kappa T}{n}}}
 \right\},\notag\end{eqnarray}
where $a=r-\frac{\rho\kappa\theta}{\gamma}$ and $b=\frac{\rho\kappa}{\gamma} -\frac12$. The fair strike of the continuous variance swap is
\begin{align}
K^H_c&=\frac{1}{T}\E\left[\int_0^T V_s ds\right]=\theta  +(1-e^{-\kappa T}) \frac{V_0 -\theta}{\kappa T}.\label{cont}
\end{align}

\end{pr}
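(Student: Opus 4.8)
The plan is to specialize Proposition \ref{p1} to the Heston coefficients \eqref{inter} and reduce every term in \eqref{g4} to first and second moments of the CIR variance process $V$, all of which are available in closed form. I would dispose of the continuous strike \eqref{cont} first, since it is immediate: for CIR, $\E[V_s]$ solves $\frac{d}{ds}\E[V_s]=\kappa(\theta-\E[V_s])$, whence $\E[V_s]=\theta+(V_0-\theta)e^{-\kappa s}$, and substituting this into $K_c^H=\frac{1}{T}\int_0^T \E[V_s]\,ds$ and integrating yields \eqref{cont} directly.

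For the discrete strike, note that in the Heston model $m^2(x)=x$, and from $f(v)=v/\gamma$ one has $f'(v)=1/\gamma$, $f''(v)=0$, so $h(v)=\frac{\kappa}{\gamma}(\theta-v)$ (as recorded before \eqref{eqs}). Consequently each quantity in Proposition \ref{p1} is a linear or bilinear functional of $V$: explicitly $m_1(s)=\E[V_s]$, $m_2(s,u)=\E[V_s V_u]$, $m_3(s,u)=\frac{\kappa^2}{\gamma^2}\E[(\theta-V_s)(\theta-V_u)]$, $m_4(s,u)=\frac{\kappa}{\gamma}\E[(\theta-V_s)V_u]$, while $m_5(t,s)=\frac{1}{\gamma}\E[(V_{t+\Delta}-V_t)(2\rho\frac{\kappa}{\gamma}(\theta-V_s)+V_s)]$ and $\E[(f(V_{t+\Delta})-f(V_t))^2]=\frac{1}{\gamma^2}\E[(V_{t+\Delta}-V_t)^2]$ involve the same moments at the partition endpoints. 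Thus \eqref{g4} collapses to an expression depending only on $\E[V_s]$ and $\cov(V_s,V_u)$. (Equivalently, one could avoid Proposition \ref{p1} and compute $\E[(\ln S_{t+\Delta}/S_t)^2]$ directly from the representation \eqref{eqs}, writing $\ln(S_{t+\Delta}/S_t)=a\Delta+b\int_t^{t+\Delta}V_s\,ds+\frac{\rho}{\gamma}(V_{t+\Delta}-V_t)+\sqrt{1-\rho^2}\int_t^{t+\Delta}\sqrt{V_s}\,dW^{(3)}_s$ with $a,b$ as in the statement, squaring, and using that $V\perp W^{(3)}$ together with the It\^o isometry; this again reduces to the same CIR moments.)

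The moment inputs I would assemble are the mean above, the variance $\mathrm{Var}(V_s)=\frac{\gamma^2}{\kappa}V_0\bigl(e^{-\kappa s}-e^{-2\kappa s}\bigr)+\frac{\theta\gamma^2}{2\kappa}\bigl(1-e^{-\kappa s}\bigr)^2$, and the covariance $\cov(V_s,V_u)=e^{-\kappa(u-s)}\mathrm{Var}(V_s)$ for $s\le u$, which follows from the Markov property and $\E[V_u\mid\F_s]=\theta+(V_s-\theta)e^{-\kappa(u-s)}$. Inserting these into \eqref{g4}, I would evaluate the single integrals of $m_1$ and $m_5$ and the double integrals over the square $[t,t+\Delta]^2$, splitting the covariance piece into the regions $s\le u$ and $s\ge u$; each resulting integrand is an elementary combination of exponentials in $s$ and $u$.

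The main obstacle is not any individual integral but the bookkeeping in the final summation. After computing the per-interval contribution at $t=t_i=iT/n$, $\Delta=T/n$, the $i$-dependence enters only through $e^{-\kappa i T/n}$ and $e^{-2\kappa i T/n}$, so $\sum_{i=0}^{n-1}$ reduces to two geometric series; these are precisely what generate the factors $\frac{1-e^{\kappa T/n}}{1+e^{\kappa T/n}}$, $\bigl(1-e^{-\kappa T/n}\bigr)$ and $\frac{1-e^{-\kappa T}}{1-e^{\kappa T/n}}$ in \eqref{discrete}. What remains is purely algebraic: collecting the many exponential terms and condensing them into the stated compact form. The most error-prone point is the cross term in $m_5$, where the increment $V_{t+\Delta}-V_t$ is correlated with the running value $V_s$ through the covariance; keeping its sign and the factors of $\rho$ and $\gamma$ straight is where I would expect to slip. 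As consistency checks I would confirm that setting $\rho=0$, $r=0$ reproduces the inequality $K_d^H(n)\ge K_c^H$ of Proposition \ref{p0}, and that letting $n\to\infty$ recovers \eqref{cont}.
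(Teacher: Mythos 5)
Your proposal is correct and follows essentially the same route as the paper: specialize Proposition \ref{p1} (equivalently, square the log-return representation \eqref{eqs} with the coefficients $a$ and $b$), reduce everything to the CIR mean and covariance — your $\mathrm{Var}(V_s)$ and $\cov(V_s,V_u)=e^{-\kappa(u-s)}\mathrm{Var}(V_s)$ are exactly the content of \eqref{Vt}--\eqref{VtVs} — then integrate over each subinterval and sum the resulting geometric series in $e^{-\kappa iT/n}$ and $e^{-2\kappa iT/n}$. The paper likewise delegates the final algebraic condensation to a computer algebra system, so your acknowledgement that the bookkeeping is the only real obstacle matches the actual proof.
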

\proof See Appendix \ref{p2p} for the proof of \eqref{discrete}. The formula \eqref{cont} for the fair strike of a continuous variance swap is already well-known and can be found for example in  Broadie and Jain \citeyear{BJ08},  formula $(4.3)$, page 772.\hfill$\Box$

Proposition \ref{p2} provides an explicit formula for the fair strike of a discrete variance swap as a function of model parameters. This formula simplifies the expressions  obtained by  Broadie and Jain \citeyear{BJ08} in equations (A-29) and (A-30), page 793, where several sums from 0 to $n$ are involved and can actually be computed explicitly as shown by the expression \eqref{discrete} above. We verified that our formula agrees with numerical examples presented in Table 5 (column `SV') on page 782 of Broadie and Jain \citeyear{BJ08}.\footnote{This formula has been implemented in Matlab and its code is available upon request from authors as well as for all other formulas that appear in this paper.}

Contrary to what is stated in the introduction of the paper by  Zhu and Lian \citeyear{ZL}, the techniques of Broadie and Jain \citeyear{BJ08} can easily be extended to other types of payoffs.  The following proposition gives explicit expressions for the volatility derivative considered by Zhu and Lian \citeyear{ZL}.
\begin{pr}\label{zl} For the following set of dates $t_i=\frac{iT}{n}$ with $i=0,1,...,n$, denote $\Delta=T/n$, and assume $\alpha=2\kappa\theta/\gamma^2 -1 \geq 0$, and $\gamma^2 T<1$.  Then the fair price of a discrete variance swap with payoff $\frac{1}{T}\sum\limits_{i=0}^{n-1}\(\frac{S_{t_{i+1}}-S_{t_{i}}}{S_{t_{i}}}\)^2$ is equal to
\begin{align*}
K_d^{zl}(n)&=\frac{1}{T} \sum_{i=0}^{n-1} \E\left[ \(\frac{S_{t_{i+1}}-S_{t_{i}}}{S_{t_{i}}}\)^2  \right]=\frac{1}{T}\(a_0 +\sum_{i=1}^{n-1} a_i\)+\frac{n-2ne^{r\Delta}}{T},
\end{align*}
where we define $a_i =\E\left[\(\frac{S_{t_{i+1}}}{S_{t_{i}}}\)^2\right]$, for $i=0,1,...,n-1$. Then for $i=0,1,...,n-1$, we have
\begin{equation}
a_i=\frac{e^{2r\Delta}}{S_0^2} M(2, \Delta) e^{q(2) V_0 \(\frac{\eta(t_i)e^{-\kappa t_i} }{\eta(t_i) -q(2)}-1\)} \(\frac{\eta(t_i)}{\eta(t_i) -q(2)}\)^{\alpha+1},\notag
\end{equation}
where
\begin{align*}
M(u,t)&=\E[e^{u X_t }]=S_0^ue^{\frac{\kappa\theta}{\gamma^2} \( (\kappa-\gamma \rho u-d(u))t-2\ln \left(\frac{1-g(u) e^{-d(u) t}}{1-g(u)}\right)  \)}e^{ V_0 \frac{\kappa-\gamma \rho u-d(u)}{\gamma^2} \frac{1-e^{-d(u) t}}{1-g(u) e^{-d(u) t}}},\notag
\end{align*}
with the following auxiliary functions
\begin{align}
d(u)&=\sqrt{(\kappa-\gamma \rho u)^2 +\gamma^2 (u-u^2)  }, &
g(u)=\frac{\kappa-\gamma \rho u-d(u)}{\kappa-\gamma \rho u+d(u)},\notag\\
q(u)&=\frac{\kappa-\gamma \rho u-d(u)}{\gamma^2} \frac{1-e^{-d(u)\Delta}}{1-g(u)e^{-d(u)\Delta}}, &
\eta(u)=\frac{2\kappa}{\gamma^2} \(1-e^{-\kappa u}\)^{-1}.\notag
\end{align}
\end{pr}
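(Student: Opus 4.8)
The plan is to reduce everything to three ingredients: an elementary expansion of the squared simple return, the affine moment generating function (mgf) of the Heston log-price over one sampling period, and the Laplace transform of the CIR variance at the sampling dates. First I would expand pointwise, writing $X_t=\ln S_t$, so that
\begin{equation*}
\(\frac{S_{t_{i+1}}-S_{t_i}}{S_{t_i}}\)^2=\(\frac{S_{t_{i+1}}}{S_{t_i}}\)^2-2\,\frac{S_{t_{i+1}}}{S_{t_i}}+1 .
\end{equation*}
Taking $\Q$-expectations, the middle term is dispatched by the martingale property of the discounted price: under \eqref{eqv} the process $e^{-rt}S_t$ is a $\Q$-martingale, so conditioning on $\F_{t_i}$ gives $\E[S_{t_{i+1}}\mid\F_{t_i}]=e^{r\Delta}S_{t_i}$ and hence $\E[S_{t_{i+1}}/S_{t_i}]=e^{r\Delta}$. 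Summing $a_i-2e^{r\Delta}+1$ over $i=0,\dots,n-1$ and dividing by $T$ produces the first displayed identity together with the residual term $\frac{n-2ne^{r\Delta}}{T}$.

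It then remains to evaluate $a_i=\E[(S_{t_{i+1}}/S_{t_i})^2]=\E[e^{2(X_{t_{i+1}}-X_{t_i})}]$. The key step is to condition on the variance at the left endpoint and use the affine structure of the model. Since $V$ is autonomous and the increment $X_{t_{i+1}}-X_{t_i}$ enters only through the variance path, the Markov property yields a conditional mgf that is exponential-affine in $V_{t_i}$: for $u=2$,
\begin{equation*}
\E\[\,e^{u(X_{t_{i+1}}-X_{t_i})}\,\big|\,V_{t_i}=v\,\]=e^{u r\Delta}\,\psi(u,\Delta)\,e^{q(u)v},
\end{equation*}
where $q(u),d(u),g(u)$ are the Heston Riccati coefficients quoted in the statement and $\psi(u,\Delta)$ collects the $\theta$-dependent terms, normalised so that $M(u,\Delta)/S_0^{u}=\psi(u,\Delta)\,e^{q(u)V_0}$. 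I would derive these by solving the two Riccati ODEs of the affine system $(X,V)$ (equivalently, reading off the Heston characteristic function); the single-period horizon $\Delta$ is why $q(u)$ carries $e^{-d(u)\Delta}$.

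The final step integrates out $V_{t_i}$ with the exact CIR Laplace transform: starting from $V_0$, for $w<\eta(t_i)$,
\begin{equation*}
\E\[\,e^{w V_{t_i}}\,\]=\(\frac{\eta(t_i)}{\eta(t_i)-w}\)^{\alpha+1}\exp\(\frac{w\,V_0\,\eta(t_i)e^{-\kappa t_i}}{\eta(t_i)-w}\),
\end{equation*}
with $\eta(t_i)=\frac{2\kappa}{\gamma^2}(1-e^{-\kappa t_i})^{-1}$ and $\alpha+1=2\kappa\theta/\gamma^2$. Substituting $w=q(2)$ and multiplying by $e^{2r\Delta}\psi(2,\Delta)=e^{2r\Delta}M(2,\Delta)/(S_0^{2}e^{q(2)V_0})$, the residual factor $e^{-q(2)V_0}$ combines with the CIR drift term to give precisely $\exp\!\big(q(2)V_0(\eta(t_i)e^{-\kappa t_i}/(\eta(t_i)-q(2))-1)\big)$, while the power $(\eta(t_i)/(\eta(t_i)-q(2)))^{\alpha+1}$ carries through verbatim; this reproduces the stated closed form for $a_i$.

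The main obstacle is not the algebra but the finiteness and validity conditions, which is exactly where $\alpha\ge 0$ and $\gamma^2 T<1$ enter. The second moment $\E[e^{2X_t}]$ is subject to moment explosion in the Heston model, and the CIR transform above is finite only while its argument stays below the pole $\eta(t_i)$. I would therefore verify that under $\gamma^2 T<1$ together with Feller's condition $\alpha\ge 0$, the coefficient $q(2)$ satisfies $q(2)<\eta(t_i)$ for every $t_i\le T$, so that each $\E[e^{q(2)V_{t_i}}]$ and $M(2,\Delta)$ are finite and the exponential-affine representation is legitimate on the relevant domain. Establishing this uniform bound over the sampling grid (controlling $d(2)$ and $q(2)$ as functions of the horizon) is what justifies the conditioning argument and the interchange of expectation with the closed-form transforms, and is the step deserving the most care.
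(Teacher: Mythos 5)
Your proposal is correct and follows essentially the same route as the paper's proof: the elementary expansion of the squared simple return combined with the martingale property, conditioning on $V_{t_i}$ to exploit the exponential-affine form of the one-period log-return mgf with Riccati coefficient $q(2)$, and then integrating out $V_{t_i}$ via the exact CIR Laplace transform with pole at $\eta(t_i)$. The only cosmetic differences are that the paper cites the mgf from Albrecher et al.\ and the CIR transform from Hurd and Kuznetsov rather than re-deriving the Riccati system, and it treats $i=0$ as a limiting case via L'H\^{o}pital; your closing discussion of where $\alpha\ge 0$ and $\gamma^2 T<1$ enter matches the paper's justification.
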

\begin{proof}
See Appendix \ref{proofzl}.
\hfill$\Box$
\end{proof}
\begin{re}
The formula in the above Proposition \ref{zl} is consistent with  the one obtained in equation (2.34) by Zhu and Lian \citeyear{ZL}. In particular, we are able to reproduce all numerical results but one presented in Table $3.1$, page $246$ of Zhu and Lian \citeyear{ZL} using their set of parameters: $\kappa=11.35$, $\theta=0.022$, $\gamma=0.618$, $\rho=-0.64$, $V_0=0.04$, $r=0.1$, $T=1$ and $S_0 =1$ (all numbers match except the case when $n=4$ we get 263.2 instead of 267.6).


Proposition \ref{zl} gives a formula for pricing the variance swap with payoff  $\frac{1}{T}\sum\limits_{i=0}^{n-1}\(\frac{S_{t_{i+1}}-S_{t_{i}}}{S_{t_{i}}}\)^2$, but it is straightforward to extend its proof to the following payoff $\frac{1}{T}\sum\limits_{i=0}^{n-1}\(\frac{S_{t_{i+1}}-S_{t_{i}}}{S_{t_{i}}}\)^k$, with an arbitrary integer power $k$.
\end{re}

\section{Fair Strike of the Discrete Variance Swap in the Hull-White model\label{S3}}
The correlated Hull-White stochastic volatility model is as follows
\begin{equation}(HW) \quad
 \left\{\begin{array}{cl}
\frac{dS_t}{S_t} &= r dt +\sqrt{V_t} dW_t^{(1)}\\
dV_t& =\mu V_t dt +\sigma V_t dW_t^{(2)}
\end{array}\right.\label{hwmodel}
\end{equation}
where $\E[dW^{(1)}_t dW^{(2)}_t]=\rho dt$. Referring to equation \eqref{eq1}, we have $m(x)=\sqrt{x},\ \mu(x)=\mu x,\ \sigma(x)=\sigma x$,
so it is straightforward to determine
$f(v)=\frac{2}{\sigma}\sqrt{v}$,  $h(v)=\(\frac{\mu}{\sigma}-\frac{\sigma}{4}\)\sqrt{v}$, and apply \eqref{eqgsv} in Lemma \ref{l1} in the Appendix to obtain
\begin{multline}
S_T=S_0 \exp\left\{ rT -\frac{1}{2}\int_0^T V_t dt+\frac{2\rho}{\sigma}(\sqrt{V_T}-\sqrt{V_0})\right.\\
\left.-\rho \(\frac{\mu}{\sigma}-\frac{\sigma}{4}\)\int_0^T \sqrt{V_t} dt + \sqrt{1-\rho^2} \int_0^T
\sqrt{V_t} dW_t^{(3)} \right\}.\notag
\end{multline}

\begin{pr}(Fair Strikes in the Hull-White Model) \label{pHW}

In the Hull-White stochastic volatility model \eqref{hwmodel},
the fair strike \eqref{KD} of the discrete variance swap is
\begin{multline}
K^{HW}_d(n)= {\frac {{r}^{2}{T}}
{n}}+ \left( 1-{\frac {rT}{n}} \right)  K^{HW}_c -\frac{V_0^{2}\(e^{\left( 2\,\mu+
{\sigma}^{2} \right)T }-1\)\left(e^{\frac{\mu T}{n}}-1 \right) }{2T\mu(\mu+\sigma^2)\(e^{{\frac {\left( 2\,\mu+
{\sigma}^{2} \right)T }{n}}}-1\)}\\
+\frac{V_0^{2}\(e^{\left( 2\,\mu+
{\sigma}^{2} \right)T }-1\) }{2T(2\mu+\sigma^2 )(\mu+\sigma^2)}
+\frac{8\rho
 \left( {e^{\frac{3(4\mu+\sigma^2)T}{8} }}-1 \right) {V_0}^{3/2} \sigma (e^{\frac{\mu T}{n}}-1) }{\mu  T \left( 4\,\mu+3\,{
\sigma}^{2} \right)\left( e^{\frac{3(4\mu+\sigma^2)T}{8n}}-1 \right)}\\-\frac{64\rho
 \left( {e^{\frac{3(4\mu+\sigma^2)T}{8} }}-1 \right) {V_0}^{3/2} \sigma}{ 3 T(4\mu+\sigma^2) \left( 4\,\mu+3\,{
\sigma}^{2} \right)}
.\label{discretehw}
\end{multline}

The fair strike of the continuous variance swap is
\begin{align}
K^{HW}_c&=\frac{1}{T}\E\left[\int_0^T V_s ds\right]=\frac{V_0}{T\mu} (e^{\mu T}-1).\label{contHW}
\end{align}

\end{pr}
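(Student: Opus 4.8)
The plan is to specialize the master formula \eqref{g4} of Proposition \ref{p1} to the Hull-White coefficients $m(x)=\sqrt x$, $f(v)=\frac{2}{\sigma}\sqrt v$, $h(v)=(\frac{\mu}{\sigma}-\frac{\sigma}{4})\sqrt v$, and then to sum the per-interval contributions over the equidistant grid $t_i=iT/n$, $\Delta=T/n$. The continuous strike \eqref{contHW} is immediate: $V$ is a geometric Brownian motion, $V_s=V_0\exp\{(\mu-\frac{\sigma^2}{2})s+\sigma W_s^{(2)}\}$, hence $\E[V_s]=V_0 e^{\mu s}$ and $K_c^{HW}=\frac{1}{T}\int_0^T\E[V_s]\,ds=\frac{V_0}{T\mu}(e^{\mu T}-1)$.

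For the discrete strike I would first exploit Remark \ref{INTR}: the coefficient of $r^2$ is model-free and that of $r$ is $-\frac{T}{n}K_c^{HW}$, so $K_d^{HW}(n)=b^{HW}(n)-\frac{T}{n}K_c^{HW}r+\frac{T}{n}r^2$ with $b^{HW}(n)$ independent of $r$. Since the first two terms $\frac{r^2T}{n}+(1-\frac{rT}{n})K_c^{HW}$ of \eqref{discretehw} reproduce exactly $\frac{T}{n}r^2-\frac{T}{n}K_c^{HW}r+K_c^{HW}$, it suffices to compute $b^{HW}(n)=K_d^{HW}(n)|_{r=0}$ and to show it equals $K_c^{HW}$ plus the two $V_0^2$ terms and the two $\rho V_0^{3/2}\sigma$ terms of \eqref{discretehw}. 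The only analytic input needed is that every functional entering $m_1,\dots,m_5$ is a power of the log-normal $V$; writing $\phi(p):=p\mu+\frac12 p(p-1)\sigma^2$ one has $\E[V_s^p]=V_0^p e^{\phi(p)s}$, and for $s\le u$ the joint moments $\E[V_s^pV_u^q]$ follow from $\cov(W_s,W_u)=\min(s,u)$; the relevant exponents are $\phi(\frac12),\phi(1),\phi(\frac32),\phi(2)$, the last two being $\frac{3(4\mu+\sigma^2)}{8}$ and $2\mu+\sigma^2$, which is why these combinations surface in \eqref{discretehw}.

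A clean simplification, which I would establish first, is that all $\rho^2$ contributions in \eqref{g4} cancel identically. By It\^o's formula $f(V_{t+\Delta})-f(V_t)=\int_t^{t+\Delta}h(V_s)\,ds+\int_t^{t+\Delta}m(V_s)\,dW_s^{(2)}$, so that $\E[(f(V_{t+\Delta})-f(V_t))^2]=\int\!\!\int m_3+\int m_1+2C$ (with $C$ the $ds$--$dW^{(2)}$ cross term) and $\int\E[(f(V_{t+\Delta})-f(V_t))h(V_s)]\,ds=\int\!\!\int m_3+C$. Substituting these into the $\rho^2$-homogeneous part of \eqref{g4} --- namely $-\rho^2\int m_1$ (hidden in the factor $1-\rho^2-r\Delta$), the $2\rho h$ piece of $-\rho\int m_5$, and the terms $\rho^2\E[(f(V_{t+\Delta})-f(V_t))^2]+\rho^2\int\!\!\int m_3$ --- everything proportional to $\rho^2$ vanishes. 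After also setting $r=0$, the per-interval expectation collapses to $\int_t^{t+\Delta}m_1(s)\,ds+\frac14\int\!\!\int m_2-\rho\int\E[(f(V_{t+\Delta})-f(V_t))m^2(V_s)]\,ds+\rho\int\!\!\int m_4$, whose four pieces give, after summing and dividing by $T$, precisely $K_c^{HW}$, the two $V_0^2$ terms, and the two $\rho V_0^{3/2}\sigma$ terms.

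The remaining work is the integration and summation. Each single or double integral over $[t_i,t_{i+1}]$ of an exponential is elementary, but the joint moments carry a $\min(s,u)$, so the square $[t_i,t_{i+1}]^2$ must be split along the diagonal $\{s=u\}$; note $m_4(s,u)=(\frac{\mu}{\sigma}-\frac{\sigma}{4})\E[\sqrt{V_s}V_u]$ is not symmetric in $(s,u)$, so the two triangles must be integrated separately. Summing the resulting per-interval expressions over $i=0,\dots,n-1$ produces geometric series $\sum_{i=0}^{n-1}e^{ciT/n}=\frac{e^{cT}-1}{e^{cT/n}-1}$ with $c\in\{\,2\mu+\sigma^2,\ \tfrac{3(4\mu+\sigma^2)}{8}\,\}$, accounting for the factors $(e^{(2\mu+\sigma^2)T}-1)$, $(e^{3(4\mu+\sigma^2)T/8}-1)$ and the denominators $(e^{(2\mu+\sigma^2)T/n}-1)$, $(e^{3(4\mu+\sigma^2)T/8n}-1)$ in \eqref{discretehw}. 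I expect the genuine difficulty to be entirely this bookkeeping: keeping the four exponents and their cross-term analogues straight, handling the diagonal split of the non-symmetric double integrals, and collecting the geometric sums and the coefficient $\frac{\mu}{\sigma}-\frac{\sigma}{4}$ into the stated compact closed form. Reinstating the $r$-dependence through Remark \ref{INTR} then yields \eqref{discretehw}.
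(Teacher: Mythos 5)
Your proposal is correct and follows the same overall route as the paper's Appendix~D: specialize the master formula \eqref{g4} to $m(x)=\sqrt x$, $f(v)=\tfrac{2}{\sigma}\sqrt v$, $h(v)=(\tfrac{\mu}{\sigma}-\tfrac{\sigma}{4})\sqrt v$, compute the joint power moments of the geometric Brownian motion $V$ (the exponents $\phi(3/2)=\tfrac{3(4\mu+\sigma^2)}{8}$ and $\phi(2)=2\mu+\sigma^2$ are exactly the ones the paper uses), integrate over each $[t_i,t_{i+1}]$ splitting the non-symmetric double integrals along the diagonal, and collapse the sum over $i$ into geometric series. The one genuine difference is your preliminary observation that every term proportional to $\rho^2$ in \eqref{g4} cancels identically: writing $f(V_{t+\Delta})-f(V_t)=\int_t^{t+\Delta}h(V_s)\,ds+\int_t^{t+\Delta}m(V_s)\,dW_s^{(2)}$ and using the It\^o isometry, the combination $-\rho^2\int m_1-2\rho^2\int\E[(f(V_{t+\Delta})-f(V_t))h(V_s)]\,ds+\rho^2\E[(f(V_{t+\Delta})-f(V_t))^2]+\rho^2\iint m_3$ vanishes, which is consistent with (and explains why) the closed form \eqref{discretehw} is affine in $\rho$. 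The paper does not exploit this: its per-interval expression carries the four $\rho^2$ terms explicitly, requiring the additional moments $\E[\sqrt{V_s}\sqrt{V_u}]$ and $\E[(\sqrt{V_{t_{i+1}}}-\sqrt{V_{t_i}})^2]$ and leaving the cancellation to Maple. Your version saves those moment computations and makes the final affine structure in $\rho$ transparent a priori; the paper's version is more mechanical but requires no extra argument. Both leave the same residual bookkeeping (integration of exponentials and summation of geometric series), which you have correctly identified as the only remaining work.
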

\proof The proof can be found in Appendix \ref{proofHW}.\hfill$\Box$

\section{Fair Strike of the Discrete Variance Swap in the Sch{\"o}bel-Zhu model\label{S3bis}}

The correlated Sch{\"o}bel-Zhu stochastic volatility model (see Sch{\"o}bel and Zhu \citeyear{SZ99}) can be described by the following dynamics\footnote{\label{f6}We shall note that here $m(V_t)=V_t$ (where $m(\cdot)$ is defined in \eqref{eq1}) instead of $\sqrt{V_t}$, thus the process $V_t$ models the volatility and not the variance. In particular in the Sch\"obel-Zhu model, the variance process $Y_t=V_t^2$ follows $dY_t=(\gamma^2+2\kappa\theta\sqrt{Y_t}-2\kappa Y_t)dt+2\gamma\sqrt{Y_t}dW_t^{(2)}$.}
\begin{equation}(SZ) \quad
\left\{\begin{array}{cl}
\frac{dS_t}{S_t} &= r dt +V_t dW_t^{(1)}\\
dV_t& =\kappa(\theta-V_t) dt +\gamma dW_t^{(2)}
\end{array}\right.\label{szmodel}
\end{equation}
where $\E[dW^{(1)}_t dW^{(2)}_t]=\rho dt$. Referring to equation \eqref{eq1}, we have
$m(x)=x, \mu(x)=-\kappa(x-\theta), \sigma(x)=\gamma$,
so it is straightforward to apply \eqref{eqgsv} in Lemma \ref{l1} given in the Appendix with
$f(v)=\frac{v^2}{2\gamma}$ and   $h(v)=\frac{\kappa\theta}{\gamma}v-\frac{\kappa}{\gamma}v^2 +\frac{\gamma}{2}$ to obtain
\begin{multline}
S_T=S_0 \exp\left\{ (r-\frac{\gamma\rho}{2})T -\frac{\kappa\theta\rho}{\gamma}\int_0^T V_t dt-\(\frac{1}{2}-\frac{\rho\kappa}{\gamma}\)\int_0^T V_t^2 dt  \right.\\
\left.+\frac{\rho}{2\gamma}(V_T^2-V_0^2) + \sqrt{1-\rho^2} \int_0^T
V_t dW_t^{(3)} \right\}.\notag
\end{multline}

\begin{pr}(Fair Strikes in the Sch{\"o}bel-Zhu Model) \label{pSZ}

In the Sch{\"o}bel-Zhu stochastic volatility model \eqref{szmodel},
the fair strike \eqref{KD} of the discrete variance swap is computed from \eqref{g4} but does not have a simple expression.\footnote{See Proposition \ref{asympt_n_SZ} for an explicit expansion.}
The fair strike of the continuous variance swap is
\begin{align}
K^{SZ}_c&=\frac{\gamma^2}{2\kappa}+\theta^2+\(\frac{(V_0 -\theta)^2}{2\kappa T} -\frac{\gamma^2}{4\kappa^2 T}\)(1-e^{-2\kappa T})+\frac{2\theta(V_0-\theta)}{\kappa T}(1-e^{-\kappa T}) .\label{contsz}
\end{align}

\end{pr}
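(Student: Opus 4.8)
The substantive content of Proposition \ref{pSZ} is the closed-form expression \eqref{contsz} for $K^{SZ}_c$; the first assertion merely records that \eqref{g4} applies to the Sch\"obel--Zhu model but yields no compact formula. I would therefore concentrate the proof on \eqref{contsz}. Since $m(x)=x$ here, the definition \eqref{KC} together with Tonelli's theorem (the integrand $V_s^2$ is nonnegative, so no extra integrability hypothesis is needed) gives
\begin{equation*}
K^{SZ}_c=\frac{1}{T}\E\left[\int_0^T V_s^2\,ds\right]=\frac{1}{T}\int_0^T \E[V_s^2]\,ds,
\end{equation*}
so the whole task reduces to computing the second moment of the Ornstein--Uhlenbeck process $V$ solving $dV_t=\kappa(\theta-V_t)\,dt+\gamma\,dW_t^{(2)}$.

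The plan is to exploit that $V$ is Gaussian. First I would record the mean $\E[V_s]=\theta+(V_0-\theta)e^{-\kappa s}$ and the variance $\frac{\gamma^2}{2\kappa}(1-e^{-2\kappa s})$, both standard and obtainable either from the explicit solution $V_s=\theta+(V_0-\theta)e^{-\kappa s}+\gamma\int_0^s e^{-\kappa(s-u)}dW_u^{(2)}$ (via It\^o isometry) or by solving the two moment ODEs directly. Then $\E[V_s^2]=\mathrm{Var}(V_s)+(\E[V_s])^2$, and expanding the square yields
\begin{equation*}
\E[V_s^2]=\frac{\gamma^2}{2\kappa}(1-e^{-2\kappa s})+\theta^2+2\theta(V_0-\theta)e^{-\kappa s}+(V_0-\theta)^2 e^{-2\kappa s}.
\end{equation*}

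The remaining step is purely mechanical: integrate termwise over $[0,T]$, divide by $T$, and use $\frac1T\int_0^T e^{-\kappa s}ds=\frac{1-e^{-\kappa T}}{\kappa T}$ and $\frac1T\int_0^T e^{-2\kappa s}ds=\frac{1-e^{-2\kappa T}}{2\kappa T}$. The constant terms give $\frac{\gamma^2}{2\kappa}+\theta^2$; the two $e^{-2\kappa s}$ contributions combine into $\bigl(\frac{(V_0-\theta)^2}{2\kappa T}-\frac{\gamma^2}{4\kappa^2 T}\bigr)(1-e^{-2\kappa T})$; and the $e^{-\kappa s}$ term produces $\frac{2\theta(V_0-\theta)}{\kappa T}(1-e^{-\kappa T})$, which together are exactly \eqref{contsz}.

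The genuinely hard part is the first assertion, that \eqref{g4} yields no simple closed form here, and this is where I expect the main obstacle to lie. To verify it one substitutes $f(v)=\frac{v^2}{2\gamma}$, $h(v)=\frac{\kappa\theta}{\gamma}v-\frac{\kappa}{\gamma}v^2+\frac{\gamma}{2}$ and $m^2(v)=v^2$ into the five functionals $m_1,\dots,m_5$. Because $f$, $h$ and $m^2$ are all quadratic in $v$, every $m_i$ reduces to a moment or cross-moment of order up to four of the jointly Gaussian pair $(V_s,V_u)$, which are in principle computable by Wick's (Isserlis') theorem from the mean above and the covariance $\cov(V_s,V_u)=\frac{\gamma^2}{2\kappa}(e^{-\kappa|s-u|}-e^{-\kappa(s+u)})$. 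The obstruction is not conceptual but combinatorial: the double time-integrals $\int_t^{t+\Delta}\int_t^{t+\Delta}$ of products of exponentials in $s$, $u$ and $|s-u|$ proliferate into many terms that do not collapse, which is precisely why only the asymptotic expansion of Proposition \ref{asympt_n_SZ} is stated rather than an exact formula.
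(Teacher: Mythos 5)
Your derivation of \eqref{contsz} is correct and coincides with the paper's own proof: both reduce $K^{SZ}_c$ to $\frac1T\int_0^T\E[V_s^2]\,ds$, obtain $\E[V_s^2]$ as $\mathrm{Var}(V_s)+(\E[V_s])^2$ from the explicit Ornstein--Uhlenbeck solution, and integrate termwise. The only (immaterial) divergence is in your commentary on the discrete strike: where you would invoke Isserlis' theorem for the fourth-order cross-moments, the paper computes $\E[V_s^aV_u^b]$ by iterated conditioning on the Markov property, i.e.\ via $\E[V_u\mid V_s]$ and $\E[V_u^2\mid V_s]$ --- both routes yield the same moments.
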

\proof The proof can be found in Appendix \ref{proofSZ}.\hfill$\Box$

\section{Asymptotics\label{S5}}
In the time-homogeneous stochastic volatility model, this section presents asymptotics  for the fair strikes of discrete variance swaps in the Heston, the Hull-White and the Sch\"{o}bel-Zhu models based on the explicit expressions derived in the previous sections \ref{S2}, \ref{S3} and \ref{S3bis}.

The expansions as functions of the number of sampling periods $n$ are given in Propositions \ref{asympt_n_Heston},  \ref{asympt_n_HW} and \ref{asympt_n_SZ} (respectively for the Heston, Hull-White and Sch\"obel-Zhu models). In the Heston model, our results are consistent with Proposition 4.2 of Broadie and Jain \citeyear{BJ08}, in which it is proved that $K^{H}_d(n)=K^{H}_c+\mathcal{O}\(\frac{1}{n}\)$. The expansion below is more precise in that at least  the first leading term in the expansion is given explicitly. See also Theorem 3.8 of Jarrow et al. \citeyear{J12} in a more general context. In particular, Jarrow et al. \citeyear{J12} give a sufficient condition for the convergence of the fair strike of a discrete variance swap to that of a continuously monitored variance swap. In our setting, which is in the absence of jumps, their sufficient condition reduces to $E[\int_0^T m^4(V_s) ds]<\infty$. This latter condition is obviously satisfied in the three examples considered in this paper (the Heston, the Hull-White and the Sch\"obel-Zhu models).

Expansions as a function of the maturity $T$ (for small maturities) are also given in order to complement results of
Keller-Ressel and Muhle-Karbe \citeyear{KM12} (see for example  Corollary $2.7$ which gives qualitative properties of the discretization gap\footnote{See Definition $2.6$ on page $112$ of  Keller-Ressel and Muhle-Karbe \citeyear{KM12}.} as the maturity $T\rightarrow 0$).

\subsection{Heston Model}
We first expand the fair strike of the discrete variance swap with respect to the number of sampling periods $n$.
\begin{pr}(Expansion of the fair strike $K^{H}_d(n)$ w.r.t. $n$) \label{asympt_n_Heston}

 In the Heston model, the expansion of the fair strike of a discrete variance swap, $K_d^H(n)$, is given by
\begin{align}
K^{H}_d (n)&=K^{H}_c+\frac{a_1^{H}}{n} +\mathcal{O}\(\frac{1}{n^2}\),\label{ExpandHeston}
\end{align}
where
$$a_1^{H}=r^2T-rTK_c^H+\(\frac{\gamma(\theta-V_0)}{2\kappa}(1-e^{-\kappa T})-\frac{\theta\gamma T}{2}\)\rho +\(\frac{\theta^2}{4} +\frac{\theta \gamma^2}{8\kappa}\)T+c_1$$
with \begin{equation*}
c_1^{H}=\frac{\left[{\gamma}^{2}\theta-2\kappa (V_0-\theta)^2\right]\left( {e^{-2T\kappa}}-1 \right) +2(V_0-\theta)({e^{-T\kappa}}-1)\left[{\gamma}^{2}({e^{-T\kappa}}-1)-4\kappa\theta\right]}{16{\kappa}^{2}}.\notag
\end{equation*}

\end{pr}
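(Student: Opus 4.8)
The plan is to start from the exact closed-form expression for $K^H_d(n)$ given in \eqref{discrete} of Proposition \ref{p2} and perform a Taylor expansion in the small parameter $1/n$. The key observation is that $n$ enters \eqref{discrete} only through finitely many elementary building blocks, namely $e^{\kappa T/n}$, $e^{-\kappa T/n}$, the prefactor $1/n$, and the ratios such as $\frac{1-e^{\kappa T/n}}{1+e^{\kappa T/n}}$, $\frac{1-e^{-\kappa T/n}}{1}$, and $\frac{1-e^{-\kappa T}}{1-e^{\kappa T/n}}$. First I would expand each of these blocks separately using the standard series $e^{\pm\kappa T/n}=1\pm\frac{\kappa T}{n}+\frac{\kappa^2 T^2}{2n^2}\pm\frac{\kappa^3T^3}{6n^3}+\mathcal{O}(n^{-4})$, retaining enough orders so that, after multiplication by the explicit powers of $n$ sitting in front of each term, I capture the constant term, the $1/n$ term, and control the $1/n^2$ remainder.

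The second step is bookkeeping: I would group the resulting contributions by their order in $1/n$. The claim \eqref{ExpandHeston} asserts that the order-$n^0$ term collapses exactly to $K^H_c$ as given in \eqref{cont}, that the order-$n^{-1}$ term equals $a_1^H$, and that everything else is $\mathcal{O}(1/n^2)$. To verify the leading term I would expand and collect the $n$-independent pieces and check algebraically that they reduce to $\theta+(1-e^{-\kappa T})\frac{V_0-\theta}{\kappa T}$; this is the consistency check with Broadie and Jain's result $K^H_d(n)=K^H_c+\mathcal{O}(1/n)$. For the $1/n$ coefficient I would organize the computation by tracking separately the $\rho$-linear pieces (which should assemble into $\bigl(\frac{\gamma(\theta-V_0)}{2\kappa}(1-e^{-\kappa T})-\frac{\theta\gamma T}{2}\bigr)\rho$), the interest-rate pieces $r^2T-rTK^H_c$ (which one could alternatively read off directly from the general formula \eqref{formr}), the $\theta$-quadratic and $\theta\gamma^2$ pieces $\bigl(\frac{\theta^2}{4}+\frac{\theta\gamma^2}{8\kappa}\bigr)T$, and finally the remaining correlation-independent exponential terms that constitute $c_1^H$.

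A useful simplification is that the $r$-dependence is already known in closed form from Remark \ref{INTR}: equation \eqref{formr} tells us $K^H_d(n)=b^H(n)-\frac{T}{n}K^H_c\,r+\frac{T}{n}r^2$, so the $r^2T-rTK^H_c$ contribution to $a_1^H$ is immediate and I need only expand $b^H(n)$, the $r$-free part of \eqref{discrete}. This reduces the amount of algebra and isolates the genuinely new content in the coefficients $c_1^H$ and the $\rho$-term. I expect the main obstacle to be purely computational rather than conceptual: the expression \eqref{discrete} is algebraically heavy, several distinct blocks each contribute to both the $n^0$ and $n^{-1}$ orders, and near-cancellations occur (the whole point is that the $n^0$ terms must conspire to give precisely $K^H_c$, leaving a clean $1/n$ coefficient). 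Keeping the expansions to consistent order across all blocks, and correctly handling the block $\frac{1-e^{-\kappa T}}{1-e^{\kappa T/n}}$ whose denominator vanishes like $-\kappa T/n$ and therefore shifts the effective order of its numerator, will be the steps most prone to error; I would guard against this by verifying the final expansion numerically against \eqref{discrete} for representative parameter values and several values of $n$, exactly as the authors report having cross-checked \eqref{discrete} itself against Broadie and Jain's Table~5.
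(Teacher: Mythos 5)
Your proposal is correct and follows exactly the route the paper takes: its proof of Proposition \ref{asympt_n_Heston} is simply a direct Taylor expansion in $1/n$ of the closed-form expression \eqref{discrete}, which is precisely what you describe (your extra care with the $r$-dependence via \eqref{formr} and with the block $\frac{1-e^{-\kappa T}}{1-e^{\kappa T/n}}$ is sensible but not a different method). No gaps; the work is, as you say, purely computational bookkeeping.
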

\proof This proposition is a straightforward expansion from \eqref{discrete} in Proposition \ref{p2}.\hfill$\Box$



We know that $K^{H}_d(n)=b^{H}(n)+\frac{T}{n}r(r-K^{H}_c)$ from \eqref{formr} in Remark \ref{INTR}. It is thus clear that $a_1^{H}$ contains all the terms in the risk-free rate $r$ and thus that all the higher terms in the expansion \eqref{ExpandHeston} with respect to $n$ are independent of the risk-free rate.

\begin{re} \label{RH}The first term in the expansion \eqref{ExpandHeston}, $a_1^{H}$, is a linear function of $\rho$. Observe that the coefficient in front of $\rho$, $\frac{\gamma(\theta-V_0)}{2\kappa}(1-e^{-\kappa T})-\frac{\theta\gamma T}{2}$ is negative,\footnote{This can be easily seen from the fact that for all $x>0$, $(\theta-V_0)(1-e^{-x})-\theta x\le \theta(1-e^{-x}-x)<0$, and note that here $x=\kappa T>0$.} so that $a_1^{H}$ is always a decreasing function of $\rho$. We have that $$
a_1^{H} \ge 0\quad\Longleftrightarrow\quad \rho\le \rho_0^{H}
$$
where
$\rho^{H}_0=\frac{r^2T-rTK_c^H +\(\frac{\theta^2}{4} +\frac{\theta \gamma^2}{8\kappa}\)T+c_1^{H}}{-\(\frac{\gamma(\theta-V_0)}{2\kappa}(1-e^{-\kappa T})-\frac{\theta\gamma T}{2}\)}.$
\end{re}%

%
%
%

\begin{pr}(Expansion of the fair strike for small maturity) \label{asympt_T_Heston}

In  the Heston model, $K^{H}_d(n)$ can be expanded when $T\rightarrow 0$ as
\be K^{H}_d(n)=V_0+b_1^H T+b_2^H T^2+\mathcal{O}\left(T^3\right)\label{expandHeston}\ee
where
\begin{align*}
b_1^H
&=\frac{\kappa(\theta-V_0)}{2}+\frac{1}{4n}\((V_0-2r)^2-2\rho V_0\gamma \)\\
b_2^H &=\frac{\kappa^2 (V_0 -\theta)}{6}+\frac{(V_0 -\theta)\kappa (\gamma \rho +2r-V_0)+\frac{\gamma^2 V_0}{2} }{4n}+\frac{\gamma\rho\kappa(V_0 +\theta)-\frac{\gamma^2 V_0}{2} }{12n^2}.\notag
\end{align*}
Note also that
$K^{H}_c= V_0 +\frac{\kappa}{2}\, \left( \theta-V_0 \right) T+\frac{\kappa}{6}
^{2} \left(  V_0 -\theta \right) {T}^{2}+\mathcal{O} \left( {T}^{3} \right) $ and thus
$$K^{H}_d(n)-K^{H}_c=\frac{1}{4n}\((V_0-2r)^2-2\rho V_0\gamma\right)T+\mathcal{O}(T^2).$$
\end{pr}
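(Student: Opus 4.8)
The plan is to obtain \eqref{expandHeston} by a direct Taylor expansion of the closed-form strike \eqref{discrete} in the maturity $T$, holding the number of sampling periods $n$ fixed. Write $K^H_d(n)=\frac{1}{8n\kappa^3 T}\,B(T)$, where $B(T)$ denotes the expression inside the braces of \eqref{discrete}. Since the prefactor carries a single power of $1/T$, it suffices to expand $B(T)$ through order $T^3$ in order to read off $K^H_d(n)$ through order $T^2$. The elementary building blocks are the analytic exponentials $e^{-2\kappa T}$, $e^{-\kappa T}$ and $e^{\pm\kappa T/n}$, together with the quotient $\frac{1-e^{\kappa T/n}}{1+e^{\kappa T/n}}$ (whose denominator tends to $2$), the genuinely indeterminate ($0/0$) quotient $\frac{1-e^{-\kappa T}}{1-e^{\kappa T/n}}$, and the vanishing factor $1-e^{-\kappa T/n}$. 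Because $n$ is treated as a fixed parameter, both $T$ and $T/n$ are small, and the $n$-dependence must be tracked explicitly throughout.

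The first substantive step is to expand these blocks to sufficient order. Expanding numerator and denominator separately gives $\frac{1-e^{\kappa T/n}}{1+e^{\kappa T/n}}=-\frac{\kappa T}{2n}\left(1+\mathcal{O}(T)\right)$, while for $\frac{1-e^{-\kappa T}}{1-e^{\kappa T/n}}$ both numerator and denominator vanish linearly, so the quotient tends to $-n$ with corrections of the shape $-n\left(1+c_1T+c_2T^2+\cdots\right)$. The key point to watch is the order to which each block must be carried: the summands of $B(T)$ carry explicit factors of $n$ and $n^2$, whose leading small-$T$ behaviour cancels those powers of $n$, while the subleading terms are precisely what generate the $\frac1n$ and $\frac1{n^2}$ contributions to $b_1^H$ and $b_2^H$. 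Concretely, the $\frac1{n^2}$ piece of $b_2^H$ originates from the cubic term of $1-e^{-\kappa T/n}=\frac{\kappa T}{n}-\frac{(\kappa T/n)^2}{2}+\frac{(\kappa T/n)^3}{6}-\cdots$ inside the $n^2$-weighted summand $-2n^2\theta\gamma(\gamma-4\rho\kappa)(1-e^{-\kappa T/n})$, which after division by $8n\kappa^3 T$ leaves an $\frac1{n^2}$ term of order $T^2$.

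Next I would substitute all these expansions into $B(T)$ and collect the coefficients of $T^0,T^1,T^2,T^3$. Two identities serve as internal consistency checks: the constant term of $B(T)$ must vanish (otherwise $K^H_d(n)$ would blow up as $T\to0$), and the coefficient of $T^1$ must equal $8n\kappa^3 V_0$, reproducing the limit $K^H_d(n)\to V_0$. Dividing the coefficients of $T^2$ and $T^3$ by $8n\kappa^3$ then yields $b_1^H$ and $b_2^H$ in the stated form. A useful bookkeeping simplification is to invoke \eqref{formr} of Remark \ref{INTR}: since $K^H_d(n)=b^H(n)-\frac{T}{n}K^H_c\,r+\frac{T}{n}r^2$ isolates the entire $r$-dependence, one may carry out the heavy expansion of $b^H(n)$ at $r=0$ and recover the $r$-terms simply by expanding $K^H_c$ from \eqref{cont}; this both shortens the algebra and independently checks the coefficients of $r$ appearing in $b_1^H$ and $b_2^H$.

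Finally, for the comparison statement I would expand $K^H_c$ directly from \eqref{cont} using $1-e^{-\kappa T}=\kappa T-\frac{(\kappa T)^2}{2}+\frac{(\kappa T)^3}{6}-\cdots$, which gives $K^H_c=V_0+\frac{\kappa(\theta-V_0)}{2}T+\frac{\kappa^2(V_0-\theta)}{6}T^2+\mathcal{O}(T^3)$. Subtracting from \eqref{expandHeston}, the leading $V_0$ and the $\frac{\kappa(\theta-V_0)}{2}T$ piece of $b_1^H$ cancel exactly, leaving $K^H_d(n)-K^H_c=\frac1{4n}\left((V_0-2r)^2-2\rho V_0\gamma\right)T+\mathcal{O}(T^2)$. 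The only real difficulty is organisational rather than conceptual: keeping every term through order $T^3$ while correctly propagating the $n$-weighting of the indeterminate quotient, so that the $\frac1n$ and $\frac1{n^2}$ structure emerges with the right constants.
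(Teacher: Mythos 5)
Your proposal is correct and follows exactly the paper's route: the paper's proof is precisely a direct Taylor expansion of the closed form \eqref{discrete} in $T$ with $n$ fixed, and your treatment of the prefactor $1/(8n\kappa^3T)$, the indeterminate quotient tending to $-n$, and the consistency checks ($B(0)=0$, linear coefficient $8n\kappa^3V_0$) is sound. One small imprecision in an aside: the $1/n^2$ part of $b_2^H$ does not come solely from the $n^2$-weighted summand $-2n^2\theta\gamma(\gamma-4\rho\kappa)(1-e^{-\kappa T/n})$ — the term $4(V_0-\theta)\kappa T\gamma(\gamma-2\rho\kappa)\frac{1-e^{-\kappa T}}{1-e^{\kappa T/n}}$ also contributes at order $T^2/n^2$ through the second-order correction of the quotient, and both pieces are needed to recover $\frac{\gamma\rho\kappa(V_0+\theta)-\gamma^2V_0/2}{12n^2}$; your general prescription of collecting all coefficients would of course capture this.
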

\proof This proposition is a straightforward expansion from \eqref{discrete} in Proposition \ref{p2}.\hfill$\Box$

Proposition \ref{asympt_T_Heston} is consistent with  Corollary 2.7 [b] on page $113$ of Keller-Ressel and Muhle-Karbe \citeyear{KM12}, where it is clear that the limit of $K_d(n)-K_c$ is $0$ when $T\rightarrow 0$.

Notice that in the case $\rho\le0$, in the Heston model, $K^{H}_d(n)$ is non-negative and decreasing in $n$ as the maturity $T$ goes to 0. However, this property cannot be generalized to all correlation levels as it depends on the sign of $(V_0-2r)^2-2\gamma V_0\rho$.

\begin{pr}(Expression of the fair strike w.r.t. $\gamma$) \label{asympt_gamma_Heston}

In  the Heston model, $K^{H}_d(n)$ is a quadratic function of $\gamma$:
\be K^{H}_d(n)=\frac{1}{8n\kappa^3T}\left(h_0^H+h_1^H\gamma+h_2^H \gamma^2\right),\label{expandHestonGamma}\ee
where
\begin{align*}
h_0^H&= 2\,n\kappa\, \left( V_0-\theta \right) ^{2} \left( {
e^{-2\,\kappa\,T}}-1 \right)  \frac{ 1-{e^{{\frac {\kappa\,T}{n}}}}}{ 1+{e^{{\frac {\kappa\,T}{n}}}}}+2\,
\kappa\,T \left( {\kappa}^{2}T \left( \theta-2\,r \right) ^{2}+4\,{
\kappa}^{2}n\theta \right) \\
&\quad\quad+ 4\left( V_0-\theta \right)
 \left( 2\,{\kappa}^{2}n+{\kappa}^{2}T \left( \theta-2\,r \right)
 \right)  \left( 1-{e^{-\kappa\,T}} \right),\\
h_1^H&=8\rho\kappa \left(n\theta (n-n e^{-\frac{\kappa T}{n}}-\kappa T) -\left( V_0-\theta \right)\left( n
 \left( 1-{e^{-\kappa\,T}} \right) + \kappa T \frac{1-{e^{-\kappa\,T}}}
{1-{e^{{\frac {\kappa\,T}{n}}}}}\right) \right),\\
h_2^H&=
  n \left( \theta-2\,V_0 \right)  \left( {e^{-2\,
\kappa\,T}}-1 \right)  \frac{1-{e^{{\frac {\kappa\,T}{n}}}}}{1+{e^{{\frac {\kappa\,T}{n}}}}}-2\,{n}^{2}\theta
\, \left( 1-{e^{-{\frac {\kappa\,T}{n}}}} \right) \\&\quad\quad+ 4\left( V_0
-\theta \right) \left( n -n e^{-\kappa\,T} +  \kappa\,T \frac{1-{e^{-\kappa\,T}}}
{1-{e^{{\frac {\kappa\,T}{n}}}}}\right) +2\,\kappa\,Tn
\theta.
\end{align*}
\end{pr}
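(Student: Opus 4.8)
The plan is to start from the closed-form expression \eqref{discrete} for $K^H_d(n)$ already established in Proposition \ref{p2} and simply regroup its terms by powers of $\gamma$. First I would inspect where $\gamma$ enters \eqref{discrete}. The overall prefactor $\frac{1}{8n\kappa^3 T}$ is $\gamma$-free, so it suffices to look at the five bracketed summands. In the first, $\gamma$ appears only through $\gamma^2(\theta-2V_0)$ (the companion term $2\kappa(V_0-\theta)^2$ being $\gamma$-free); in the second through $4\kappa^2-4\rho\kappa\gamma+\gamma^2$; in the third through $2\kappa^2+\gamma^2-2\rho\kappa\gamma$; in the fourth through $\gamma(\gamma-4\rho\kappa)$; and in the fifth through $\gamma(\gamma-2\rho\kappa)$. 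Each of these is a polynomial in $\gamma$ of degree at most two, so $K^H_d(n)$ is manifestly a quadratic in $\gamma$, which already gives the form \eqref{expandHestonGamma}.

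The second step is to read off the three coefficients. For the constant term $h_0^H$ I would collect all $\gamma$-free contributions: the $2n\kappa(V_0-\theta)^2(e^{-2\kappa T}-1)\frac{1-e^{\kappa T/n}}{1+e^{\kappa T/n}}$ piece from the first summand, the $\kappa^2 T(\theta-2r)^2+4\kappa^2 n\theta$ piece (times $2\kappa T$) from the second, and the $2n\kappa^2+\kappa^2 T(\theta-2r)$ piece (times $4(V_0-\theta)(1-e^{-\kappa T})$) from the third. For $h_2^H$ I would gather the $\gamma^2$ contributions from all five summands, keeping careful track of the exponential factors $\frac{1-e^{\kappa T/n}}{1+e^{\kappa T/n}}$, $(1-e^{-\kappa T/n})$ and $\frac{1-e^{-\kappa T}}{1-e^{\kappa T/n}}$ attached to each. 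The structurally interesting observation concerns the linear coefficient $h_1^H$: every term linear in $\gamma$ arises solely from a cross-term of the shape $\rho\kappa\gamma$ (namely $-4\rho\kappa\gamma$, $-2\rho\kappa\gamma$, $-4\rho\kappa$, $-2\rho\kappa$), so each linear contribution carries a common factor $8\rho\kappa$. Factoring it out leaves the $\gamma$-free bracket $n\theta(n-ne^{-\kappa T/n}-\kappa T)-(V_0-\theta)\bigl(n(1-e^{-\kappa T})+\kappa T\frac{1-e^{-\kappa T}}{1-e^{\kappa T/n}}\bigr)$, after using $n^2\theta(1-e^{-\kappa T/n})-\kappa Tn\theta=n\theta(n-ne^{-\kappa T/n}-\kappa T)$; this reproduces $h_1^H$.

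There is no genuine analytic obstacle here, since the proposition is a pure rearrangement of an already-proved closed form; the only real work is bookkeeping. The step most prone to error is matching each exponential factor to the correct power of $\gamma$ and performing the small algebraic simplifications that compress the raw sums into the displayed forms of $h_0^H$, $h_1^H$ and $h_2^H$. As a consistency check I would verify that $h_0^H/(8n\kappa^3 T)$ coincides with the $\gamma\to 0$ limit of $K^H_d(n)$ and that the assembled quadratic $\frac{1}{8n\kappa^3 T}(h_0^H+h_1^H\gamma+h_2^H\gamma^2)$ reproduces a numerical value computed directly from \eqref{discrete}.
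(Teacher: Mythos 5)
Your proposal is correct and matches the paper's (implicit) proof: the paper treats this proposition, like its neighbours, as a straightforward rearrangement of the closed form \eqref{discrete} from Proposition \ref{p2}, and your term-by-term collection of the $\gamma^0$, $\gamma^1$ and $\gamma^2$ contributions from the five bracketed summands reproduces $h_0^H$, $h_1^H$ and $h_2^H$ exactly, including the factoring of $8\rho\kappa$ from the linear part.
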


Proposition \ref{asympt_gamma_Heston} shows that the discrete fair strike in the Heston model is a quadratic function of the volatility  of variance $\gamma$. From Figure \ref{F5}, we observe that the discrete fair strikes evolve in a parabolic shape as $\gamma$ varies.

\subsection{Hull-White Model}

\begin{pr}(Expansion of $K^{HW}_d(n)$ w.r.t. $n$) \label{asympt_n_HW}

 In  the Hull-White model, the expansion of the fair strike of  the discrete variance swap, $K^{HW}_d(n)$, is given by
\be K^{HW}_d(n)=K^{HW}_c+\frac{a_1^{HW}}{n}+\frac{a_2^{HW}}{n^2}+\frac{a_3^{HW}}{n^3} +\mathcal{O}\left(\frac{1}{n^4}\right)\label{expandHW}\ee
where
\begin{align*}
a_1^{HW}&
=r^2 T -r T K^{HW}_c +\frac{V_0^2}{4} \frac{e^{(2\mu+\sigma^2)T}-1}{2\mu +\sigma^2}
-\frac{4\rho \sigma V_0^{\frac{3}{2}}}{3} \frac{e^{\frac{3}{8}(4\mu+\sigma^2)T}-1}{4\mu+\sigma^2},\\
a_2^{HW}&
 =-\frac{V_0^2 \sigma^2 T}{24}\frac{e^{(2\mu+\sigma^2)T}-1}{2\mu +\sigma^2}-\frac{\rho V_0^{\frac{3}{2}} \sigma T (4\mu-3\sigma^2)}{36}\frac{e^{\frac{3}{8}(4\mu+\sigma^2)T}-1}{4\mu+\sigma^2},
\\a_3^{HW}&
=-\frac{\mu T^2 V_0^2 (\mu+\sigma^2)}{48} \frac{e^{(2\mu+\sigma^2)T}-1}{2\mu+\sigma^2}+\frac{\mu T^2 \rho \sigma V_0^{\frac{3}{2}} (4\mu+3\sigma^2)}{72}\frac{e^{\frac{3}{8}(4\mu+\sigma^2)T}-1}{4\mu+\sigma^2}.\notag
\end{align*}
\end{pr}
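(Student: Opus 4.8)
The plan is to expand the closed-form expression \eqref{discretehw} for $K^{HW}_d(n)$ established in Proposition \ref{pHW} directly as a power series in $x:=1/n$ about $x=0$. Inspecting \eqref{discretehw}, the dependence on $n$ enters in only four places: the two elementary terms $\tfrac{r^2T}{n}$ and $\(1-\tfrac{rT}{n}\)K^{HW}_c$, and the two transcendental terms whose $n$-dependence is carried by ratios of the form $\dfrac{e^{\alpha T/n}-1}{e^{\beta T/n}-1}$. All the substantive work lies in expanding these two ratios. I would first dispose of the elementary part by writing $\tfrac{r^2T}{n}+\(1-\tfrac{rT}{n}\)K^{HW}_c=K^{HW}_c+\tfrac{T}{n}r\(r-K^{HW}_c\)$, which supplies the leading term $K^{HW}_c$ together with the full $r$-dependent contribution $r^2T-rTK^{HW}_c$ to $a_1^{HW}$, in agreement with the general structure \eqref{formr} of Remark \ref{INTR}; this part contributes nothing at orders $n^{-2}$ and $n^{-3}$.

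Next I would record the single expansion that drives everything. For constants $\alpha,\beta$, writing $e^{cTx}-1=cTx\bigl(1+\tfrac{cTx}{2}+\tfrac{(cTx)^2}{6}+\cdots\bigr)$ and inverting the denominator as a geometric series yields
\begin{equation}
\frac{e^{\alpha Tx}-1}{e^{\beta Tx}-1}=\frac{\alpha}{\beta}\left(1+\frac{(\alpha-\beta)T}{2}\,x+\frac{(\alpha-\beta)(2\alpha-\beta)T^2}{12}\,x^2+\mathcal{O}(x^3)\right).
\label{ratioexpHW}
\end{equation}
I would apply \eqref{ratioexpHW} to the two ratios appearing in \eqref{discretehw}: in the $V_0^2$-term with $(\alpha,\beta)=(\mu,\,2\mu+\sigma^2)$, and in the $\rho$-term with $(\alpha,\beta)=\bigl(\mu,\,\tfrac{3}{8}(4\mu+\sigma^2)\bigr)$.

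The key structural observation is that the zeroth-order terms cancel exactly against the two remaining $n$-independent summands of \eqref{discretehw}. Indeed, the prefactor of the $V_0^2$-ratio multiplied by $\tfrac{\alpha}{\beta}=\tfrac{\mu}{2\mu+\sigma^2}$ equals $-\tfrac{V_0^2(e^{(2\mu+\sigma^2)T}-1)}{2T(\mu+\sigma^2)(2\mu+\sigma^2)}$, the negative of the standalone $V_0^2$-summand; and the prefactor of the $\rho$-ratio multiplied by $\tfrac{\alpha}{\beta}=\tfrac{8\mu}{3(4\mu+\sigma^2)}$ equals $+\tfrac{64\rho(e^{3(4\mu+\sigma^2)T/8}-1)V_0^{3/2}\sigma}{3T(4\mu+\sigma^2)(4\mu+3\sigma^2)}$, the negative of the last summand. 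This confirms that $K^{HW}_d(n)\to K^{HW}_c$ and that there is no $\mathcal{O}(1)$ correction beyond $K^{HW}_c$, in accordance with \eqref{expandHW}.

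Finally I would collect the coefficients of $x$, $x^2$ and $x^3$ produced by \eqref{ratioexpHW} --- carried one order further than displayed above, to $\mathcal{O}(x^3)$, for the $a_3^{HW}$ term --- and add the elementary $r$-contribution to the coefficient of $x$. Matching the resulting powers of $1/n$ against \eqref{expandHW} then yields the stated $a_1^{HW}$, $a_2^{HW}$ and $a_3^{HW}$. The only obstacle is bookkeeping: the second- and third-order coefficients in \eqref{ratioexpHW} and their subsequent algebraic simplification are lengthy but entirely mechanical, and are readily confirmed with a computer algebra system.
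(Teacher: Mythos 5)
Your proposal is correct and takes essentially the same route as the paper, which simply expands the closed-form expression \eqref{discretehw} in powers of $1/n$; your ratio expansion, the cancellation of the zeroth-order terms against the $n$-independent summands, and the resulting coefficients $a_1^{HW}$, $a_2^{HW}$, $a_3^{HW}$ all check out. The only difference is that you make explicit the mechanics that the paper delegates to a computer algebra system.
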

\proof This proposition is a straightforward expansion from \eqref{discretehw} in Proposition \ref{pHW}.\hfill$\Box$

Observe that $K^{HW}_d(n)=b^{HW}(n)-{\frac{K^{HW}_cT}{n}}r+\frac{T}{n}r^2$ where  $b^{HW}(n)=K^{HW}_d(r=0)>K^{HW}_c$ is independent of $r$.


If we neglect higher order terms in the expansion \eqref{expandHW}, we observe that the position of the fair strike of the discrete variance swap with respect to the fair strike of the continuous variance swap is driven by the sign of $a_1$ and we have the following observation.

\begin{re} \label{RHW}The first term in the expansion \eqref{expandHW}, $a_1^{HW}$, is a linear function of $\rho$. $$a_1^{HW}\ge 0\quad\Longleftrightarrow\quad \rho\le\rho_0^{HW}$$
where
$\rho_0^{HW}
=\frac{3(4\mu+\sigma^2)\(r^2 T -r T K^{HW}_c +\frac{V_0^2}{4} \frac{e^{(2\mu+\sigma^2)T}-1}{2\mu +\sigma^2}\)}{4\sigma V_0^{\frac{3}{2}}(e^{\frac{3}{8}(4\mu+\sigma^2)T}-1)}>0.
$
\end{re}
$\rho_0^{HW}$ can take values strictly larger than 1 as it appears clearly in the right panel of Figure \ref{F3}. In this latter case, the fair strike of the discrete variance swap is larger than the fair strike of the continuous variance swap for all levels of correlation and for  sufficiently high values of $n$. The minimum value of $K_d^{HW}(n)$ as a function of $r$ is obtained when $r=r^*=\frac{K_c^{HW}}{2}$. After replacing $r$ by $r^*$ in the expression of $\rho_0^{HW}$,  $\rho_0^{HW}$ can easily be shown to be positive.\footnote{It reduces to studying the sign of $\frac{e^{(2\mu+\sigma^2) T}-1}{(2\mu+\sigma^2) T}-\frac{(e^{\mu T}-1)^2}{\mu^2T^2}$. It is an increasing function of $\sigma$, so it is larger than $\frac{e^{2\mu T}-1}{2\mu T}-\frac{(e^{\mu T}-1)^2}{\mu^2T^2}$, which is always positive because its minimum is 0 obtained  when $\mu T=0$. }

\newpage
\begin{pr}(Expansion of  $K^{HW}_d(n)$ for small maturity) \label{asympt_T_HW}

In the Hull-White model, $K^{HW}_d(n)$ can be expanded when $T\rightarrow 0$ as
\be K^{HW}_d(n)=V_0+b_1^{HW} T+b_2^{HW} T^2+\mathcal{O}\left(T^3\right),\label{expandHW2}\ee
where
\begin{align*}
b_1^{HW}
&=\frac{ V_0 \,\mu}{2}+\frac{1}{4n}\((V_0-2r)^2-{2\rho{ V_0 }^{3/2}\sigma}\right),\\
b_2^{HW}
&=\frac{ V_0\mu^{2}}{6}+\frac{V_0}{4n}\({\frac {{\sigma}^{2}{ V_0
}}{2}}-{\frac {3\rho\,{ V_0 }^{1/2}{\sigma}(\sigma^2+4\mu)}{8}}+\mu(V_0-2r)\)\\
&\quad\quad\quad+{\frac {{ V_0 }^{3/2}{\sigma}\left(\rho(3\sigma^2-4\mu)-
4\sigma\sqrt{V_0}\right)}{96{n}^{2}}}.\notag
\end{align*}
Note also that
$K^{HW}_c= V_0 +\frac{V_0\mu}{2} T+ \frac{V_0{\mu}^{2}}{6} {T}^{2}+\mathcal{O} \left( {T}^{3} \right), $
and thus
$$K^{HW}_d(n)-K^{HW}_c=\frac{1}{4n}\((V_0-2r)^2-{2\rho{ V_0 }^{3/2}\sigma}\right)T+\mathcal{O}(T^2).$$
\end{pr}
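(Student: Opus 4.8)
The plan is to derive \eqref{expandHW2} directly from the closed-form expression \eqref{discretehw} for $K^{HW}_d(n)$ by Taylor expanding in the maturity $T$ about $T=0$, holding the sampling number $n$ fixed. As a preliminary step I would record the small-$T$ expansion of the continuous strike from \eqref{contHW}, namely $K^{HW}_c = V_0 + \frac{V_0\mu}{2}T + \frac{V_0\mu^2}{6}T^2 + \mathcal{O}(T^3)$, obtained by expanding $e^{\mu T}-1$. Substituting this into the term $(1 - \frac{rT}{n})K^{HW}_c$ of \eqref{discretehw} and adding the term $\frac{r^2 T}{n}$ already accounts for the constant $V_0$, the pieces $\frac{V_0\mu}{2}T$ and $\frac{V_0\mu^2}{6}T^2$ inherited from $K^{HW}_c$, and all of the $r$-dependent contributions: at order $T$ these assemble into $\frac{1}{4n}(4r^2 - 4rV_0)T$, which is exactly the $r$-dependent portion of $\frac{(V_0-2r)^2}{4n}T$ in $b_1^{HW}$.

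The substantive work is to expand the four remaining terms of \eqref{discretehw}, which carry the volatility-of-volatility $\sigma$ and correlation $\rho$ dependence. I would group them into the $\rho$-free pair (the two terms with prefactor $V_0^2$) and the $\rho$-proportional pair (the two terms with prefactor $\rho V_0^{3/2}\sigma$). Within each pair, one term looks singular because of the factor $\frac{1}{T}$ together with a ratio of the form $\frac{e^{aT/n}-1}{e^{bT/n}-1}$; the key observation is that the leading ($T^0$) contributions of the two terms in each pair cancel, leaving a finite expansion that starts at order $T$. Concretely, writing $A = 2\mu+\sigma^2$, the $\rho$-free pair reduces, after expanding $\frac{e^{AT}-1}{T}$ and the ratio $\frac{e^{\mu T/n}-1}{e^{AT/n}-1} = \frac{\mu}{A}\left(1 + \frac{\mu - A}{2}\frac{T}{n} + \mathcal{O}(T^2)\right)$, to $\frac{V_0^2}{4n}T + \mathcal{O}(T^2)$, using $\mu - A = -(\mu+\sigma^2)$; this supplies the remaining $\frac{V_0^2}{4n}T$ of $b_1^{HW}$. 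An identical computation on the $\rho$-proportional pair, with $B = 4\mu+\sigma^2$ and the ratio $\frac{e^{\mu T/n}-1}{e^{3BT/(8n)}-1}$, yields $-\frac{\rho V_0^{3/2}\sigma}{2n}T + \mathcal{O}(T^2)$, completing $b_1^{HW}$.

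To obtain $b_2^{HW}$ I would carry every one of these expansions one order further, tracking each exponential to order $T^2$ (equivalently, each ratio $\frac{e^{aT/n}-1}{e^{bT/n}-1}$ to order $T^2/n^2$ via the quotient of the series $1 + \frac{a}{2}\frac{T}{n} + \frac{a^2}{6}\frac{T^2}{n^2} + \cdots$). Collecting the order-$T^2$ coefficients from the $r$- and $\mu$-terms, from the $V_0^2$ pair, and from the $\rho V_0^{3/2}\sigma$ pair then reproduces the stated $b_2^{HW}$, and subtracting the expansion of $K^{HW}_c$ gives the final difference $K^{HW}_d(n)-K^{HW}_c = \frac{1}{4n}\left((V_0-2r)^2 - 2\rho V_0^{3/2}\sigma\right)T + \mathcal{O}(T^2)$. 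The only real obstacle is the \emph{bookkeeping}: the order-$T^2$ coefficient mixes contributions weighted by $1$, $\frac{1}{n}$ and $\frac{1}{n^2}$, so one must keep the ratio expansions to second order and verify that the genuinely singular $\frac{1}{T}$ pieces cancel before truncation. Since the result is a purely mechanical (if tedious) expansion of a known closed form, this step is most safely confirmed by a symbolic algebra check, consistent with the implementation already used for the other formulas in the paper.
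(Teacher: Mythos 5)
Your proposal is correct and follows exactly the paper's route: the paper's proof is simply a direct Taylor expansion in $T$ of the closed form \eqref{discretehw} (carried out symbolically), and your detailed bookkeeping — the cancellation of the $T^0$ contributions within the $V_0^2$ pair and within the $\rho V_0^{3/2}\sigma$ pair, and the resulting order-$T$ coefficients $\frac{V_0^2}{4n}$ and $-\frac{\rho V_0^{3/2}\sigma}{2n}$ via $\mu-(2\mu+\sigma^2)=-(\mu+\sigma^2)$ and $\mu-\frac{3}{8}(4\mu+\sigma^2)=-\frac{4\mu+3\sigma^2}{8}$ — checks out and correctly assembles $b_1^{HW}$.
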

\proof This proposition is a straightforward expansion from \eqref{discretehw} in Proposition \ref{pHW}.\hfill$\Box$

Note that the expansion for small maturities in the Hull White model is similar to the one in the Heston model given in Proposition \ref{asympt_T_Heston}.

\begin{pr}(Expansion of $K^{HW}_d(n)$ w.r.t. $\sigma$) \label{asympt_sigma_HW}

In  the Hull-White model, the fair strike of a discrete variance swap, $K^{HW}_d(n)$, verifies
\be K^{HW}_d(n)=h_0^{HW}+h_1^{HW}\sigma+\mathcal{O}(\sigma^2),\label{expandHWGamma}\ee
where
\begin{align*}
h_0^{HW}&= {\frac {{r}^{2}T}{n}}+ \left( 1-{\frac {rT}{n}} \right) V_0\,
 \frac{{e^{T\mu}}-1}{T\mu}-\frac{{V_0}^{2}}{2}
 \frac{{e^{2\,T\mu}}-1}{ {e^{2\,{\frac {T\mu}{n}}}}-1}  \frac{{e^{{\frac {T\mu}{n}}}}-1}{T\mu^2} +{\frac {{V_0}^{2} \left( {e^{2\,T\mu}}-1
 \right) }{4T{\mu}^{2}}},\\
 h_1^{HW}&=2\rho \frac{ {e^{3/2\,T\mu}}-1}{ {e^{3/2{
\frac {T\mu}{n}}}}-1 } {V_0}^{3/2}\, \frac{{e^
{{\frac {T\mu}{n}}}}-1}{T\mu^2} -{\frac {4\rho \left( {e^{3/2
T\mu}}-1 \right) {V_0}^{3/2}}{3T{\mu}^{2}}}.
\end{align*}
\end{pr}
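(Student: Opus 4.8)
The plan is to expand the closed-form expression \eqref{discretehw} for $K^{HW}_d(n)$ from Proposition \ref{pHW} directly as a power series in $\sigma$ around $\sigma=0$, reading off the first two coefficients. The first step is to record how each of the six summands of \eqref{discretehw} depends on $\sigma$. The first two summands, $\frac{r^2T}{n}$ and $(1-\frac{rT}{n})K^{HW}_c$, are completely independent of $\sigma$, since by \eqref{contHW} we have $K^{HW}_c=\frac{V_0}{T\mu}(e^{\mu T}-1)$, which does not involve $\sigma$. The third and fourth summands depend on $\sigma$ only through the combinations $2\mu+\sigma^2$ and $\mu+\sigma^2$ that occur in the exponents and denominators, so each is an even function of $\sigma$. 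The fifth and sixth summands each carry an explicit prefactor $\sigma$ (and a factor $\rho$), multiplied by expressions depending on $\sigma$ only through $4\mu+\sigma^2$ and $4\mu+3\sigma^2$; hence each of these two is an odd function of $\sigma$.

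The key structural observation I would exploit is therefore that $K^{HW}_d(n)$, as a function of $\sigma$, splits as $A(\sigma^2)+\sigma\,\rho\,B(\sigma^2)$ for two functions $A,B$ that are analytic at $\sigma=0$. Analyticity at $\sigma=0$ must be checked: the only possible obstructions are the denominators $2\mu+\sigma^2$, $\mu+\sigma^2$, $4\mu+\sigma^2$, $4\mu+3\sigma^2$, $e^{(2\mu+\sigma^2)T/n}-1$ and $e^{3(4\mu+\sigma^2)T/(8n)}-1$; at $\sigma=0$ these equal $2\mu$, $\mu$, $4\mu$, $4\mu$, $e^{2\mu T/n}-1$ and $e^{3\mu T/(2n)}-1$, all nonzero for $\mu\neq0$. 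Each summand therefore extends analytically across $\sigma=0$, and the even/odd decomposition gives at once
\[
K^{HW}_d(n)=A(0)+\sigma\,\rho\,B(0)+\mathcal{O}(\sigma^2),
\]
so that $h_0^{HW}=A(0)$ and $h_1^{HW}=\rho\,B(0)$.

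It then remains only to read off the two coefficients. To obtain $h_0^{HW}$ I would set $\sigma=0$ in the first four summands: the first two give $\frac{r^2T}{n}+(1-\frac{rT}{n})V_0\frac{e^{T\mu}-1}{T\mu}$, while the third and fourth give $-\frac{V_0^2}{2}\frac{e^{2T\mu}-1}{e^{2T\mu/n}-1}\frac{e^{T\mu/n}-1}{T\mu^2}$ and $\frac{V_0^2(e^{2T\mu}-1)}{4T\mu^2}$, which together are exactly the stated $h_0^{HW}$. To obtain $h_1^{HW}$ I would divide the fifth and sixth summands by $\sigma$ and then set $\sigma=0$: using $\tfrac{3}{8}(4\mu+\sigma^2)T\big|_{\sigma=0}=\tfrac32\mu T$, the fifth term contributes $2\rho\frac{e^{3T\mu/2}-1}{e^{3T\mu/(2n)}-1}V_0^{3/2}\frac{e^{T\mu/n}-1}{T\mu^2}$ and the sixth contributes $-\frac{4\rho(e^{3T\mu/2}-1)V_0^{3/2}}{3T\mu^2}$, matching the stated $h_1^{HW}$.

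Since everything reduces to substituting $\sigma=0$ into explicit elementary functions, no genuine difficulty arises. The only point requiring care, which I would flag as the main (though minor) obstacle, is justifying that the expansion is legitimate, i.e.\ checking analyticity at $\sigma=0$ as above, and in particular observing that the parity argument lets one avoid differentiating the cumbersome exponential quotients altogether: the linear coefficient in $\sigma$ receives contributions only from the two $\rho$-proportional summands, each evaluated at $\sigma=0$.
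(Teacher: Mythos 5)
Your proposal is correct and follows the same route the paper (implicitly) takes for all of its asymptotic results: a direct expansion of the closed-form expression \eqref{discretehw} in $\sigma$, and your evaluations of the six summands at $\sigma=0$ reproduce the stated $h_0^{HW}$ and $h_1^{HW}$ exactly. The parity observation, that the $\rho$-free summands are even in $\sigma$ while the two $\rho$-proportional ones are $\sigma$ times even functions, is a clean way to organize the computation, but it does not change the substance of the argument.
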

The expansion of the fair strike in the Hull-White model with respect to the volatility of volatility is very different from the one in the Heston model as it is not a quadratic function of $\sigma$, and it also involves higher order terms of $\sigma$.

\subsection{Sch\"obel-Zhu Model}

We first expand the fair strike of the discrete variance swap with respect to the number of sampling periods $n$. The following result is similar to Proposition \ref{asympt_n_Heston} and \ref{asympt_n_HW}. In particular we find that the first term in the expansion is also linear in $\rho$ and has a similar behaviour as in the Heston and Hull-White model.

\begin{pr}(Expansion of $K_d^{SZ}(n)$ w.r.t. $n$) \label{asympt_n_SZ}

 In the Sch\"obel-Zhu model, the expansion of the fair strike of the discrete variance swap, $K^{SZ}_d (n)$, is given by
\begin{align}
K^{SZ}_d (n)&=K^{SZ}_c+\frac{a_1^{SZ}}{n} +\mathcal{O}\(\frac{1}{n^2}\),\label{ExpandSZ}
\end{align}
where
\begin{align}
a_1^{SZ}&=r^2T-rT K^{SZ}_c+d_1-d_2\frac{\gamma}{2\kappa}\rho,
\end{align}
with
\begin{multline*}
d_1:=\frac{T{V_{{0}}}^{4}}{4}-{\frac {E(T+D)}{16{\kappa}^{2}}}+ \left( \frac{3{V_{{0}}}^{2}{\gamma}^{2}}{4}+{
\frac {E}{32\kappa}}+\frac{\kappa{V_{{0}}}^{3}(\theta-V_0)}{2} \right) {D}^{2}\\
+ \left(\frac{2\theta\,{
\kappa}^{2}{V_{{0}}}^{3}}{3} -\frac{{V_{{0}}}^{4}{\kappa}^
{2}}{6}-\frac{E}{48}-\frac{{V_{{0}}}^{2}{\theta}^{2}{\kappa}^{2}}{2}-{\gamma}^{2}\kappa\,V_{{0}}\theta+\frac{3{V_{{0}}}^{2}\kappa\,{\gamma}^{2}}{4}-\frac{{\gamma}^{4}}{4} \right) {D}^{3}\\
+ \left({\frac {E}{
8\kappa}} +{3{\gamma}^{2}(\theta-V_{{0}})\theta}+\frac{3{V_{{0}}}^{2}{\gamma}^{2}}{2}+V_{0}{\kappa}(\theta-V_0)\left(2\theta^2-\theta V_0+V_0^2\right)\right) \frac{\kappa^2{D}^{4}}{8},
\end{multline*}
and
\begin{multline*}d_2=T \left( {\gamma}^{2}+2\,\kappa\,{\theta}^{2}
 \right)+{{\left( 2\kappa({\theta}^{2}-V_0^2)+
{\gamma}^{2} \right) }}D+\frac{\kappa}{2}
\, \left( {\gamma}^{2}-2\,\kappa\,{(\theta-V_0)}^{2} \right) {D}^{2},
\end{multline*}
where $$E:=4\,{V_{{0}}}^{4}{\kappa}^{2}-4\,{\theta}^{4}{\kappa}^{2}-3\,{\gamma}^{4}-
12\,{\gamma}^{2}{\theta}^{2}\kappa,\quad\quad D:={\frac {{e^{-\kappa\,T}}-1}{\kappa}}.
$$


\end{pr}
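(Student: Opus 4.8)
The plan is to start from the master formula \eqref{g4} of Proposition \ref{p1}, specialized to the Sch\"obel-Zhu data $m(x)=x$, $\mu(x)=\kappa(\theta-x)$, $\sigma(x)=\gamma$, for which $f(v)=\frac{v^2}{2\gamma}$ and $h(v)=\frac{\kappa\theta}{\gamma}v-\frac{\kappa}{\gamma}v^2+\frac{\gamma}{2}$. The crucial structural feature of \eqref{szmodel} is that, because $\sigma$ is constant, the volatility process $V$ is an Ornstein--Uhlenbeck process and is therefore \emph{Gaussian}. Writing $V_s=\theta+(V_0-\theta)e^{-\kappa s}+\gamma\int_0^s e^{-\kappa(s-\tau)}dW_\tau^{(2)}$, one reads off the mean $\mathfrak m(s)=\theta+(V_0-\theta)e^{-\kappa s}$ and the covariance $\cov(V_s,V_u)=\frac{\gamma^2}{2\kappa}\bigl(e^{-\kappa|u-s|}-e^{-\kappa(u+s)}\bigr)$. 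Since $m^2(V_s)=V_s^2$, $h(V_s)$ is quadratic in $V_s$ and $f(V_s)=V_s^2/(2\gamma)$, every integrand $m_1,\dots,m_5$ appearing in \eqref{g4} is a polynomial of degree at most four in the jointly Gaussian pair $(V_s,V_u)$, and hence can be evaluated in closed form by Wick's (Isserlis') formula in terms of the mean and covariance above.

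First I would write each $m_i$ explicitly: $m_1(s)=\E[V_s^2]=\mathfrak m(s)^2+\cov(V_s,V_s)$; the fourth-order quantities $m_2(s,u)=\E[V_s^2V_u^2]$, $m_3(s,u)=\E[h(V_s)h(V_u)]$ and $m_4(s,u)=\E[h(V_s)V_u^2]$ expand, via Wick, into finite sums of products of $\mathfrak m(s)$, $\mathfrak m(u)$ and $\cov(V_s,V_u)$; and $m_5(t,s)=\E[(f(V_{t+\Delta})-f(V_t))(2\rho h(V_s)+m^2(V_s))]$ reduces similarly. Each of these is a linear combination of exponentials $e^{-a\kappa s}e^{-b\kappa u}$, so the single integral $\int_t^{t+\Delta}m_1(s)\,ds$ and the double integrals $\int_t^{t+\Delta}\int_t^{t+\Delta}m_i(s,u)\,ds\,du$ are elementary and produce closed forms in $e^{-\kappa\Delta}$ and $e^{-\kappa t}$.

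Next I would set $t=t_i=iT/n$ and $\Delta=T/n$ and sum the per-period contribution \eqref{g4} over $i=0,\dots,n-1$, using $\sum_{i=0}^{n-1}e^{-\kappa i\Delta}=\frac{1-e^{-\kappa T}}{1-e^{-\kappa\Delta}}$ and its analogues to collapse the geometric sums, and finally Taylor-expand in $\Delta=T/n$. Two simplifications keep this manageable. By Remark \ref{INTR} and \eqref{formr} the dependence on $r$ is already pinned down, contributing exactly $r^2T-rTK_c^{SZ}$ to $a_1^{SZ}$, so it suffices to carry out the expansion with $r=0$ and recover the $\rho$- and $\gamma$-dependent remainder $d_1-d_2\frac{\gamma}{2\kappa}\rho$. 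Moreover, the $\rho^2$-boundary term is best handled by It\^o: since $f(V_{t+\Delta})-f(V_t)=\int_t^{t+\Delta}h(V_s)\,ds+\int_t^{t+\Delta}V_s\,dW_s^{(2)}$ and $(f'(V_s)\sigma)^2=V_s^2=m^2(V_s)$, the quadratic-variation part of $\E[(f(V_{t+\Delta})-f(V_t))^2]$ equals $\int_t^{t+\Delta}m_1(s)\,ds$ exactly. Summed, this $+\rho^2\int m_1$ cancels the $-\rho^2\int m_1$ from the first line of \eqref{g4}, restoring $\int m_1\to K_c^{SZ}$ at order $n^0$ independently of $\rho$ and leaving the drift part $\int h\,ds$ and its cross term, together with $m_3,m_4$, as the genuine $O(1/n)$ discretization correction.

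The main obstacle is purely the volume and organization of the algebra: the degree-four Gaussian cross-moments $m_2,m_3,m_4$ generate a large number of exponential monomials, and after double integration and geometric summation one must expand each to order $1/n$ and then collapse everything into the compact constants $E=4V_0^4\kappa^2-4\theta^4\kappa^2-3\gamma^4-12\gamma^2\theta^2\kappa$ and $D=\frac{e^{-\kappa T}-1}{\kappa}$. Keeping the bookkeeping consistent --- particularly grouping the many $D,D^2,D^3,D^4$ contributions into $d_1$ and $d_2$ --- is where errors are most likely, and a symbolic computation (Maple/Mathematica) is the natural safeguard, exactly as the authors indicate their formulas were implemented and checked numerically.
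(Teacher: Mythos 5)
Your proposal is correct and follows essentially the same route as the paper: specialize the master formula \eqref{g4} of Proposition \ref{p1} to the Sch\"obel-Zhu data, exploit the fact that $V$ is an Ornstein--Uhlenbeck (hence Gaussian) process to obtain all the required moments $\E[V_s^k]$ and cross-moments $\E[V_s^jV_u^k]$ in closed form, then integrate, sum over the partition, and expand in $1/n$ by symbolic computation. The only cosmetic difference is that you evaluate the Gaussian moments via Wick's formula whereas the paper's Appendix \ref{proofSZ} uses the explicit OU solution and the tower property ($\E[V_u\mid V_s]$, $\E[V_u^2\mid V_s]$); these yield identical closed forms.
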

\proof This proposition is a straightforward expansion from the formula of $K_d^{SZ}(n)$ in Proposition \ref{pSZ}. Note that although the formula of $K_d^{SZ}(n)$ does not have a simple form,  its asymptotic expansion can be easily computed with Maple for instance. \hfill$\Box$

\begin{re} \label{RSZ} Similarly as in the Heston and the Hull-White models, the first term in the expansion \eqref{ExpandSZ}, $a_1^{SZ}$, is a linear function of $\rho$, but the sign of its slope is not clear in general.
\end{re}%

\begin{pr}(Expansion of the fair strike for small maturity) \label{asympt_T_SZ}

In  the Sch\"obel-Zhu model,  $K^{SZ}_d(n)$ can be expanded when $T\rightarrow 0$ as
\be K_d^{SZ}(n)=V_0^2+b_1^{SZ} T+\mathcal{O}(T^2)\label{expandTSZ}\ee
where
$$b_1^{SZ}=\kappa V_0(\theta-V_0)+\frac{\gamma^2}{2}+\frac{1}{n}\({r^2}-{rV_0^2}+\frac{V_0^2(V_0^2-4\rho\gamma)}{4}\).
$$
Note also that
$K^{SZ}_c= {V_{{0}}}^{2}+ \left(V_{{0}}\kappa(\theta-V_0)+\frac{{\gamma}^{2}}{2} \right) T+\mathcal{O} \left( {T}^{2} \right)
$ and thus,
$$K^{SZ}_d(n)-K^{SZ}_c=\frac{1}{4n}\((V_0^2-2r)^2-4\rho V_0^2\gamma\) T+\mathcal{O}(T^2).$$
\end{pr}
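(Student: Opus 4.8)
The plan is to obtain \eqref{expandTSZ} as a direct small-$T$ Taylor expansion of the closed-form (or semi-explicit) expression for $K_d^{SZ}(n)$ coming from Proposition \ref{pSZ}, exactly as the analogous Propositions \ref{asympt_T_Heston} and \ref{asympt_T_HW} were obtained in the Heston and Hull-White cases. The governing identity is \eqref{g4} in Proposition \ref{p1}, specialized to the Sch\"obel-Zhu choices $m(x)=x$, $\mu(x)=-\kappa(x-\theta)$, $\sigma(x)=\gamma$, with $f(v)=\frac{v^2}{2\gamma}$ and $h(v)=\frac{\kappa\theta}{\gamma}v-\frac{\kappa}{\gamma}v^2+\frac{\gamma}{2}$. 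Since the sampling is equidistant with $\Delta=T/n$, I would first compute $\E\left[\(\ln\frac{S_{t+\Delta}}{S_t}\)^2\right]$ for a single interval to order $\Delta^2$, then sum the $n$ identical-in-structure contributions and divide by $T$ to recover $K_d^{SZ}(n)$, retaining terms through the required order in $T$.

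First I would expand each of the six building blocks $m_1,\dots,m_5$ and the term $\rho^2\E[(f(V_{t+\Delta})-f(V_t))^2]$ appearing in \eqref{g4} for small $\Delta$. Because $V$ is an Ornstein-Uhlenbeck process in the Sch\"obel-Zhu model, all the moments $\E[m^2(V_s)]=\E[V_s^2]$, $\E[V_s^4]$, $\E[h(V_s)]$, etc., are explicit Gaussian moments with known time dependence, so their Taylor coefficients in $s$ around $t=0$ are straightforward to read off using $V_0$ as the deterministic initial value. The leading single-integral term $(1-\rho^2-r\Delta)\int m_1(s)\,ds$ contributes the $O(\Delta)$ piece giving $V_0^2$ at leading order and $b_1^{SZ}$ at the next order; the double-integral terms $m_2,m_3,m_4$ are each $O(\Delta^2)$ per interval and hence contribute at order $\Delta^2\cdot n=\frac{T^2}{n}$, i.e.\ to $b_1^{SZ}$ through the $\frac1n$ coefficient; and the quadratic-variation term $r^2\Delta^2$ summed over $n$ intervals gives the $\frac{r^2}{n}T$ piece visible in $b_1^{SZ}$. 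Collecting the $\rho$-linear contributions (which come from $m_4$, $m_5$, and the cross term) should yield the $-4\rho\gamma$ factor multiplying $V_0^2/4$.

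Concretely, I would organize the computation so that the $n$-independent part reproduces $K_c^{SZ}$ through order $T$, namely $V_0^2+\bigl(\kappa V_0(\theta-V_0)+\frac{\gamma^2}{2}\bigr)T$, consistent with \eqref{contsz} expanded for small $T$; this serves as a built-in consistency check. The remaining $O(1/n)$ correction then isolates the discretization contribution $\frac1n\bigl(r^2-rV_0^2+\frac{V_0^2(V_0^2-4\rho\gamma)}{4}\bigr)$, and the difference $K_d^{SZ}(n)-K_c^{SZ}=\frac{1}{4n}\bigl((V_0^2-2r)^2-4\rho V_0^2\gamma\bigr)T+\mathcal{O}(T^2)$ follows by subtraction. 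As the proof sketch in the paper indicates, the cleanest route is to feed the explicit (even if lengthy) Sch\"obel-Zhu strike formula into a symbolic engine such as Maple and extract the series, since all inputs are elementary Gaussian moments.

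The main obstacle is purely bookkeeping: tracking which terms in \eqref{g4} survive to order $T^2/n$ after summation, and correctly assembling the $\rho$-dependent Gaussian cross-moments so that the $-4\rho\gamma$ coefficient emerges cleanly. Unlike the Heston case, here $m(x)=x$ (not $\sqrt{x}$), so the relevant moments are fourth-order Gaussian moments of the OU process rather than second-order, which increases the algebraic weight but introduces no genuine analytic difficulty. No convergence or interchange-of-limits subtlety arises because everything reduces to finite Taylor expansions of explicit smooth functions of $T$.
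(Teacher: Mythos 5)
Your proposal is correct and follows essentially the same route as the paper, whose proof is precisely a ``straightforward expansion from the formula of $K_d^{SZ}(n)$ in Proposition \ref{pSZ}'' (i.e., Taylor-expanding the expression assembled from \eqref{g4} and the explicit OU moments, in practice with a symbolic engine). The only loose point in your bookkeeping is attributing the leading $V_0^2$ entirely to the $(1-\rho^2-r\Delta)\int m_1$ term: that term alone gives $(1-\rho^2)V_0^2$, and the missing $\rho^2 V_0^2$ is restored by the $\rho^2\,\E\!\left[\left(f(V_{t+\Delta})-f(V_t)\right)^2\right]$ term, which is $O(\Delta)$ per interval rather than $O(\Delta^2)$.
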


\proof This proposition is a straightforward expansion from the formula of $K_d^{SZ}(n)$ in Proposition \ref{pSZ}.\hfill$\Box$

Note that the form of the expansion is similar for the three models under study (compare Propositions \ref{asympt_T_Heston}, \ref{asympt_T_HW} and \ref{asympt_T_SZ}). We find that the difference between the discrete and the continuous strikes has a first term involving the product of $2\rho$ by a function of the initial variance value and the volatility of the variance process, and respectively $\gamma$ in the Heston, $\sigma$ in the Hull-White and $2\gamma$ in the Sch\"obel-Zhu model. See for example footnote \ref{f6} where the dynamics of the variance is derived in the Sch\"obel-Zhu model.

\subsection{Discussion on the convex-order conjecture}

As motivated in Keller-Ressel and Griessler \citeyear{KG12}, it is of interest to study the \textit{systematic bias} for fixed $n$ and $T$ when using the quadratic variation to approximate the realized variance. B\"{u}hler \citeyear{B} and Keller-Ressel and Muhle-Karbe \citeyear{KM12} show numerical evidence of this bias (see also Section \ref{S4} for further evidence in the Heston and the Hull-White models). Keller-Ressel and Griessler \citeyear{KG12} propose the following ``\textbf{convex-order conjecture}'': $$\E[f(RV(X,\mathcal{P}))]\ge \E[f([X,X]_T)]$$ where $f$ is convex, $\mathcal{P}$ refers to the partition of $[0,T]$ in $n+1$ division points and $X=\log(S_T/S_0)$. $RV(X,\mathcal{P})$ is the discrete realized variance ($\sum_{i=1}^n(\log(S_{t_i}/S_{t_{i-1}}))^2$) and $[X,X]_T$ is the continuous  quadratic variation ($\int_0^T m^2(V_s) ds$ in our setting).

 When $f(x)=x/T$ and the correlation can be positive, the conjecture is violated, see for example Figure \ref{F1} to \ref{F3} where $K^M_d(n)$ can be below $K_c^M$. When $\rho=0$, the process has conditionally independent increments and satisfies other assumptions in  Keller-Ressel and Griessler \citeyear{KG12}. Proposition \ref{p0}  ensures that $K^M_d(n)\ge K_c^M$, which is consistent with their results.

\section{Numerics \label{S4}}

This section illustrates with numerical examples in the Heston, the Hull-White and the Sch\"obel-Zhu models.

\subsection{Heston and Hull-White models\label{MM}}
Given parameters for the Heston model,  we then choose the parameters in the Hull-White model so that the continuous strikes match. Precisely, we obtain $\mu$ by solving numerically $K_c^H=K_c^{HW}$,
and find $\sigma$ such that the variances of $V_T$ in the respective Heston and the Hull-White models match.
%
%
From \eqref{EVt} and \eqref{VtVs}, the  variance for $V_T$ for the Heston model is given by
\begin{align*}
Var^H(V_T)=\frac{\gamma^2}{2\kappa} (\theta+2 e^{-\kappa T}(V_0 -\theta) +e^{-2\kappa T} (\theta-2V_0)).\notag
\end{align*}
The  variance for $V_T$ for the Hull-White model can be computed using \eqref{EVtHW}
\begin{align*}
Var^{HW}(V_T)=V_0^2 e^{2\mu T}(e^{\sigma^2 T}-1).\notag
\end{align*}

The parameters for the Heston model are taken from reasonable parameter sets in the literature. Precisely the first set of parameters is similar to the one used by Broadie and Jain \citeyear{BJ08}. 
 The second set corresponds to Table 2 in Broadie and Kaya \citeyear{BK06}. The values for the parameters of the Hull-White model are obtained consistently using the procedure described above\footnote{For the two sets of parameters above, we compute the critical interest rate $r^*$ as defined in Remark \ref{INTR}. Set 1: $r^*=0.88\%$; Set 2: $r^*=0.605\%$, and we can see that the interest rates are both larger than $r^*$. }.

\begin{table}[!h]
\begin{center}
\begin{tabular}{|c|cccc|ccc|cc|}
\multicolumn{5}{c}{ }&\multicolumn{3}{c}{ }&\multicolumn{2}{c}{(matched)}\\
\multicolumn{5}{c}{ }&\multicolumn{3}{|c|}{Heston }&\multicolumn{2}{|c|}{Hull-White}\\
\hline& $T$& $r$& $V_0$& $\rho$ &  $\gamma$ & $\theta$&$\kappa$&$\mu$& $\sigma$\\\hline
Set 1&1&3.19\%&0.010201&-0.7&0.31&0.019&6.21&1.003&0.42\\
\hline
Set 2&5&5\%&0.09&-0.3&1&0.09&2&$2.9\times10^{-9}$&0.52\\
\hline
\end{tabular}
\caption{Parameter sets\label{t1}}
\end{center}
\end{table}

\begin{center}
{\it Insert Figure \ref{F1}}
\end{center}
Figure \ref{F1} displays cases when the fair strike of the discrete variance swap $K^{M}_d(n)$ may be smaller than the fair strike of the continuous variance swap $K_c^M$. The first graph obtained in the Heston model (the model $M$ is denoted by the exponent $H$ for Heston) shows that $K^{H}_d$ is first higher than $K_c^H$, crosses this level and stays below $K^H_c$ until it converges to the value $K^H_c$ as $n\rightarrow \infty$. It means that  options on discrete realized variance may be overvalued  when the continuous quadratic variation is used to approximate the discrete realized variance.  Note that this unusual pattern happens when $\rho=0.7$, which may happen for example in foreign exchange markets.

\begin{center}
{\it Insert Figure \ref{F1bis}}
\end{center}
Figure \ref{F1bis} highlights another type of convergence showing the complexity of the behaviour of the fair strike of the discrete variance swap with respect to that of the continuous variance swap.

\begin{center}
{\it Insert Figure \ref{F2}}
\end{center}

Figure \ref{F2} displays on the same graphs the discrete fair strike $K_d(n)$ and the first two terms of the expansion formula $K_c^H+\frac{a_1^H}{n}$ for the Heston model and $K_c^{HW}+\frac{a_1^{HW}}{n}$ for the Hull-White model (see Propositions \ref{asympt_n_Heston} and \ref{asympt_n_HW} for the exact expressions of $a_1^H$ and $a_1^{HW}$). It shows that the first term of this expansion is already highly informative as it clearly appears to fit very well for small values of $n$ in both models.

\begin{center}
{\it Insert Figure \ref{F3}}
\end{center}

Figure \ref{F3} further illustrates that the discrete fair strike (for a daily monitoring) can be lower than the continuous fair strike as $K_d^M-K_c^M$ may be negative for high values of the correlation coefficient both in the Heston and the Hull-White models. In Remark \ref{RH} and \ref{RHW}, it is noted  that the first term in the asymptotic expansion with respect to $n$ is linear in $\rho$. From Figure \ref{F2} it is clear that the first term has an important explanatory power. This justifies  the linear behavior observed in Figure \ref{F3} of the difference between discrete and continuous fair strikes with respect to $\rho$. Computations of $\rho_0^H$ and $\rho_0^{HW}$ for each of the risk-free rate levels $r=0\%$, $r=3.2\%$ and $r=6\%$ confirm that it is always positive when $r=0\%$ (which is consistent with Proposition \ref{p0}) and that it can be higher than 1, which ensures that for $n$ sufficiently high, the discrete fair strike is always higher than the continuous fair strike.


\begin{center}
{\it Insert Figure \ref{F4}}
\end{center}

Figure \ref{F4} shows that as the time to maturity $T$ goes to $0$, the discrete fair strike is converging to the continuous fair strike at approximately a quadratic rate. This is consistent with Proposition \ref{asympt_T_Heston} and Proposition \ref{asympt_T_HW}.

\begin{center}
{\it Insert Figure \ref{F5}}
\end{center}

Figure \ref{F5} shows that the discrepancy between the discrete fair strike and the continuous fair strike is exacerbated by the volatility of the underlying variance process. We observe that the gap between the discrete fair strike and the continuous fair strike, with respect to $\gamma$, is wider in the Heston model than in the Hull-White model. This illustrates, from a numerical viewpoint, that the discrete fair strike in the Heston model is more sensitive to the volatility of variance parameter than that of the Hull-White model. In particular, the continuous fair strike $K_c^H$ is independent of $\gamma$. For each $\gamma$ we compute the corresponding $\sigma$ for the Hull-White model such that the variances match as described in Section \ref{MM}. We then observe similar patterns in the Heston and the Hull-White models. From the left panel of Figure \ref{F5}, we can see that the shape of the discrete fair strike in the Heston model with respect to $\gamma$ evolves similar to a parabola, and this is consistent with Proposition \ref{asympt_gamma_Heston}. The right panel of Figure \ref{F5} is consistent with Proposition \ref{asympt_sigma_HW}.

\subsection{Sch\"obel-Zhu model\label{SZN}}

For the Sch\"obel-Zhu model, we reproduce a similar numerical analysis and take parameters consistent with the Heston model. Note that the $V$ process in the Sch\"obel-Zhu model corresponds to the volatility process instead of the variance process\footnote{The notation $V_t$ in the Sch\"obel-Zhu model corresponds to the square root of what is denoted by $V_t$ in the Heston model.}. Then we choose $\theta=\sqrt{0.019}$ and $V_0=\sqrt{0.010201}$. Other parameters are taken from set 1 of  Table \ref{t1}.
\begin{center}
{\it Insert Figure \ref{F6}}
\end{center}

Both the left and right panels of Figure \ref{F6} show that $K^{SZ}_d$ can be below $K_c^{SZ}$ until it converges to the value $K^{SZ}_c$ as $n\rightarrow \infty$. This unusual pattern happens when the correlation is positive similarly in the Heston and the Hull-White models. 

\begin{center}
{\it Insert Figure \ref{F7}}
\end{center}

Figure \ref{F7} illustrates that the discrete fair strike (for a daily monitoring) can be lower than the continuous fair strike as $K_d^{SZ}-K_c^{SZ}$ may be negative for high values of the correlation coefficient.  From Figure \ref{F7} it is clear that the first term also has  an important explanatory power. This justifies the linear behavior observed in Figure \ref{F7} of the difference between discrete and continuous fair strikes with respect to $\rho$. Computations of $\rho_0^{SZ}$ (defined as the zero of $a_1^{SZ}$ computed in Proposition \ref{asympt_n_SZ}) for each of the risk-free rate levels $r=0\%$, $r=3.2\%$ and $r=6\%$ confirm that it is always positive when $r=0\%$ (which is consistent with Proposition \ref{p0}).

\section{Conclusions}
This paper provides explicit expressions of the fair strikes of discretely sampled variance swaps in the Heston, the Hull-White and the Sch{\"o}bel-Zhu models. For the Heston model, the explicit closed-form formula simplifies the expressions  obtained by  Broadie and Jain \citeyear{BJ08} in equations (A-29) and (A-30) on page $793$, where several sums from 0 to $n$ are involved. Our formulae are more explicit (as there are no sums involved in the discrete fair strikes), and easier to use. The explicit closed-form formulas for the Hull-White and  the Sch{\"o}bel-Zhu models are new. Asymptotics of the fair strikes with respect to key parameters such as $n\rightarrow \infty$, $T\rightarrow 0$,  $\gamma\rightarrow 0$ are new and consistent with theoretical results obtained in Keller-Ressel and Muhle-Karbe \citeyear{KM12}.

There are several potential research directions. For example this work can be extended to mixed exponential jump diffusions models (proposed by Cai and Kou \citeyear{CK11} and to the calculations of fair strikes for gamma swaps (see for example Lee \citeyear{L10} for a definition of the payoff). For the $3/2$ stochastic volatility model treated in Itkin and Carr \citeyear{IC}, the difficulty lies in obtaining closed-form expressions of the covariance terms  $E\left[h(V_t) h(V_s)\right]$ for some functional $h$, and the determination of a closed-form formula for the fair strike in this model is left as an open problem. Pricing discrete volatility derivatives with non-linear payoffs (e.g. puts and calls on realized variance) in time-homogeneous stochastic volatility models  is also left as a future research direction.

\newpage
\singlespacing
\appendix

\section{Proof of Proposition \ref{p1} \label{pp1}}

Using It{$\bar{\hbox{o}}$}'s lemma and  Cholesky decomposition, \eqref{eq1} becomes
\begin{eqnarray*}
d\( \ln \(S_t\) \) &=&  \(r-\frac{1}{2} m^2(V_t) \)   dt + \rho m(V_t) dW_t^{(2)}+\sqrt{1- \rho^2} m(V_t) dW_t^{(3)}\notag\\
dV_t &=& \mu (V_t) dt +\sigma(V_t)dW_t^{(2)}.\notag
\end{eqnarray*}
where $W_t^{(2)}$ and $W_t^{(3)}$ are two standard independent Brownian motions.

Proposition \ref{p1} is then a direct application of the following lemma (see Lemma $3.1$  of Bernard and Cui \citeyear{BC11} for its proof).
\begin{lem}\label{l1}
Under the model given in \eqref{eq1}, we have
\begin{multline}
S_T=S_0 \exp\left\{ rT -\frac{1}{2}\int_0^T m^2(V_t) dt+\rho(f(V_T)-f(V_0))\right.\\
\left.-\rho \int_0^T h(V_t) dt + \sqrt{1-\rho^2} \int_0^T
m(V_t) dW_t^{(3)} \right\},\label{eqgsv}
\end{multline}
where
$f(v)=\int_0^v \frac{m(z)}{\sigma(z)}dz$ and
$h(v)=\mu(v)f^{\prime}(v)+\frac{1}{2}\sigma^2 (v)f^{\prime\prime}(v).$
\end{lem}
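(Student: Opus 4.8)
The plan is to derive \eqref{eqgsv} from two applications of It\^o's lemma, the second tailored to the specific choice of $f$. First I would pass from the correlated drivers to independent ones via the Cholesky decomposition $dW_t^{(1)}=\rho\,dW_t^{(2)}+\sqrt{1-\rho^2}\,dW_t^{(3)}$, with $W^{(2)}$ and $W^{(3)}$ independent. Applying It\^o's lemma to $\ln S_t$ under the dynamics \eqref{eq1} gives
\begin{equation*}
d(\ln S_t)=\(r-\tfrac12 m^2(V_t)\)dt+\rho\,m(V_t)\,dW_t^{(2)}+\sqrt{1-\rho^2}\,m(V_t)\,dW_t^{(3)}.
\end{equation*}
Integrating from $0$ to $T$ expresses $\ln(S_T/S_0)$ as the sum of the Lebesgue drift $rT-\tfrac12\int_0^T m^2(V_t)\,dt$, the stochastic integral $\rho\int_0^T m(V_t)\,dW_t^{(2)}$, and the orthogonal martingale part $\sqrt{1-\rho^2}\int_0^T m(V_t)\,dW_t^{(3)}$. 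The last term already matches \eqref{eqgsv}, so the whole task reduces to rewriting $\int_0^T m(V_t)\,dW_t^{(2)}$ in terms of $f$ and $h$.

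This is where the definition $f(v)=\int_0^v \frac{m(z)}{\sigma(z)}\,dz$ is engineered: by construction $f'(v)=m(v)/\sigma(v)$, and since the model assumes $m/\sigma$ differentiable on $J$, the function $f$ is of class $C^2$ and It\^o's lemma applies to $f(V_t)$. Using $dV_t=\mu(V_t)\,dt+\sigma(V_t)\,dW_t^{(2)}$ I would compute
\begin{multline*}
df(V_t)=\(\mu(V_t)f'(V_t)+\tfrac12\sigma^2(V_t)f''(V_t)\)dt+\sigma(V_t)f'(V_t)\,dW_t^{(2)}\\
=h(V_t)\,dt+m(V_t)\,dW_t^{(2)},
\end{multline*}
where the drift is $h(V_t)$ by definition and the diffusion collapses to $m(V_t)$ because $\sigma f'=m$. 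Integrating yields the identity $\int_0^T m(V_t)\,dW_t^{(2)}=f(V_T)-f(V_0)-\int_0^T h(V_t)\,dt$; substituting it into the expression for $\ln(S_T/S_0)$ and exponentiating produces \eqref{eqgsv} verbatim. Note that the additive constant in $f$ (the lower limit $0$) is irrelevant, since only the increment $f(V_T)-f(V_0)$ appears.

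The algebra being routine, the only genuine care needed concerns regularity and the possible exit of $V$ from $J$. Under the Engelbert--Schmidt conditions $1/\sigma^2,\ \mu/\sigma^2,\ m^2/\sigma^2\in L_{loc}^1(J)$ the SDE for $V$ has a unique weak solution that may reach the boundary, so $f$ and $h$ are finite only on compact subsets of $J$ and the two It\^o expansions are valid on the stochastic interval $[0,T\wedge\tau)$, where $\tau$ is the exit time of $V$ from $J$. I would therefore establish the representation up to $\tau$, the $C^2$ property of $f$ guaranteeing that It\^o's lemma applies on that interval, and then extend it by continuity to the boundary in the concrete models of interest, where the coefficients degenerate smoothly (for instance $m,\sigma\to 0$ at $v=0$ in the Heston model). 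Once this localization is handled the two-line computation above closes the proof, so the main obstacle is bookkeeping the exit behaviour rather than any analytic difficulty.
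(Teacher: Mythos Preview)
Your proposal is correct and follows exactly the paper's approach: the paper itself records the Cholesky decomposition and $d(\ln S_t)$ just before stating the lemma, and later uses the identity $df(V_t)=h(V_t)\,dt+m(V_t)\,dW_t^{(2)}$ in the proof of Proposition~\ref{p1}, referring to Lemma~3.1 of Bernard and Cui (2011) for the full details. Your extra care about localization up to the exit time $\tau$ is a welcome addition that the paper does not make explicit.
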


Now from equation \eqref{eqgsv} in Lemma \ref{l1}, we compute the following key elements in the fair strike of the discrete variance swap. Assume that the time interval is $[t, t+\Delta]$, then
\begin{align*}\ln\( \frac{S_{t+\Delta}}{S_t}\)&= r\Delta -\frac{1}{2}\int_t^{t+\Delta} m^2(V_s)ds +\rho \( f(V_{t+\Delta})-f(V_t)-\int_t^{t+\Delta} h(V_s)ds   \)\notag\\
&\ \ \ +\sqrt{1-\rho^2} \int_t^{t+\Delta} m(V_s) dW_s^{(3)}.\notag
\end{align*}
Then we can compute
\begin{align}
&\E\left[\(\ln \frac{S_{t+\Delta}}{S_t}\)^2\right] = r^2 \Delta^2 +\frac{1}{4}\E\left[  \(\int_t^{t+\Delta} m^2(V_s)ds\)^2\right]-r\Delta \E\left[ \int_t^{t+\Delta} m^2(V_s)ds \right]+\E\left[A^2\right]\notag\\
&\quad\quad+\E\left[\(2r \Delta -\int_t^{t+\Delta} m^2(V_s)ds \)A\right] +(1-\rho^2)\E\left[ \int_t^{t+\Delta} m^2(V_s)ds \right],\label{g1}
\end{align}
where $A=\rho \( f(V_{t+\Delta})-f(V_t)-\int_t^{t+\Delta} h(V_s)ds   \),$ and
\begin{align*}
A^2 &=\rho^2 \( (f(V_{t+\Delta})-f(V_t))^2 +   \(\int_t^{t+\Delta} h(V_s)ds\)^2 -2(f(V_{t+\Delta})-f(V_t))\int_t^{t+\Delta} h(V_s)ds \).\notag
\end{align*}
Using the above expressions for $A$ and $A^2$ in \eqref{g1}, we obtain
\begin{align}
&\E\left[\(\ln \frac{S_{t+\Delta}}{S_t}\)^2\right] \notag\\
&=r^2 \Delta^2 +\frac{1}{4}\E\left[  \(\int_t^{t+\Delta} m^2(V_s)ds\)^2\right]+(1-\rho^2 -r\Delta) \E\left[ \int_t^{t+\Delta} m^2(V_s)ds \right]\notag\\
&\ \ \ +\rho^2 \E[((f(V_{t+\Delta})-f(V_t))^2]+\rho^2 \E\left[ \(\int_t^{t+\Delta} h(V_s)ds\)^2 \right]+2r\rho \Delta \E[(f(V_{t+\Delta})-f(V_t))] \notag\\
&\ \ \ - \E\left[  (f(V_{t+\Delta})-f(V_t))\int_t^{t+\Delta} (2\rho^2 h(V_s)+\rho m^2(V_s))ds   \right]-2r\rho\Delta \E\left[ \int_t^{t+\Delta} h(V_s)ds \right]\notag\\
&\ \ \ +\rho \E\left[ \( \int_t^{t+\Delta} h(V_s)ds \)\( \int_t^{t+\Delta} m^2(V_s)ds \)  \right].\label{i1}
\end{align}
By It{$\bar{\hbox{o}}$}'s lemma, $f$ defined in Lemma \ref{l1} verifies
$df(V_t)=h(V_t)dt +m(V_t)dW_t^{(2)}.$
 Integrating the above SDE from $t$ to $t+\Delta$, we have
\begin{align}
f(V_{t+\Delta})-f(V_t)&=\int_t^{t+\Delta} h(V_s)ds+ \int_t^{t+\Delta} m(V_s)dW_s^{(2)}.\notag
\end{align}
Thus
\begin{align}
\E\left[f(V_{t+\Delta})-f(V_t)\right]-\E\left[\int_t^{t+\Delta} h(V_s)ds\right]&=\E\left[\int_t^{t+\Delta} m(V_s)dW_s^{(2)}\right]=0.\label{i2}
\end{align}
Rearrange \eqref{i1} and use \eqref{i2} to simplify the terms, and we obtain
\begin{multline}
\E\left[\(\ln \frac{S_{t+\Delta}}{S_t}\)^2\right]=r^2 \Delta^2-r\Delta\E\left[\int_t^{t+\Delta}  m^2(V_s)ds\right]
+\frac{1}{4}\E\left[  \(\int_t^{t+\Delta} m^2(V_s)ds\)^2\right]\\
+(1-\rho^2 ) \E\left[ \int_t^{t+\Delta} m^2(V_s)ds\right]+\rho^2 \E\left[\(f(V_{t+\Delta})-f(V_t)\)^2\right]\\
+\rho^2 \E\left[ \(\int_t^{t+\Delta} h(V_s)ds\)^2 \right]
 +\rho \E\left[  \int_t^{t+\Delta} h(V_s)ds \  \int_t^{t+\Delta} m^2(V_s)ds   \right]\\
 - \rho\E\left[  (f(V_{t+\Delta})-f(V_t))\int_t^{t+\Delta} (2\rho h(V_s)+ m^2(V_s)) ds   \right].\label{i3}
\end{multline}
Now we apply Fubini's theorem and partial integration to further simplify \eqref{i3}. Note that  $m^2(V_s)\geq 0$, Q-a.s., then by Fubini's theorem for non-negative measurable functions, $\E\left[\int_t^{t+\Delta}  m^2(V_s)ds\right]=\int_t^{t+\Delta} \E\left[ m^2(V_s)\right] ds$.
Similarly we have $\E\left[  \(\int_t^{t+\Delta} m^2(V_s)ds\)^2\right]=  \int_t^{t+\Delta}\int_t^{t+\Delta} \E\left[m^2(V_s)m^2(V_u)\right]dsdu$ for any $t\leq s\leq t+\Delta$ and any $t\leq u\leq t+\Delta$,

If $\E\left[\mid h(V_s)h(V_u)\mid \right]<\infty$ for any $t\leq s\leq t+\Delta$ and any $t\leq u\leq t+\Delta$, then we have $\E\left[ \(\int_t^{t+\Delta} h(V_s)ds\)^2 \right]=  \int_t^{t+\Delta}\int_t^{t+\Delta} \E\left[h(V_s)h(V_u)\right]dsdu$.

If $\E\left[\mid h(V_s)m^2(V_u)\mid \right]<\infty$ for any $t\leq s\leq t+\Delta$ and any $t\leq u\leq t+\Delta$, then we have
\begin{align}
\E\left[ \int_t^{t+\Delta} h(V_s)ds\ \int_t^{t+\Delta} m^2(V_s)ds \right]= \int_t^{t+\Delta}\int_t^{t+\Delta} \E\left[h(V_s)m^2(V_u)\right]dsdu.\notag
\end{align}

If $\E\left[\mid(f(V_{t+\Delta})-f(V_t))(2\rho h(V_s)+ m^2(V_s))\mid\right]<\infty$ for all $t\leq s\leq t+\Delta$, then we have
\begin{align}
&\E\left[  (f(V_{t+\Delta})-f(V_t))\int_t^{t+\Delta} (2\rho h(V_s)+ m^2(V_s)) ds   \right]\notag\\
&\quad=\int_t^{t+\Delta} \E\left[(f(V_{t+\Delta})-f(V_t))(2\rho h(V_s)+ m^2(V_s))\right] ds.\notag
\end{align}

Thus we finally have proved \eqref{g4} from Proposition \ref{p1}. This completes the proof. \hfill$\Box$

\section{Proof of Proposition \ref{p2} \label{p2p}}
\proof We apply Proposition \ref{p1} to the Heston stochastic volatility model. We first compute $f(x)=\frac{x}{\gamma}$ and $h(x)=\frac{\kappa\theta -\kappa x}{\gamma}$, then we have
\begin{align}
K^{H}_d&=\frac{1}{T}\sum_{i=0}^{n-1} \E\left[\(\ln \frac{S_{t_{i+1}}}{S_{t_i}}\)^2   \right]\notag\\
&=\frac{1}{T}\(\frac{a^2 T^2}{n} +b^2 \sum_{i=0}^{n-1}\int^{t_{i+1}}_{t_i}\int^{t_{i+1}}_{t_i} \E[ V_s V_u] dsdu+\(\frac{2abT}{n}+1-\rho^2\) \int_{0}^{T}\E[V_s] ds\right.\notag\\
&\left.\ \quad \ \ +\frac{\rho^2}{\gamma^2} \sum_{i=0}^{n-1}\E[(V_{t_{i+1}}-V_{t_{i}})^2]+\frac{2\rho aT}{n\gamma} (\E[V_{T}]-\E[V_{0}])\right.\notag\\
&\left.\ \ \ \quad +\frac{2\rho b}{\gamma} \sum_{i=0}^{n-1}\( \int^{t_{i+1}}_{t_i} \E\left[V_{t_{i+1}} V_s \right] ds- \int^{t_{i+1}}_{t_i} \E\left[ V_{t_{i}} V_s \right] ds  \)\). \label{kd2d}
\end{align}
Furthermore, for all $t\ge0$
\be\label{EVt}
\E[V_t]=\theta+e^{-\kappa t}(V_0-\theta),\label{Vt}
\ee
and for all $0<s\leq t$
\begin{align}
\E[V_tV_s]=&\ {\theta}^{2}+e^{-\kappa t}(V_0-\theta)\(\theta+\frac{\gamma^2}{\kappa}\)+e^{-\kappa s}\theta(V_0-\theta)\notag\\
&+e^{-\kappa(t+s)}\((\theta-V_0)^2+\frac{\gamma^2}{2\kappa}(\theta-2V_0)\)+\frac{\gamma^2}{2\kappa}\theta e^{-\kappa(t-s)}.\label{VtVs}
\end{align}
In particular, this formula holds for $t=s$ and gives $\E[V_t^2]$. These formulas already appear in Broadie and Jain \citeyear{BJ08} (formula (A-15)). To compute $K^{H}_d$, \eqref{Vt} and \eqref{VtVs} are the only expressions needed, and they should then be integrated and summed.

We have computed all terms in \eqref{kd2d} with the help of Maple and also have simplified the final expression given by Maple. It turns out that in the case of the Heston model, all terms can be computed explicitly and the final simplified expression for \eqref{kd2d} does not require any sums or integrals.  We finally obtain an explicit formula for $K^{H}_d$ as a function of the parameters of the model. This completes the proof.
\hfill$\Box$

\section{Proof of Proposition \ref{zl}\label{proofzl}}

\proof Denote the log stock price without drift as $X_t =\ln S_t -rt$, and $X_0 =x_0$. Denote $V_0 =v_0$, $\Delta=T/n$.  We have that $\E\left[\(\frac{S_{t_{i+1}}-S_{t_{i}}}{S_{t_{i}}}\)^2\right]=\E\left[\(\frac{S_{t_{i+1}}}{S_{t_{i}}}\)^2\right]+1-2e^{r\Delta}$. Thus the goal is to calculate the second moment $\E\left[\(\frac{S_{t_{i+1}}}{S_{t_{i}}}\)^2\right]$, and note that it is closely linked to the moment generating function of the log stock price $X$. Recall the following formulation of the moment generating function $M(u,t)=\E[e^{u X_t }]$ from Albrecher et al. \citeyear{A}
\begin{align}
M(u,t)
&=S_0^u \exp\left\{ \frac{\kappa\theta}{\gamma^2} \( (\kappa-\gamma \rho u-d(u))t-2\ln\( \frac{1-g(u) e^{-d(u) t}}{1-g(u)}\)  \)  \right\} \notag\\
&\times \exp\left\{ V_0 \frac{\kappa-\gamma \rho u-d(u)}{\gamma^2} \frac{1-e^{-d(u) t}}{1-g(u) e^{-d(u) t}} \right\},\label{mequ}
\end{align}
where the auxiliary functions are given by
\begin{align}
d(u)&=\sqrt{(\kappa-\gamma \rho u)^2 +\gamma^2 (u-u^2)  },\quad
g(u)=\frac{\kappa-\gamma \rho u-d(u)}{\kappa-\gamma \rho u+d(u)}.\notag
\end{align}
We first separate out the case of $i=0$ and $i=1,...,n-1$. For the first case, we have
\begin{align}
\E\left[\(\frac{S_{t_{1}}}{S_{0}}\)^2\right] &=\frac{1}{S_0^2}\E\left[e^{2\ln S_{t_1}}   \right]=\frac{e^{2r t_1}}{S_0^2} M(2, t_1)=\frac{e^{2r \Delta}}{S_0^2} M(2, \Delta).\label{second}
\end{align}

For the second case, with $i=1,2,...,n-1$, we have
\begin{align}
\E\left[\(\frac{S_{t_{i+1}}}{S_{t_i}}\)^2\right]&=\E\left[e^{2\ln \(\frac{S_{t_{i+1}}}{S_{t_i}}\)}\right]=e^{2r \Delta} \E\left[\E\left[e^{2(X_{t_{i+1}}-X_{t_i})} \mid \mathcal{F}_{t_i} \right] \right]\notag\\
&=\exp\left\{2r \Delta+ \frac{\kappa\theta}{\gamma^2} \( (\kappa-2\gamma \rho -d(2))\Delta-2\ln \frac{1-g(2) e^{-d(2)\Delta}}{1-g(2)}  \)  \right\} \notag\\
&\times \E\left[ \exp\left\{V_{t_i} \frac{\kappa-2\gamma \rho -d(2)}{\gamma^2} \frac{1-e^{-d(2)\Delta}}{1-g(2) e^{-d(2)\Delta}} \right\}\right].\label{third}
\end{align}

 We first define $\alpha=2\kappa\theta/\gamma^2 -1 \geq 0$, and $\eta(t) =\frac{2\kappa}{\gamma^2} (1-e^{-\kappa t})^{-1}$. Then from Theorem $3.1$\footnote{Note that in terms of our notation, the parameters in Hurd and Kuznetsov \citeyear{HK08} and our parameters have the correspondence $a=\kappa\theta, b=\kappa,  c=\gamma$.} in Hurd and Kuznetsov \citeyear{HK08}, we have
\begin{align}
\E[e^{u V_T}]&= \(\frac{\eta(T)}{\eta(T) -u}\)^{\alpha+1} e^{ V_0 \frac{\eta(T)u}{\eta(T) -u}e^{-\kappa T}}. \label{fourth}
\end{align}

Combine equations \eqref{third} and \eqref{fourth},  for $i=1,...,n-1$, we finally have
\begin{equation}
\E\left[\(\frac{S_{t_{i+1}}}{S_{t_i}}\)^2\right]
=e^{2r\Delta +  \frac{\kappa\theta}{\gamma^2} \( (\kappa-2\gamma \rho -d(2))\Delta-2\ln \frac{1-g(2) e^{-d(2)\Delta}}{1-g(2)}  \)} e^{V_0 \frac{\eta(t_i) q(2)}{\eta(t_i) -q(2)}e^{-\kappa t_i}} \(\frac{\eta(t_i)}{\eta(t_i) -q(2)}\)^{\alpha+1},\label{fifthequ}
\end{equation}
where $q(u)=\frac{\kappa-\gamma \rho u-d(u)}{\gamma^2} \frac{1-e^{-d(u)\Delta}}{1-g(u)e^{-d(u)\Delta}}$. Using the definition of $M(u,t)$, we can factor out $M(2,\Delta)$ from \eqref{fifthequ} and finally we have
\begin{equation}
\E\left[\(\frac{S_{t_{i+1}}}{S_{t_i}}\)^2\right]
=\frac{e^{2r\Delta}}{S_0^2} M(2, \Delta) e^{ q(2) V_0 \(\frac{\eta(t_i) e^{-\kappa t_i} }{\eta(t_i) -q(2)}-1\)} \(\frac{\eta(t_i)}{\eta(t_i) -q(2)}\)^{\alpha+1}.\label{sixth}
\end{equation}

When $i=0$, we have $t_i =0$ and since $\eta_{u}\rightarrow \infty$ as $u\rightarrow 0$, we use L'H\^{o}pital's rule
\begin{align}
\frac{\eta_{t_0}}{\eta_{t_0}-q(2)}&=\lim\limits_{u\rightarrow 0} \frac{\eta_{u}}{\eta_{u}-q(2)}=\lim\limits_{u\rightarrow 0} \frac{\eta^{\prime}_{u}}{\eta^{\prime}_{u}}=1.\notag
\end{align}
Thus $a_0$ is a special case of the formula in \eqref{sixth} when $i=0$.
From Theorem $3.1$ in Hurd and Kuznetsov \citeyear{HK08}, equation \eqref{mequ} and consequently the above \eqref{third}, \eqref{fourth} are well-defined if $u<\eta(T)$\footnote{Note that $\eta(t)$ is a decreasing function in $t$, thus $u<\eta(T)$ is sufficient for $u<\eta(t_i)$ for all $i=0,1,..., n$.}. Note that the formula \eqref{sixth} involves the  $u=2$ case.  A sufficient condition for $u=2<\eta(T)$ to hold is $\gamma^2 T <1$ (since $2<\eta(T)$ is equivalent to $1-\frac{\kappa}{\gamma^2} <e^{-\kappa T}$ ).

Then the final formula for the discrete fair strike follows by summing the above terms $a_i, i=0,1,...,n-1$. This completes the proof.
\hfill$\Box$

\section{Proof of Proposition \ref{pHW}\label{proofHW}}

\proof
For the Hull-White model, from Proposition \ref{p1}, we first compute $f(x)=\frac{2}{\sigma}\sqrt{x}$ and $h(x)=\(\frac{\mu}{\sigma}-\frac{\sigma}{4}\)\sqrt{x}$, then we have
\begin{align}
&\E\left[\(\ln \frac{S_{t_{i+1}}}{S_{t_i}}\)^2\right]=(1-\rho^2 -\frac{rT}{n}) \int_{\frac{iT}{n}}^{\frac{(i+1)T}{n}} \E\left[ V_s  \right]ds+r^2 \frac{T^2}{n^2} \notag\\
&-\frac{2\rho}{\sigma}  \int_{\frac{iT}{n}}^{\frac{(i+1)T}{n}} \E\left[ \(\sqrt{V_{\frac{(i+1)T}{n}}}-\sqrt{V_{\frac{iT}{n}}}\) V_s \right] ds +2\rho^2 q^2   \int_{\frac{iT}{n}}^{\frac{(i+1)T}{n}} \int_{\frac{iT}{n}}^{u} \E\left[\sqrt{V_s}\sqrt{V_u}\right]dsdu \notag\\
&+\frac{4\rho^2}{\sigma^2} \E\left[\(\sqrt{V_{\frac{(i+1)T}{n}}}-\sqrt{V_{\frac{iT}{n}}}\)^2\right]
 -\frac{4\rho^2 q}{\sigma}  \int_{\frac{iT}{n}}^{\frac{(i+1)T}{n}} \E\left[ \(\sqrt{V_{\frac{(i+1)T}{n}}}-\sqrt{V_\frac{iT}{n}}\)\sqrt{V_s}\right] ds \notag\\
&+\frac{1}{2}  \int_{\frac{iT}{n}}^{\frac{(i+1)T}{n}} \int_{\frac{iT}{n}}^{u} \E\left[V_s V_u\right] dsdu+\rho q  \int_{\frac{iT}{n}}^{\frac{(i+1)T}{n}} \int_{\frac{iT}{n}}^{u} \E\left[ \sqrt{V_s} V_u\right] dsdu\notag\\
&+\rho q  \int_{\frac{iT}{n}}^{\frac{(i+1)T}{n}} \int_{u}^{\frac{(i+1)T}{n}} \E\left[ \sqrt{V_s} V_u\right] dsdu,\notag
\end{align}
with $q=\frac{\mu}{\sigma}-\frac{\sigma}{4}$.


We now compute the following covariance terms that are useful in the simplification of the fair strike $K^{HW}_d(n)$. In the Hull-White model, the stochastic variance process $V_t$ follows a geometric Brownian motion.  Thus we have
$V_t =V_0 \exp\(\(\mu-\frac{\si^2}{2}\)t +\sigma W_t^{(2)}\)$.
Note that
\be\label{EVtHW}\E\left[V_s^a\right]=V_0^ae^{a\mu s}e^{\frac{a^2-a}{2}\sigma^2s},\notag\ee
which will be useful below for $a=1/2$, $a=1$ and $a=2$.
$$\E\left[V_s\right]=V_0e^{\mu s},\quad \E\left[\sqrt{V_s}\right]=\sqrt{V_0}e^{\frac{\mu}{2} s-\frac{1}{8}\sigma^2s}=\sqrt{V_0}e^{\frac{\sigma}{2}qs}, \quad \E\left[V_s^2\right]=V_0^2e^{2\mu s+\sigma^2s}.$$

The fair strike for the continuous variance swap is straightforward and is equal to
$
\E\left[ \int_0^{T} V_s ds \right]=\frac{V_0}{\mu} (e^{\mu T}-1).
$
Similarly
\begin{align}
\E\left[ \int_0^{T} \sqrt{V_s} ds \right]
&=\int_0^{T}\sqrt{V_0} e^{\frac{\sigma}{2}qs}ds=\sqrt{V_0}\frac{2}{\sigma q}\(e^{\frac{\sigma q T}{2}}-1\),\notag
\end{align}
and for $s<u$, we have the following results
\begin{align}
\E\left[V_sV_u\right]&=V_0^2\exp\(\mu(u+s)+\sigma^2 s\),\notag\\
\E\left[\sqrt{V_s}\sqrt{V_u}\right]&=V_0\exp\(\frac{\mu}{2}(u+s)-\frac{\sigma^2}{8} (u-s)\),\notag\\
\E\left[\sqrt{V_s}\,{V_u}\right]&=V_0^{\frac32}\exp\({\mu}\(\frac{s}{2}+u\)+\frac{3\sigma^2}{8} s\),\notag\\
\E\left[{V_s}\,\sqrt{V_u}\right]&=V_0^{\frac32}\exp\({\mu}\(s+\frac{u}{2}\)-\frac{\sigma^2}{8} u+\frac{\sigma^2}{2} s\).\label{lcc2}
\end{align}
After some tedious calculations with the help of Maple, we can obtain an explicit formula as the one appearing in Proposition \ref{pHW}. This completes the proof. \hfill$\Box$

\section{Proof of Proposition \ref{pSZ}\label{proofSZ}}

\proof For the Sch{\"o}bel-Zhu model, from the key equation in Proposition \ref{p1}, we have
\begin{eqnarray}
\E\left[\(\ln \frac{S_{t+\Delta}}{S_t}\)^2\right]&=&r^2 \Delta^2
+(1-\rho^2 -r\Delta ) \int_t^{t+\Delta} m_1(s)ds- \rho \int_t^{t+\Delta} m_5 (t,s) ds\notag\\
&&+\frac{1}{4}  \int_t^{t+\Delta}\int_t^{t+\Delta} m_2(s,u)dsdu+ \frac{\rho^2}{4\gamma^2} \E\left[\(V_{t+\Delta}^2-V_t^2\)^2\right]\label{szp1}\\
&&+\rho^2   \int_t^{t+\Delta}\int_t^{t+\Delta} m_3(s,u) dsdu+\rho \int_t^{t+\Delta}\int_t^{t+\Delta} m_4(s,u)dsdu,\notag
\end{eqnarray}
where \begin{eqnarray}
m_1(s)&:=&\E\left[ m^2(V_s)\right]=\E\left[ V_s^2\right], t\leq s \leq t+\Delta,\notag\\
m_2(s,u)&:=&\E\left[m^2(V_s)m^2(V_u)\right]=\E\left[V_s^2 V_u^2\right], t\leq s\leq t+\Delta,\quad t\leq u \leq t+\Delta,\notag\\
m_3(s,u)&:=&\E\left[ h(V_s) h(V_u)\right], t\leq s\leq t+\Delta,\quad  t\leq u \leq t+\Delta,\label{tech2app}\\
m_4(s,u)&:=&\E\left[ h(V_s) m^2(V_u)\right], t\leq s\leq t+\Delta, \quad t\leq u \leq t+\Delta,\notag\\
m_5(t,s)&:=&\E\left[(f(V_{t+\Delta})-f(V_t))(2\rho h(V_s)+m^2(V_s))\right], t\leq s\leq t+\Delta,\notag
\end{eqnarray}
and $\E\left[\(V_{t+\Delta}^2-V_t^2\)^2\right]=\E\left[V_{t+\Delta}^4\right]+\E\left[V_{t}^4\right]-2\E\left[V_{t+\Delta}^2 V_{t}^2\right]$.
We compute the following two terms in \eqref{szp1} by expanding the products out. For $s\le u$
\begin{align}
m_3(s,u)=&\E\left[\(\frac{\kappa\theta}{\gamma}V_s-\frac{\kappa}{\gamma}V_s^2+\frac{\gamma}{2}   \)\(\frac{\kappa\theta}{\gamma}V_u-\frac{\kappa}{\gamma}V_u^2+\frac{\gamma}{2}   \)\right]\notag\\
=&\E\left[\frac{\kappa^2\theta^2}{\gamma^2} V_s V_u -\frac{\kappa^2\theta}{\gamma^2}(V_s V_u^2+V_s^2 V_u)+\frac{\kappa\theta}{2}(V_s+V_u)\right.\notag\\
&\quad\quad\left.-\frac{\kappa}{2}(V_s^2 +V_u^2)+\frac{\kappa^2}{\gamma^2}V_s^2 V_u^2 +\frac{\gamma^2}{4}\right],\notag
\end{align}
and for $t\le s\le t+\Delta$
\begin{align}
m_5(t,s)=&\frac{1}{2\gamma}\E\left[(V_{t+\Delta}^2-V_t^2)\(2\rho \(\frac{\kappa\theta}{\gamma}V_s-\frac{\kappa}{\gamma}V_s^2+\frac{\gamma}{2}   \)+ V_s^2\)\right]\notag\\
=&\E\left[\frac{\rho\kappa\theta}{\gamma^2}(V_{t+\Delta}^2 V_s -V_t^2 V_s)+\frac{\gamma-2\rho\kappa}{2\gamma^2}(V_{t+\Delta}^2 V_s^2 -V_t^2 V_s^2)\right.\notag\\
&\left.\quad\quad+\frac{\rho}{2}(V_{t+\Delta}^2-V_{t}^2)\right].\notag
\end{align}

It is clear from the above expressions of $m_i$ for $i=1,2,...,5$ that they are all functions of $\E[V_s]$, $\E[V_s^2]$, $\E[V_s^4]$, $\E[V_sV_u]$, $\E[V_s^2V_u]$, $\E[V_sV_u^2]$ and $\E[V_s^2V_u^2]$. We now compute these seven expressions.

\begin{lem}
For the Ornstein-Uhlenbeck process $V$, introduce the auxiliary deterministic functions $e_s:=(V_0-\theta)e^{-\kappa s}+\theta$, and $v(s):=\frac{\gamma^2}{2\kappa}(1-e^{-2\kappa s})$, then
\begin{align}
\E\left[V_s\right]&=e_s,\label{mom1c}\\
\E\left[V_s^2\right]&=e_s^2 +v(s),\label{mom2c}\\
\E\left[V_s^3\right]&=e_s^3+3e_s v(s),\label{mom3c}\\
\E\left[V_s^4\right]&=e_s^4+6e_s^2 v(s) +3v^2(s).\label{mom4c}
\end{align}

For $t\leq s\leq u\leq t+\Delta$
\begin{align}
 \E\left[V_s V_u \right]&=e^{-\kappa (u-s)}\E\left[ V_s^2  \right]+\theta(1-e^{-\kappa(u-s)})\E\left[V_s\right],\notag\\
\E\left[ V_s^2 V_u^2 \right]&=e^{-2\kappa (u-s)}\E\left[V_s^4\right]+2\theta e^{-\kappa (u-s)} (1-e^{-\kappa(u-s)})\E\left[V_s^3\right]\notag\\
&+\(\theta^2(1-e^{-\kappa(u-s)})^2 +\frac{\gamma^2}{2\kappa}(1-e^{-2\kappa(u-s)})  \)\E\left[V_s^2\right].\label{sz3}
\end{align}

For $t\leq s\leq u\leq t+\Delta$
\begin{align}
\E\left[V_s V_u^2 \right]&=e^{-2\kappa (u-s)}\E\left[V_s^3\right]+2\theta e^{-\kappa (u-s)} (1-e^{-\kappa(u-s)})\E\left[V_s^2\right]\notag\\
&+\(\theta^2(1-e^{-\kappa(u-s)})^2 +\frac{\gamma^2}{2\kappa}(1-e^{-2\kappa(u-s)})  \)\E\left[V_s\right].\label{sz6}
\end{align}
For $t\leq s\leq u\leq t+\Delta$
\begin{align}
\E\left[V_s^2 V_u \right]&=e^{-\kappa(u-s)} \E\left[V_s^3\right]+\theta (1-e^{-\kappa(u-s)})\E\left[V_s^2\right].\label{sz7}
\end{align}
\end{lem}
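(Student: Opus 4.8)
The plan is to exploit the fact that the Ornstein-Uhlenbeck process $V$ solving $dV_t = \kappa(\theta - V_t)\,dt + \gamma\, dW_t^{(2)}$ is a Gaussian Markov process, so that every identity in the Lemma reduces to Gaussian moment formulas combined with the law of iterated expectations. First I would solve the SDE explicitly: from $d\!\left(e^{\kappa t}(V_t-\theta)\right) = \gamma e^{\kappa t}\,dW_t^{(2)}$, integrating from $0$ to $s$ gives
$$V_s = \theta + (V_0-\theta)e^{-\kappa s} + \gamma\int_0^s e^{-\kappa(s-r)}\,dW_r^{(2)},$$
so $V_s$ is Gaussian with mean $e_s = (V_0-\theta)e^{-\kappa s}+\theta$ and, by the It\^o isometry, variance $v(s) = \frac{\gamma^2}{2\kappa}(1-e^{-2\kappa s})$. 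Substituting $X=V_s\sim\mathcal{N}(e_s,v(s))$ into the standard formulas $\E[X^2]=m^2+\sigma^2$, $\E[X^3]=m^3+3m\sigma^2$, $\E[X^4]=m^4+6m^2\sigma^2+3\sigma^4$ then yields \eqref{mom1c}--\eqref{mom4c} at once.

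Next I would record the transition law. The same integrating-factor computation on $[s,u]$ gives
$$V_u = \theta + (V_s-\theta)e^{-\kappa(u-s)} + \gamma\int_s^u e^{-\kappa(u-r)}\,dW_r^{(2)},$$
and since the stochastic integral is independent of $\mathcal{F}_s$, the conditional law of $V_u$ given $\mathcal{F}_s$ is Gaussian with mean $\mu_{u|s} := V_s e^{-\kappa(u-s)} + \theta(1-e^{-\kappa(u-s)})$ and variance $\frac{\gamma^2}{2\kappa}(1-e^{-2\kappa(u-s)})$. In particular $\E[V_u\mid\mathcal{F}_s]=\mu_{u|s}$ and $\E[V_u^2\mid\mathcal{F}_s]=\mu_{u|s}^2 + \frac{\gamma^2}{2\kappa}(1-e^{-2\kappa(u-s)})$.

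Finally I would obtain each two-point moment by conditioning on $\mathcal{F}_s$ and pulling the $\mathcal{F}_s$-measurable factor $V_s^{\,j}$ outside. For the first-order pairs this gives $\E[V_s^{\,j}V_u] = \E[V_s^{\,j}\mu_{u|s}] = e^{-\kappa(u-s)}\E[V_s^{\,j+1}] + \theta(1-e^{-\kappa(u-s)})\E[V_s^{\,j}]$, which is the first displayed two-point identity for $j=1$ and \eqref{sz7} for $j=2$. For the pairs involving $V_u^2$ I would substitute the conditional second moment, expand $\mu_{u|s}^2 = V_s^2 e^{-2\kappa(u-s)} + 2\theta V_s e^{-\kappa(u-s)}(1-e^{-\kappa(u-s)}) + \theta^2(1-e^{-\kappa(u-s)})^2$, multiply by $V_s^{\,j}$, and take expectations; collecting terms reproduces \eqref{sz6} for $j=1$ and \eqref{sz3} for $j=2$.

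The only mildly delicate step is the bookkeeping in this last expansion: one must combine the $\theta^2(1-e^{-\kappa(u-s)})^2$ term arising from $\mu_{u|s}^2$ with the conditional-variance term $\frac{\gamma^2}{2\kappa}(1-e^{-2\kappa(u-s)})$ into the single bracketed coefficient multiplying the lowest power of $V_s$, and identify $2\theta e^{-\kappa(u-s)}(1-e^{-\kappa(u-s)})$ as the middle coefficient. This is routine algebra rather than a genuine obstacle, so the whole Lemma follows from the explicit Gaussian solution and the Markov property.
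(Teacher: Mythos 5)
Your proposal is correct and follows essentially the same route as the paper: solve the OU equation explicitly, read off the Gaussian marginal and transition laws, apply the standard Gaussian moment formulas for \eqref{mom1c}--\eqref{mom4c}, and obtain the two-point moments by conditioning on $\mathcal{F}_s$ and using the tower property. No gaps.
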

\proof The stochastic variance process $V_s$ follows
\begin{align}
dV_s &=-\kappa(V_s-\theta)ds +\gamma dW_s^{(2)}.\notag
\end{align}
On page 120 of Jeanblanc, Yor and Chesney \citeyear{JYC09}, one finds that the exact solution of the above SDE is
\begin{align}
V_s &=(V_0-\theta)e^{-\kappa s}+\theta +\gamma \int_0^s e^{-\kappa(s-t)}dW^{(2)}_t.\notag
\end{align}
We can compute
\begin{align}
e_s&:=\E\left[V_s\right]=(V_0-\theta)e^{-\kappa s}+\theta,\label{mom1}\\
v(s)&:=Var\left[ V_s \right]=\frac{\gamma^2}{2\kappa}(1-e^{-2\kappa s}),
\end{align}
and the higher moments can also be computed
\begin{align}
\E\left[V_s^2\right]&=e_s^2 +v(s).\label{mom2}\\
\E\left[V_s^3\right]&=e_s^3+3e_s v(s).\label{mom3}\\
\E\left[V_s^4\right]&=e_s^4+6e_s^2 v(s) +3v^2(s).\label{mom4}
\end{align}
For $s\leq u$, $\E\left[V_u \mid V_s\right]=\E\left[ (V_s-\theta)e^{-\kappa (u-s)}+\theta \right]$, and
\begin{align}
\E\left[V_s V_u \right]&=\E\left[ V_s \E\left[ V_u \mid V_s \right] \right]\notag\\
&=\E\left[ V_s ((V_s-\theta)e^{-\kappa (u-s)}+\theta) \right]\notag\\
&= e^{-\kappa (u-s)}\E\left[ V_s^2  \right]+\theta(1-e^{-\kappa(u-s)})\E\left[V_s\right].\notag
\end{align}

Now we can compute the continuous fair strike as
\begin{align}
K_c &=\frac{1}{T} \E\left[ \int_0^T V_s^2 ds   \right]=\frac{1}{T} \int_0^T  \left[((V_0-\theta)e^{-\kappa s}+\theta)^2+\frac{\gamma^2}{2\kappa}(1-e^{-2\kappa s}) \right]ds\notag\\
&=\( (V_0-\theta)^2 -\frac{\gamma^2}{2\kappa} \)\frac{1-e^{-2\kappa T}}{2\kappa T}+2\theta(V_0-\theta)\frac{1-e^{-\kappa T}}{\kappa T}+\theta^2+\frac{\gamma^2}{2\kappa}.
\end{align}

For $s\leq u$
\begin{align}
\E\left[V_s^2 V_u^2\right]&=\E\left[ V_s^2 \E\left[  V_u^2 \mid V_s      \right]  \right]\notag\\
&=\E\left[ V_s^2 \(((V_s-\theta)e^{-\kappa (u-s)}+\theta)^2+\frac{\gamma^2}{2\kappa}(1-e^{-2\kappa (u-s)}) \) \right]\notag\\
&=e^{-2\kappa (u-s)}\E\left[V_s^4\right]+2\theta e^{-\kappa (u-s)} (1-e^{-\kappa(u-s)})\E\left[V_s^3\right]\notag\\
&+\(\theta^2(1-e^{-\kappa(u-s)})^2 +\frac{\gamma^2}{2\kappa}(1-e^{-2\kappa(u-s)})  \)\E\left[V_s^2\right].\notag\\
\E\left[V_s V_u^2\right]&=\E\left[ V_s \E\left[  V_u^2 \mid V_s      \right]  \right]\notag\\
&=\E\left[ V_s ((V_s-\theta)e^{-\kappa (u-s)}+\theta)^2+\frac{\gamma^2}{2\kappa}(1-e^{-2\kappa (u-s)})  \right]\notag\\
&=e^{-2\kappa (u-s)}\E\left[V_s^3\right]+2\theta e^{-\kappa (u-s)}(1-e^{-\kappa(u-s)})\E\left[V_s^2\right]\notag\\
&+\(\theta^2(1-e^{-\kappa(u-s)})^2 +\frac{\gamma^2}{2\kappa}(1-e^{-2\kappa(u-s)})  \)\E\left[V_s\right].\notag\\
\E\left[V_s^2 V_u\right]&=\E\left[ V_s^2 \E\left[  V_u \mid V_s      \right]  \right]\notag\\
&=\E\left[ V_s^2 ((V_s-\theta)e^{-\kappa (u-s)}+\theta)  \right]\notag\\
&=e^{-\kappa(u-s)} \E\left[V_s^3\right]+\theta (1-e^{-\kappa(u-s)})\E\left[V_s^2\right].\notag
\end{align}

In the above expressions, the moments $\E\left[V_s\right]$, $\E\left[V_s^2\right]$, $\E\left[V_s^3\right]$ and $\E\left[V_s^4\right]$ are already calculated in \eqref{mom1}, \eqref{mom2}, \eqref{mom3}, and \eqref{mom4}. Then we can substitute the corresponding inputs into equation \eqref{szp1}, sum up the terms, and obtain $K_d^{SZ}(n)$. This completes the proof. \hfill$\Box$

\vspace{3cm}

\bibliography{dvsh}

\newpage

\begin{figure}[!htbp]
\hspace{-2cm}\begin{tabular}{cc}
\multicolumn{2}{c}{\includegraphics*[height=16cm,width=18cm]{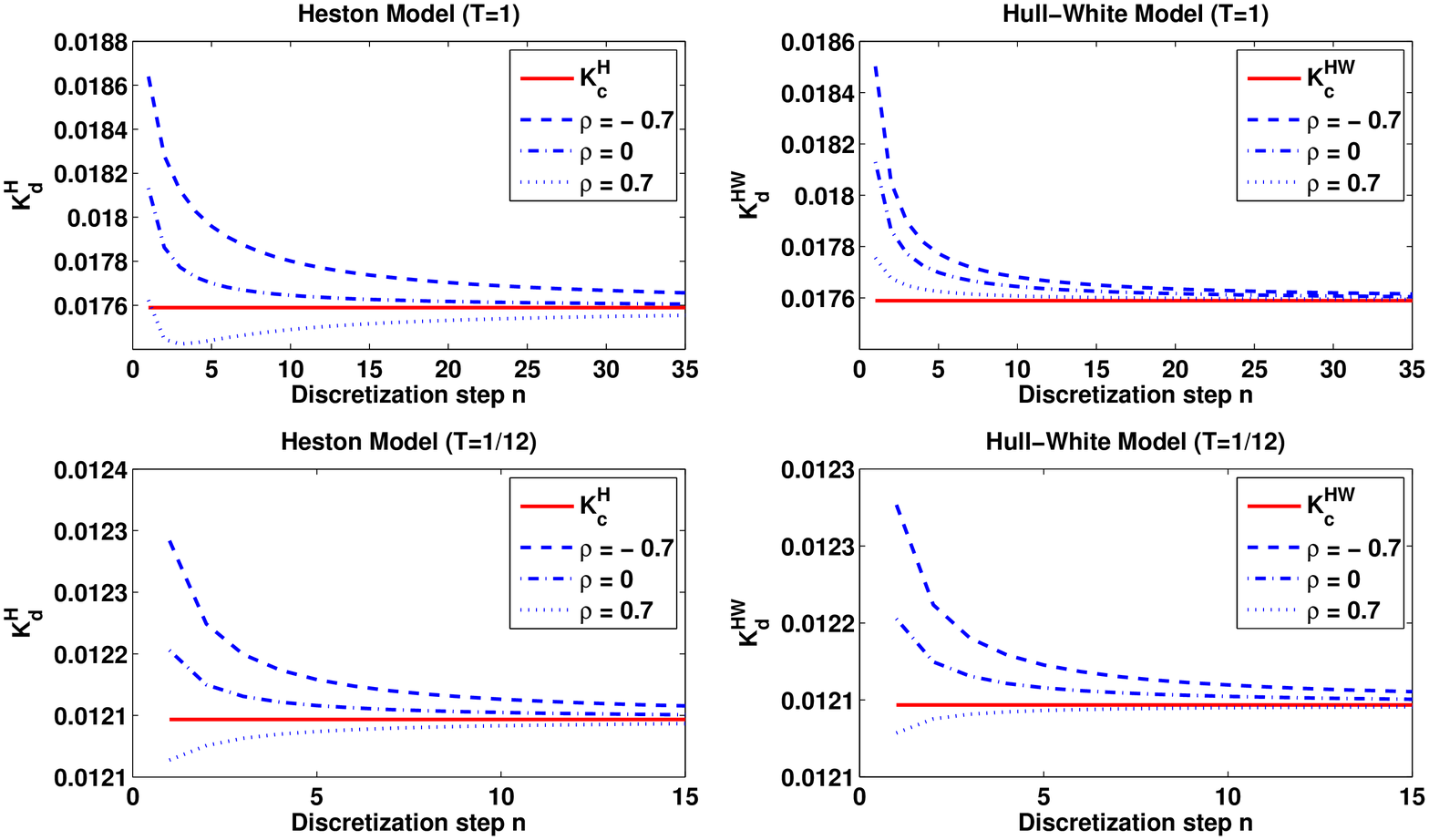}}\\
\end{tabular}
\caption{Sensitivity to the number of sampling periods $n$ and to $\rho$ \label{F1}}\vspace{2mm}
\rule{0pt}{3pt}
\parbox{5in}
{\footnotesize Parameters correspond to Set 1 in Table \ref{t1} except for $\rho$ that can take three possible values $\rho=-0.7$, $\rho=0$ or $\rho=0.7$ and for $T$ that is equal to $T=1$ for the  two upper graphs and  $T=1/12$ for the two lower graphs. When $T=1/12$, the parameters for the Hull-White model are adjusted according to the procedure described in Section \ref{MM}. In the case when $T=1/12$, one has $\mu= 4.03$ and $\sigma=1.78$.
}
\end{figure}

\newpage

\begin{figure}[!htbp]
\hspace{-2cm}\begin{tabular}{cc}
\multicolumn{2}{c}{\includegraphics*[height=8cm,width=18cm]{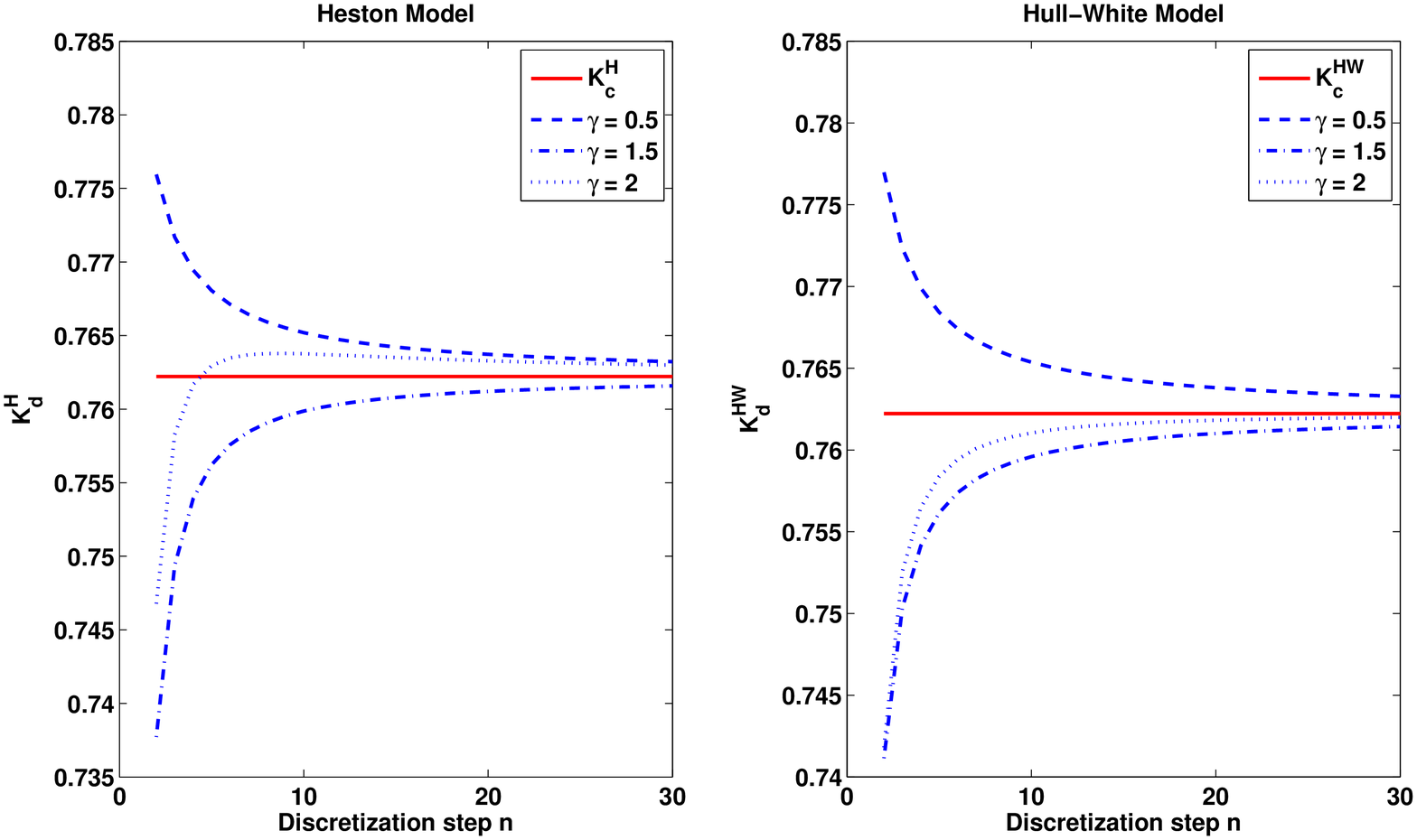}}\\
\end{tabular}
\caption{Sensitivity to the number of sampling periods $n$ and to $\gamma$ \label{F1bis}}\vspace{2mm}
\rule{0pt}{3pt}
\parbox{5in}
{\footnotesize Parameters are set to unusual values to show that any types of behaviors can be expected. $\rho=0.6$, $r=3.19\%$, $\theta=0.019$, $\kappa=.1$, $V_0=0.8$ and $\gamma$ takes three possible values: 0.5, 1.5 and 2.
}
\end{figure}
\newpage

\begin{figure}[!htbp]
\hspace{-3cm}\begin{tabular}{cc}
\multicolumn{2}{c}{\includegraphics*[height=14cm,width=18cm]{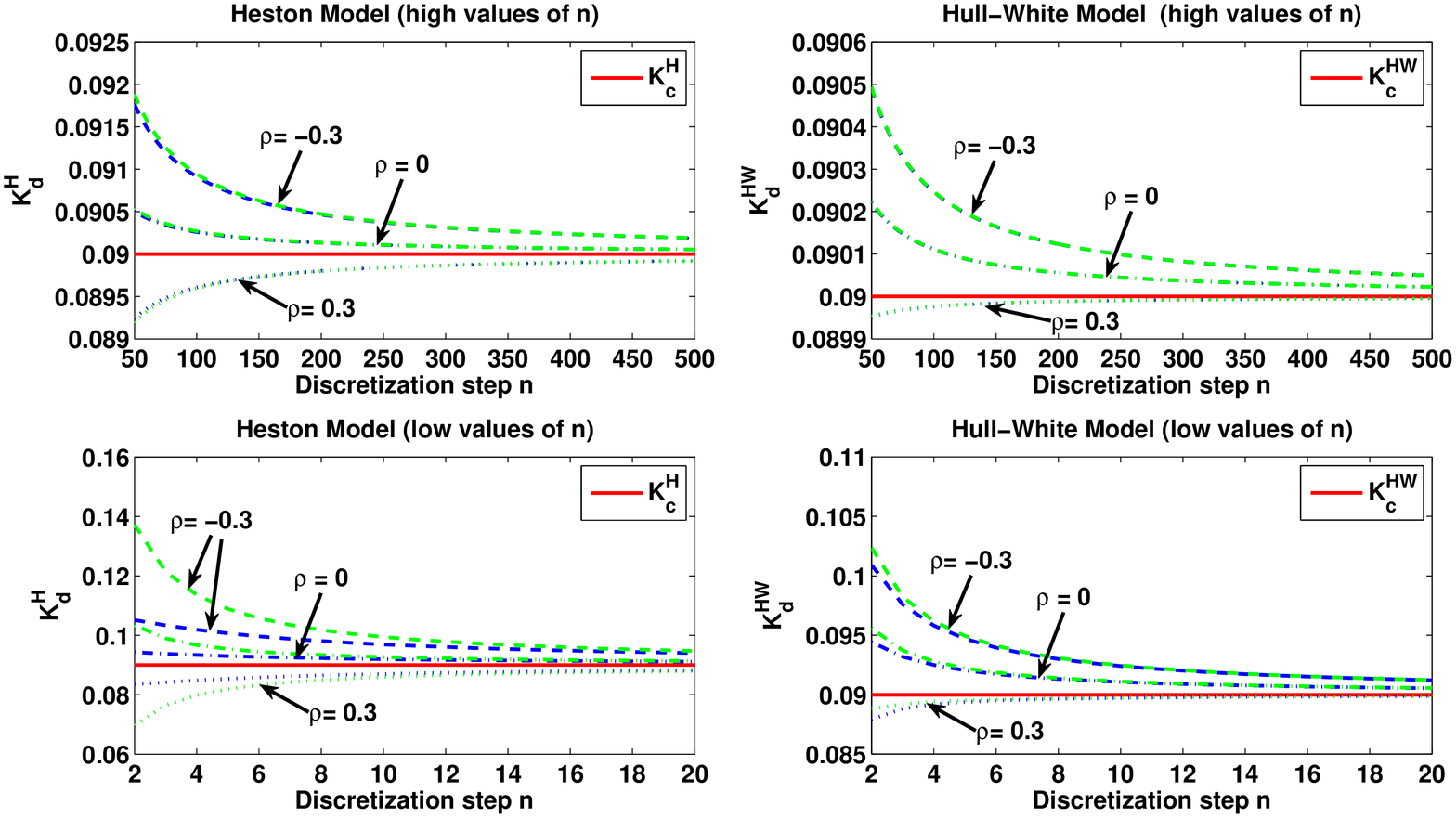}}\\
\end{tabular}
\caption{Asymptotic expansion $K_c^M+a/n$ with respect to the number of sampling periods $n$ and to $\rho$ \label{F2}}\vspace{2mm}
\rule{0pt}{3pt}
\parbox{5in}
{\footnotesize Parameters correspond to Set 2 in Table \ref{t1} except for $\rho$ that can take three possible values $\rho=-0.3$, $\rho=0$ or $\rho=0.3$. The upper graphs correspond to large number of discretization steps whereas lower graphs have relatively small values of $n$.
}
\end{figure}

\newpage
\begin{figure}[!htbp]
\hspace{-3cm}\begin{tabular}{cc}
\multicolumn{2}{c}{\includegraphics*[height=8cm,width=18cm]{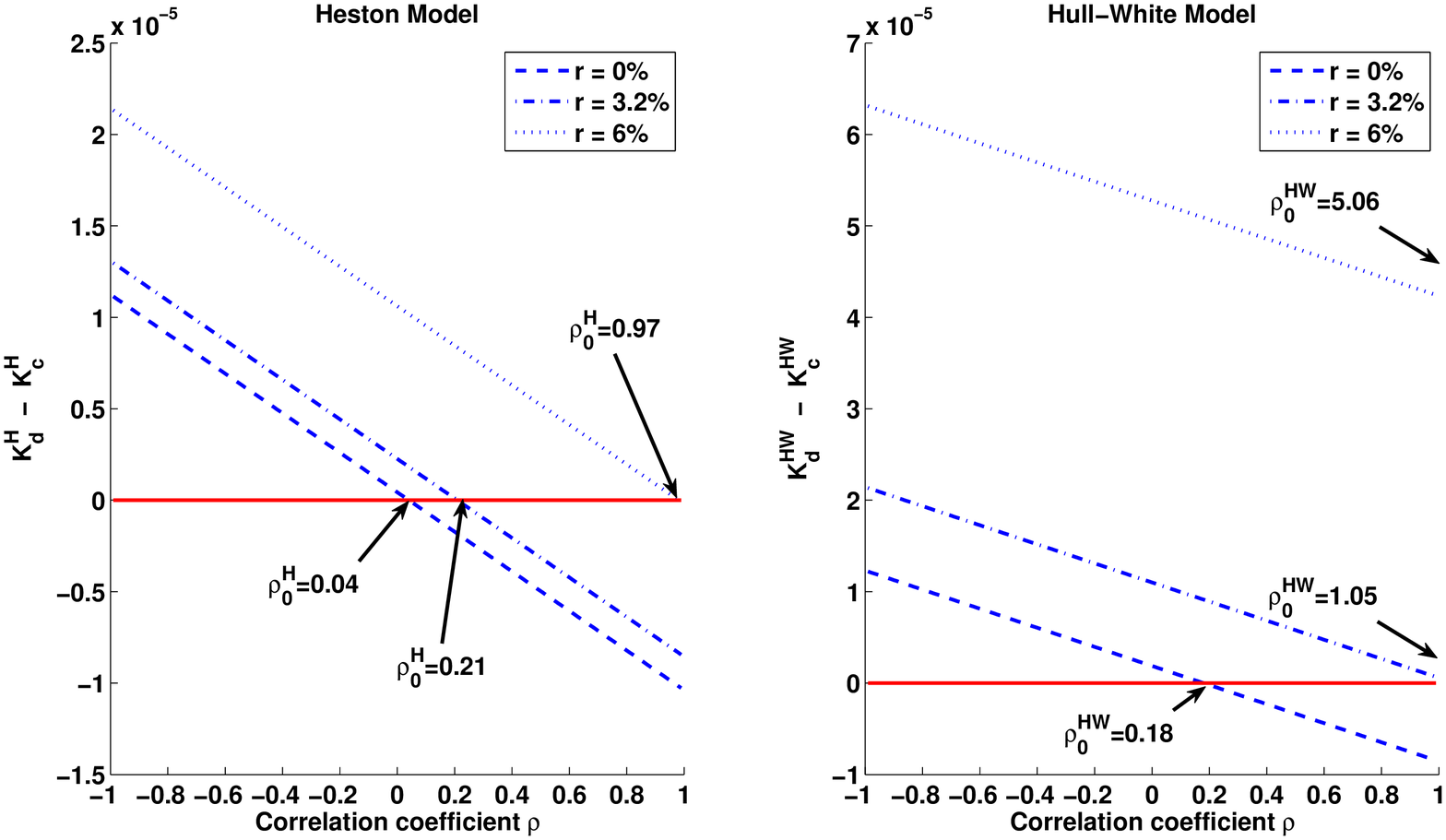}}\\
\end{tabular}
\caption{Asymptotic expansion with respect to the correlation coefficient $\rho$ and the risk-free rate $r$ \label{F3}}\vspace{2mm}
\rule{0pt}{3pt}
\parbox{5in}
{\footnotesize Parameters correspond to Set 1 in Table \ref{t1} except for $r$ that can take three possible values $r=0\%$, $r=3.2\%$ or $r=6\%$. Here $n=250$, which corresponds to a daily monitoring as $T=1$.
}
\end{figure}

\begin{figure}[!htbp]
\hspace{-3cm}\begin{tabular}{cc}
\multicolumn{2}{c}{\includegraphics*[height=8cm,width=18cm]{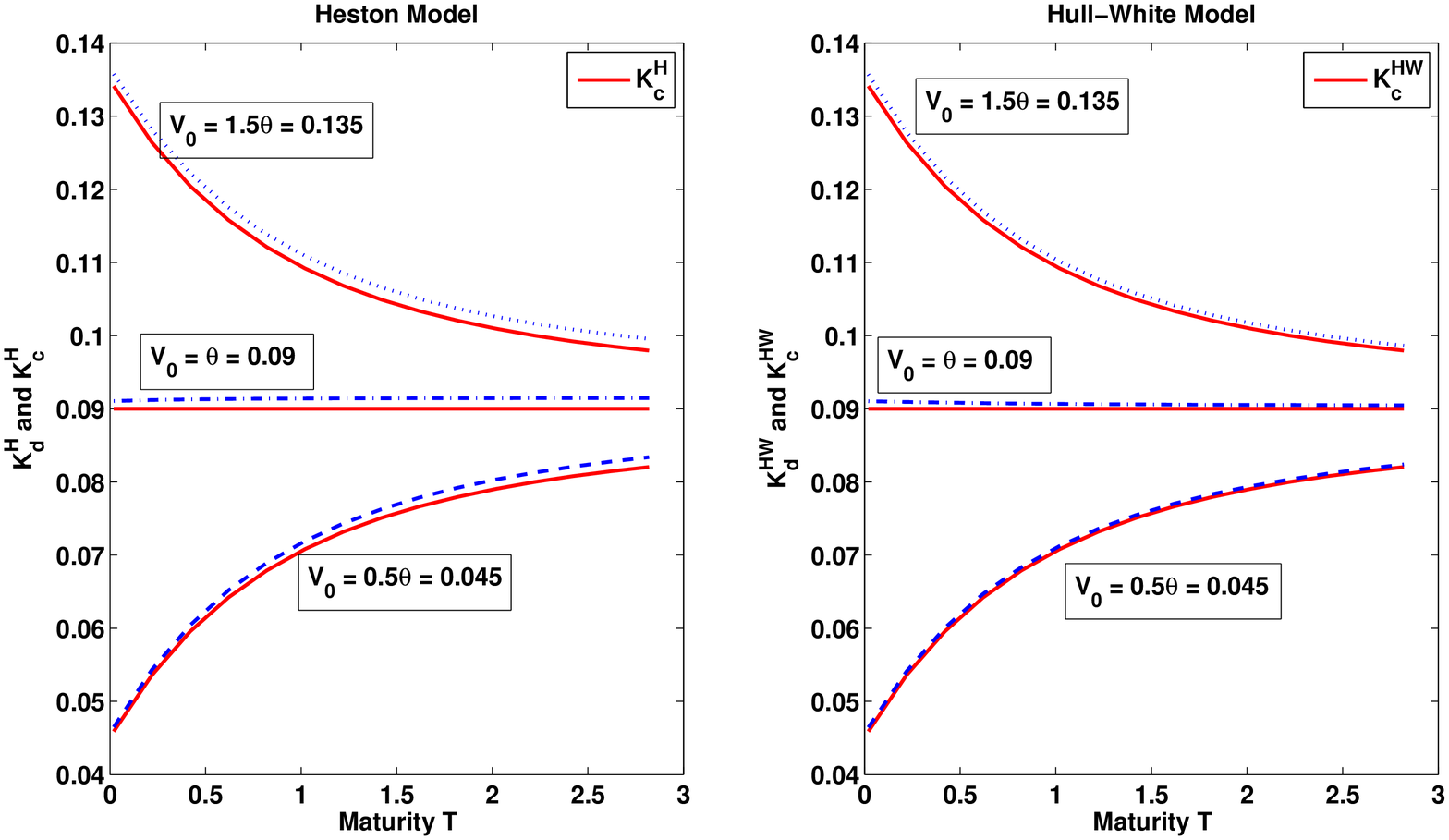}}\\
\end{tabular}
\caption{Discrete and continuous fair strikes with respect to the maturity $T$ and to $V_0$ \label{F4}}\vspace{2mm}
\rule{0pt}{3pt}
\parbox{5in}
{\footnotesize Parameters correspond to Set 2 in Table \ref{t1} except for $T$ and $V_0$. Also we choose a monthly monitoring to compute the discrete fair strike. When $\theta=V_0$, $K_c^H$ is independent of the maturity $T$.
}
\end{figure}

\newpage

\begin{figure}[!htbp]
\hspace{-3cm}\begin{tabular}{cc}
\multicolumn{2}{c}{\includegraphics*[height=8cm,width=18cm]{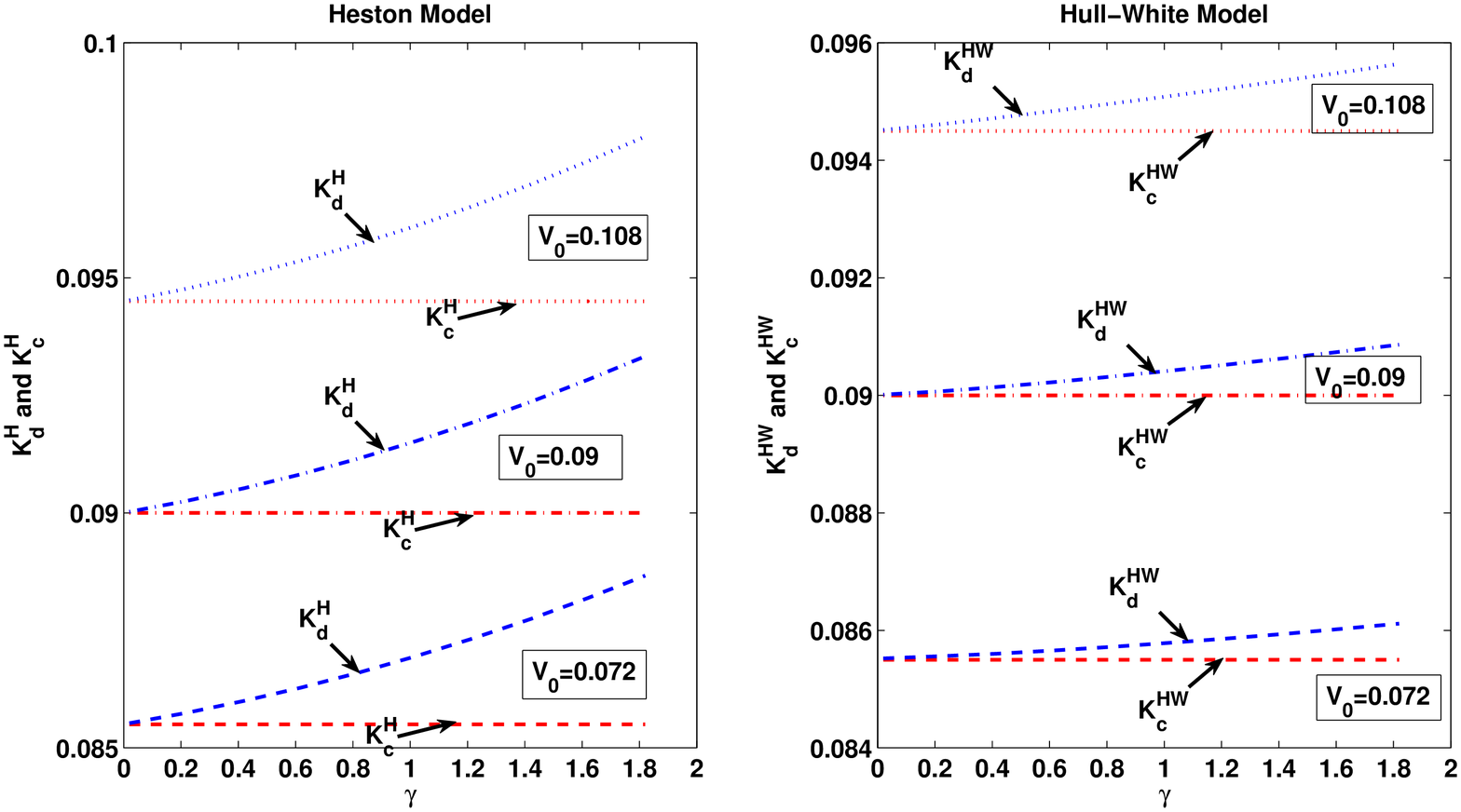}}\\
\end{tabular}
\caption{Discrete and continuous fair strikes with respect to the parameter $\gamma$ and to $V_0$ \label{F5}}\vspace{2mm}
\rule{0pt}{3pt}
\parbox{5in}
{\footnotesize Parameters correspond to Set 2 in Table \ref{t1} except for $\gamma$ and $V_0$ that are indicated on the graphs. A monthly monitoring is used to compute the discrete fair strike. The continuous fair strike $K_c^H$ is independent of $\gamma$, so that it is easy to identify the different curves on the graph.
}
\end{figure}

\newpage

\begin{figure}[!htbp]
\hspace{-3cm}\begin{tabular}{c}
{\includegraphics*[height=8cm,width=18cm]{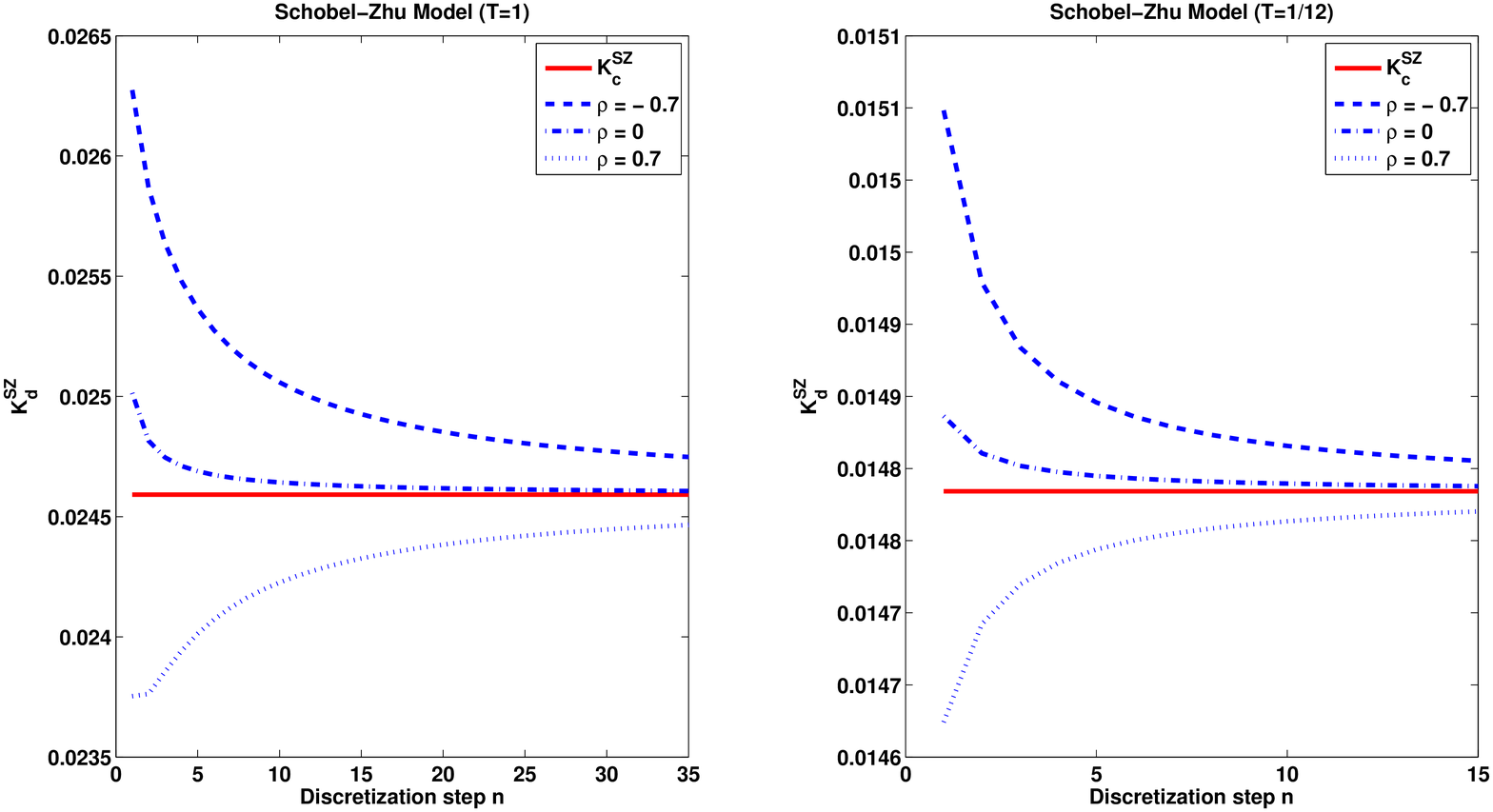}}\\
\end{tabular}
\caption{Sensitivity to the number of sampling periods $n$ and to $\rho$ \label{F6}}\vspace{2mm}
\rule{0pt}{3pt}
\parbox{5in}
{\footnotesize Parameters are similar to Set 1 in Table \ref{t1} for the Heston model except for $\rho$ that can take three possible values $\rho=-0.7$, $\rho=0$ or $\rho=0.7$ and for $T$ that is equal to $T=1$ for the left panel and  $T=1/12$ for the right panel. Precisely, we use the following parameters for the Sch\"obel-Zhu model.
$\kappa=6.21$, $\theta=\sqrt{0.019}$, $\gamma=0.31$, $r=0.0319$, $V_0=\sqrt{0.010201}$.
}
\end{figure}

\newpage

\begin{figure}[!htbp]
\hspace{-1cm}\begin{tabular}{c}
\includegraphics*[height=9cm,width=14cm]{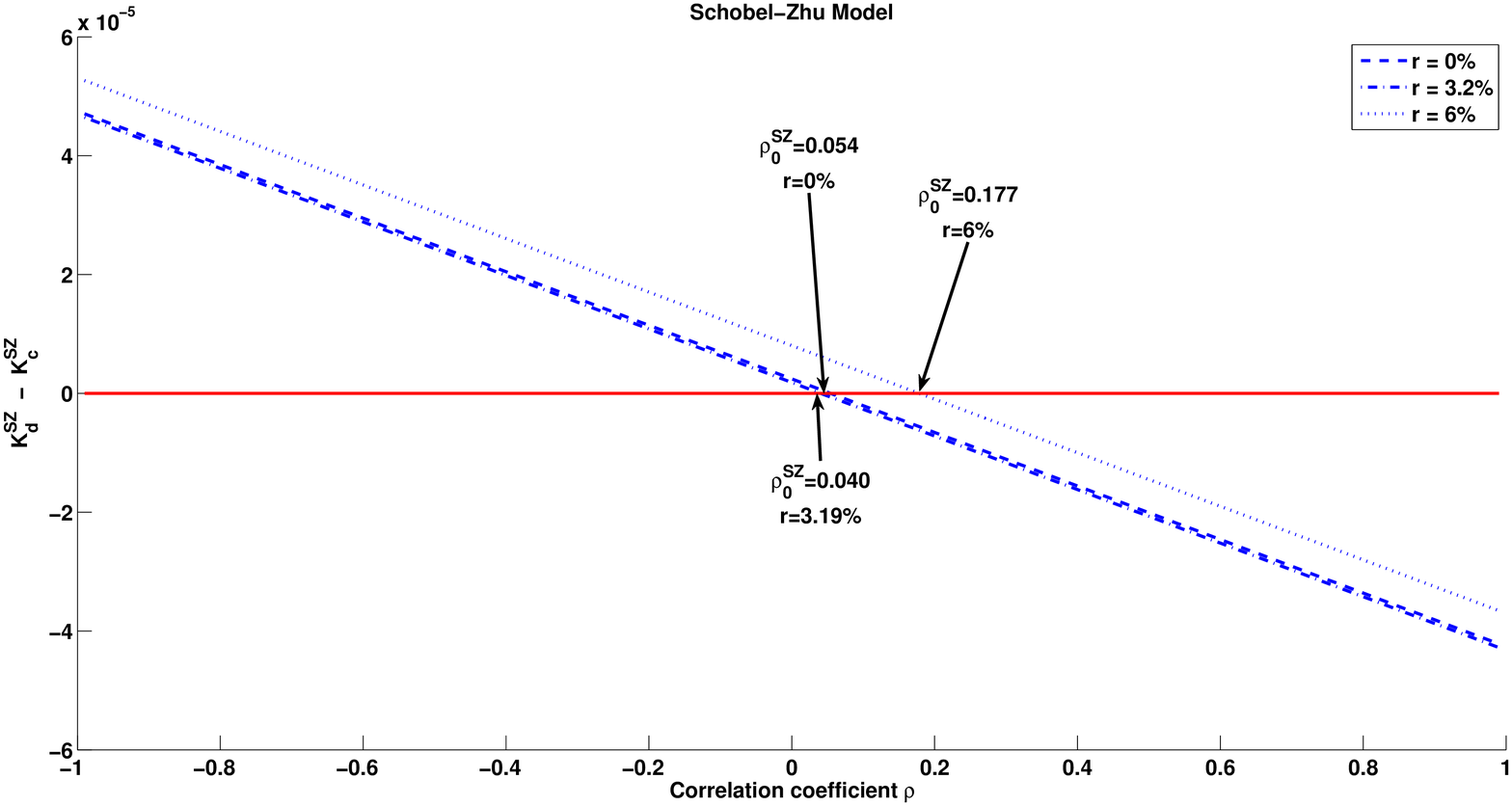}
\end{tabular}
\caption{
Asymptotic expansion with respect to the correlation coefficient $\rho$ and the risk-free rate $r$. \label{F7}}\vspace{2mm}
\rule{0pt}{3pt}\parbox{5in}{\footnotesize  
 Parameters are similar to Set 1 in Table \ref{t1} for the Heston model except for $r$ that can take three possible values $r=0\%$, $r=3.2\%$ or $r=6\%$. Precisely, we use the following parameters for the Sch\"obel-Zhu model: $\kappa=6.21$, $\theta=\sqrt{0.019}$, $\gamma=0.31$, $\rho=-0.7$, $T=1$, $V_0=\sqrt{0.010201}$. Here $n=250$, which corresponds to a daily monitoring as $T=1$.
}
\end{figure}

\end{document}